\newif\ifextendedversion
\def\domi#1{{\tt{doi:}}\href{http://dx.doi.org/#1}{\nolinkurl{#1}}}
\def\murl#1{{\tt{url:}}\href{#1}{\nolinkurl{#1}}}
\newcounter{Romancnt}
\tikzset{pstep/.style={decoration={markings, mark=at position .5 with with%
{\node [transform shape] {$\prsymbol$};}},postaction={decorate}}}
\tikzset{mstep/.style={decoration={markings, mark=at position .5 with with%
{\node [midway,solid,thin,circle,inner sep=1.5,draw] {};}},postaction={decorate}}}
\tikzset{>={To[length=.9mm,angle'=60]}}
\definecolor{darkblue}{rgb}{0,0.00,0.66}
\def\lubd#1#2#3{%
 \def\next##1##2{%
  \setbox0=\hbox{$##1\vphantom{#3}\@ifempty{#1}{}{_{\vphantom{#1}}}%
   \@ifempty{#2}{}{^{#2}}$}%
  \setbox1=\hbox{$##1\vphantom{#3}\@ifempty{#1}{}{_{#1}}%
   \@ifempty{#2}{}{^{\vphantom{#2}}}$}%
  \setbox2=\vbox{\hbox to\wd0{}\hbox to\wd1{}}%
  {\hskip\wd2\hskip-\wd0\box0\hskip-\wd1\box1{#3}}%
 }%
 \mathpalette\next{}%
}
\def\lud#1#2#3{{\lubd{#1}{#2}{#3}}}
\def\rubd#1#2#3{#3\@ifempty{#1}{}{_{#1}}\@ifempty{#2}{}{^{#2}}}
\def\rubd#1#2#3{{\rubd{#1}{#2}{#3}}}
\def\op#1#2{#1^{#2}}
\def\ip#1#2{#1_{#2}}
\def\zp#1#2{\lud{#2}{}{#1}}
\def\ep#1#2{\lud{}{#2}{#1}}
\def\ezp#1#2#3{\lud{#3}{#2}{#1}}
\newcommand{\wash}{\mathpalette\mywash}
\newcommand{\mywash}[2]{\setbox0=\hbox{$\m@th#1{#2}$}\wd0=0pt\box0}
\newcommand{\arsdev}{\mathrel{\wash{\,\,\,\circ}\mathord{\longrightarrow}}}
\newcommand{\arsdevinv}{\mathrel{\wash{\,\,\,\,\circ}\mathord{\longleftarrow}}}
\def\isdefd{=} 
\def\hastype{\mathbin{:}} 
\def\relap#1#2#3{#2\mathrel{#1}#3} 
\def\binap#1#2#3{#2\mathbin{#1}#3} 
\def\funap#1#2{#1(#2)} 
\def\pairsep{{,}}
\def\pair#1#2{(#1\pairsep#2)} 
\def\triple#1#2#3{\pair{#1}{#2\pairsep#3}} 
\def\ssetin{{\in}}  
\def\setin{\binap{\ssetin}}
\def\setstr#1{\{#1\}} 
\def\setabs#1#2{\setstr{#1\mathrel{|}#2}} 
\def\ssetle{{\subseteq}} 
\def\setle{\relap{\ssetle}}
\def\sseteq{{=}} 
\def\seteq{\relap{\sseteq}}
\def\ssetge{{\supseteq}} 
\def\setge{\relap{\ssetge}}
\def\ssetneq{{\neq}} 
\def\setneq{\relap{\ssetneq}}
\def\setemp{{\emptyset}} 
\def\ssetlub{{\cup}} 
\def\setlub{\binap{\ssetlub}}
\def\ssetLub{\bigcup} 
\def\setLub#1#2{\ssetLub_{#1}{#2}}
\def\ssetglb{{\cap}} 
\def\setglb{\binap{\ssetglb}}
\def\ssetdif{{-}} 
\def\setdif{\binap{\ssetdif}}
\def\srelref{{=}} 
\def\relref#1{\op{#1}{\srelref}}
\def\sreltra{{+}} 
\def\reltra#1{\op{#1}{\sreltra}}
\def\sreltre{{\ast}} 
\def\reltre#1{\op{#1}{\sreltre}}
\def\srelsco{{\cdot}} 
\def\relsco{\binap{\srelsco}}
\def\srelle{{\ssetle}} 
\def\relle{\relap{\srelle}}
\def\sreleq{\sseteq} 
\def\releq{\relap{\sreleq}}
\def\relnco#1#2{\op{#2}{#1}}
\def\aord{{\succ}}
\def\orda{\relap{\aord}}
\def\aordref{{\succeq}}
\def\aordmul{\ip{\aord}{\mul}}
\def\aordmulref{\ip{\aordref}{\mul}}
\def\ordmulrefa{\relap{\aordmulref}}
\def\aordhot{\hot{\aord}}
\def\ordhota{\relap{\aordhot}}
\def\aordrefhot{\aordhotref}
\def\ordhotrefa{\ordrefhota}
\def\aordrefhot{\hot{\aordref}}
\def\ordrefhota{\relap{\aordrefhot}}
\def\aorddwn{{\curlyvee}}
\def\orddwna#1{{\aorddwn#1}}
\def\ordhotrefdwna#1{{\aordhotrefdwn#1}}
\def\aordhotrefdwn{{{\shortmid}\!\aordhotdwn}}
\def\aordhotdwn{\hot{\aorddwn}}
\def\ordhotdwna#1{{\aordhotdwn#1}}
\def\natzer{{0}}
\def\natone{{1}}
\def\nattwo{{2}}
\def\natthr{{3}}
\def\natfou{{4}}
\def\anat{{n}}
\def\bnat{{m}}
\def\snatgt{{>}}
\def\natgt{\relap{\snatgt}}
\def\snateq{{=}}
\def\nateq{\relap{\snateq}}
\def\snatneq{{\neq}}
\def\natneq{\relap{\snatneq}}
\def\snatge{{\geqslant}}
\def\natge{\relap{\snatge}}
\def\snatgge{\raisebox{.06em}{$\lefteqn{\kern.06em\underline{\phantom{\kern.1em>}}}{\gg}$}}
\def\natggt{\relap{\natggt}}
\def\snatdif{{-}} 
\def\natdif{\binap{\snatdif}}
\def\snatSum{{\sum}}
\def\natSum#1#2{\ip{\snatSum}{#1}#2}
\def\aidx{{i}}
\def\bidx{{j}}
\def\cidx{{k}}
\def\idxin{1}
\def\idxout{0}
\def\aars{{\to}}
\def\arsa{\relap{\aars}}
\def\aiars{\ip{\aars}}
\def\iarsa#1{\relap{\aiars{#1}}}
\def\aarsref{\relref{\aars}}
\def\aarstra{\reltra{\aars}}
\def\aiarstra{\ip{\aarstra}}
\def\aarsinv{{\leftarrow}}
\def\arsinva{\relap{\aarsinv}}
\def\aiarsinv{\zp{\aarsinv}}
\def\aiarsinvref#1{\lub{#1}{\srelref}{\aarsinv}}
\def\iarsinva#1{\relap{\aiarsinv{#1}}}
\def\aarsnco#1{\relnco{#1}{\aars}}
\def\aiarsnco#1{\ip{\aarsnco{#1}}}
\def\aiarsinvnce{\ezp{\aarsinv}}
\def\aarsdev{{\arsdev}}
\def\arsdeva{\relap{\aarsdev}}
\def\aiarsdev{\ip{\aarsdev}}
\def\iarsdeva#1{\relap{\aiarsdev{#1}}}
\def\aarsdevinv{{\arsdevinv}}
\def\arsdevinva{\relap{\aarsdevinv}}
\def\aiarsdevinv{\zp{\aarsdevinv}}
\def\iarsdevinva#1{\relap{\aiarsdevinv{#1}}}
\def\bars{%
  \mathchoice
  {\tikz[baseline] \draw[line width=0.3pt,-{Latex[open]},yshift=0.60ex] (0,0) -- (1.1em,0);}
  {\tikz[baseline] \draw[line width=0.3pt,-{Latex[open]},yshift=0.60ex] (0,0) -- (1.1em,0);}
  {\tikz[baseline] \draw[line width=0.3pt,-{Latex[open]},yshift=0.43ex] (0,0) -- (1.1em,0);}
  {\tikz[baseline] \draw[line width=0.3pt,-{Latex[open]},yshift=0.38ex] (0,0) -- (1.1em,0);}
}
\def\arsb{\relap{\bars}}
\def\barsinv{%
  \mathchoice
  {\tikz[baseline] \draw[line width=0.3pt,{Latex[open]}-,yshift=0.60ex] (0,0) -- (1.1em,0);}
  {\tikz[baseline] \draw[line width=0.3pt,{Latex[open]}-,yshift=0.60ex] (0,0) -- (1.1em,0);}
  {\tikz[baseline] \draw[line width=0.3pt,{Latex[open]}-,yshift=0.43ex] (0,0) -- (1.1em,0);}
  {\tikz[baseline] \draw[line width=0.3pt,{Latex[open]}-,yshift=0.38ex] (0,0) -- (1.1em,0);}
}
\def\arsinvb{\relap{\barsinv}}
\def\aarstre{{\twoheadrightarrow}}
\def\arstrea{\relap{\aarstre}}
\def\aiarstre{\ip{\aarstre}}
\def\iarstrea#1{\relap{\aiarstre{#1}}}
\def\aarssym{{\leftrightarrow}}
\def\aarsequ{\reltre{\aarssym}}
\def\arsequa{\relap{\aarsequ}}
\def\aiarsequ{\ip{\aarsequ}}
\def\iarsequa#1{\relap{\aiarsequ{#1}}}
\def\barsequ{\reltre{\barssym}}
\def\barssym{{\barsinv\kern-.485em\bars}}
\def\aobj{{a}} 
\def\bobj{{b}}
\def\cobj{{c}}
\def\aiobj{\ip{\aobj}} 
\def\sobjeq{{=}} 
\def\objeq{\relap{\sobjeq}}
\def\astp{{\phi}} 
\def\bstp{{\psi}}
\def\cstp{{\omega}}
\def\aistp{\ip{\astp}}
\def\bistp{\ip{\bstp}}
\def\aStp{{\Phi}} 
\def\bStp{{\Psi}}
\def\cStp{{\Omega}}
\def\aiStp{\ip{\aStp}}
\def\biStp{\ip{\bStp}}
\def\sstpsrc{\mathsf{src}}
\def\sstptgt{\mathsf{tgt}}
\def\sstpin{{\sclsin}} 
\def\stpin{\binap{\sstpin}}
\def\sstple{\ssetle} 
\def\stple{\relap{\sstple}}
\def\sstpres{{/}} 
\def\stpres#1#2{#1\sstpres#2}
\def\asym{{f}} 
\def\bsym{{g}}
\def\csym{{h}}
\def\aRul{{R}}
\def\bRul{{S}}
\def\dRul{{C}}
\def\arul{{\varrho}}
\def\brul{{\theta}}
\def\crul{{\eta}}
\def\airul{\ip{\arul}}
\def\birul{\ip{\brul}}
\def\cirul{\ip{\crul}}
\def\rula{\funap{\arul}}
\def\rulb{\funap{\brul}}
\def\rulc{\funap{\crul}}
\def\irula#1{\funap{\airul{#1}}}
\def\irulb#1{\funap{\birul{#1}}}
\def\irulc#1{\funap{\cirul{#1}}}
\def\srulstr{{\to}}
\def\rulstr{\binap{\srulstr}}
\def\atrm{{t}}
\def\btrm{{s}}
\def\ctrm{{u}}
\def\dtrm{{w}}
\def\etrm{{p}}
\def\atrmp{\hat{\atrm}}
\def\amtrm{{M}}
\def\bmtrm{{N}}
\def\cmtrm{{L}}
\def\alhs{{\ell}}
\def\blhs{{g}}
\def\arhs{{r}}
\def\brhs{{d}}
\def\ailhs{\ip{\alhs}}
\def\airhs{\ip{\arhs}}
\def\bilhs{\ip{\blhs}}
\def\birhs{\ip{\brhs}}
\def\aitrm{\ip{\atrm}}
\def\bitrm{\ip{\btrm}}
\def\citrm{\ip{\ctrm}}
\def\aimtrm{\ip{\amtrm}}
\def\strmeq{{=}} 
\def\trmeq{\relap{\strmeq}}
\def\strmneq{{\neq}} 
\def\trmneq{\relap{\strmneq}}
\def\atrmvar{{x}}
\def\btrmvar{{y}}
\def\ctrmvar{{z}}
\def\dtrmvar{{w}}
\def\aitrmvar{\ip{\atrmvar}}
\def\bitrmvar{\ip{\btrmvar}}
\def\citrmvar{\ip{\ctrmvar}}
\def\ditrmvar{\ip{\dtrmvar}}
\def\varmvar{\uppercase}
\def\amtrmvar{\varmvar\atrmvar} 
\def\bmtrmvar{\varmvar\btrmvar}
\def\cmtrmvar{\varmvar\ctrmvar}
\def\aimtrmvar{\ip{\amtrmvar}} 
\def\bimtrmvar{\ip{\bmtrmvar}}
\def\mtrmvara{\funap{\amtrmvar}} 
\def\mtrmvarb{\funap{\bmtrmvar}}
\def\mtrmvarc{\funap{\cmtrmvar}}
\def\imtrmvarb#1{\funap{\bimtrmvar{#1}}}
\def\trmencref{\relap{\makebox[0pt]{\makebox[7.5pt][r]{\raise 0.6pt \hbox{$\cdot$}}}{\trianglerighteq}}}
\def\strmencref{{\trmencref{}{}}}
\def\strmenc{{\lefteqn{\cdot}\triangleright}} 
\def\trmenc{\relap{\strmenc}}
\def\strmsbmref{{\wash{\leq}{\raise .625pt\hbox{$\lessdot$}}}} 
\def\strmsbmlub{{\wash{\vee}\kern.185em\raisebox{.4ex}{$\cdot$}}} 
\def\strmsbmglb{{\wash{\wedge}\kern.185em\raisebox{-.4ex}{$\cdot$}}} 
\def\trmsiz#1{{\parallel}#1{\parallel}}
\def\strmgt{{>}} 
\def\spkovl{{\Cap}} 
\def\spkprt{{\Cup}} 
\def\pkovl{\binap{\spkovl}}
\def\pkprt{\binap{\spkprt}}
\def\setalg{\mathcal}
\def\algtrm#1{\llbracket#1\rrbracket}
\def\salglhs{\setalg{Lhs}} 
\def\algap#1#2{\op{#2}{#1}} 
\def\apos{p}
\def\bpos{q}
\def\aipos{\ip{\apos}} 
\def\aPos{P} 
\def\bPos{Q}
\def\aiPos{\ip{\aPos}} 
\def\spossco{{\cdot}}
\def\possco#1#2{#1\spossco#2}
\def\posemp{\varepsilon}
\def\posone{1}
\def\sposeq{{=}}
\def\poseq{\binap{\sposeq}}
\def\ordpfx#1{#1_{\mathit{o}}} 
\def\spospfx{\ordpfx{\prec}}
\def\spospfxref{\ordpfx{\preceq}}
\def\pospfx{\relap{\spospfx}}
\def\pospfxref{\relap{\spospfxref}}
\def\actx{C} 
\def\bctx{D}
\def\cctx{E}
\def\ctxap#1#2{#1[#2]} 
\def\ctxa{\ctxap{\actx}}
\def\acls{{\varsigma}} 
\def\bcls{{\zeta}}
\def\ccls{{\xi}}
\def\aicls{\ip{\acls}}
\def\bicls{\ip{\bcls}}
\def\sclsle{{\sqsubseteq}} 
\def\clsle{\relap{\sclsle}}
\def\sclsge{{\sqsupseteq}} 
\def\clsge{\relap{\sclsge}}
\def\sclsgt{{\sqsupset}} 
\def\clsgt{\relap{\sclsgt}}
\def\sclslub{{\sqcup}}
\def\clslub{\binap{\sclslub}}
\def\sclsglb{{\sqcap}}
\def\clsglb{\binap{\sclsglb}}
\def\sclsLub{{\bigsqcup}}
\def\clsLub{\ip{\sclsLub}}
\def\sclsin{{\ssetin}}
\def\clsbot{{\bot}} 
\def\clstop{{\top}} 
\def\iclstop{\ip{\clstop}}
\def\sclseq{{\strmeq}} 
\def\clseq{\relap{\sclseq}}
\def\sclsneq{{\strmneq}} 
\def\clsneq{\relap{\sclsneq}}
\def\sclsovl{{\kern-.12em\between\kern-.12em}} 
\def\clsovl{\relap{\sclsovl}}
\def\sclsovlequ{\reltre{\sclsovl}} 
\def\clsovlequ{\relap{\sclsovlequ}}
\def\clslet#1#2#3{{\mathsf{let}\,#1\isdefd#2\,\mathsf{in}\,#3}} 
\def\clsbsz#1{\llfloor#1\rrfloor}
\def\clspsz#1{\llceil#1\rrceil}
\def\Rultrs#1{\mathcal{#1}}
\def\atrs{\Rultrs{\aRul}}
\def\btrs{\Rultrs{\bRul}}
\def\dtrs{\Rultrs{\dRul}}
\def\trsemp{{\setemp}}
\def\subap#1#2{\op{#2}{#1}}
\def\substr#1#2{[#1\mathbin{{:}{=}}#2]}
\def\substra#1#2#3{\subap{\substr{#1}{#2}}{#3}}
\def\asub{{\sigma}}
\def\bsub{{\tau}}
\def\suba{\subap{\asub}}
\def\subb{\subap{\bsub}}
\def\aisub{\ip{\asub}}
\def\msubap#1#2{\subap{#1}{#2}}
\def\msubstr#1#2{\llbracket#1\mathbin{{:}{=}}#2\rrbracket}
\def\msubstra#1#2#3{\msubap{\msubstr{#1}{#2}}{#3}}
\def\subLub{\setLub}
\def\Lab#1{\mathcal{#1}}
\def\aLab{\Lab{L}}
\def\aLabhot{\hot{\aLab}}
\def\Labhota{\funap{\aLabhot}}
\def\alab{{\ell}}
\def\blab{{\kappa}}
\def\hot#1{\mathring{#1}}
\renewcommand{\AA}{\mathcal{A}}
\newcommand{\BB}{\mathcal{B}}
\newcommand{\NN}{\mathbb{N}}
\newcommand{\mul}{{\mathsf{mul}}}
\newcommand{\from}{\leftarrow}
\newcommand{\fromB}[1]{\mathrel{_{#1}{\from}}}
\newcommand{\fromBT}[2]{\mathrel{\prescript{#2}{#1}{\from}}}
\newcommand{\tto}{\twoheadrightarrow}
\newcommand{\tfrom}{\twoheadleftarrow}
\newcommand{\tfromB}[1]{\mathrel{_{#1}{\tfrom}}}
\newcommand{\tfromBT}[2]{\mathrel{\prescript{#2}{#1}{\tfrom}}}
\def\aiarsref{\ip{\aarsref}}
\def\iarsref#1{\relap{\aiarsref}}
\def\aarsinvtre{{\twoheadleftarrow}}
\def\aiarsinvtre{\zp{\aarsinvtre}}
\def\aiarsinvref#1{\lud{#1}{\srelref}{\aarsinv}}
\newcommand{\m}[1]{\mathit{#1}}
\newcommand{\dtrsd}{\dtrs_{\m{d}}}
\newcommand{\atrsp}{\hat{\atrs}}
\newcommand{\dtrsp}{\hat{\dtrs}}
\newcommand{\prsymbol}{\shortmid\!\!\!\shortmid}
\newcommand{\arspr}{\mathrel{\wash{\mbox{}\,\,\,\prsymbol}\mathord{\longrightarrow}}}
\newcommand\ISAFOR{\textsf{Isa\kern-0.2exF\kern-0.2exo\kern-0.2exR}\xspace}
\newcommand\CETA{\textsf{C\kern-0.2exe\kern-0.5exT\kern-0.5exA}\xspace}
\newcommand\SAIGAWA{\textsf{Saigawa}\xspace}
\def\clft#1#2{#1\oplus#2}
\def\rlft#1{#1{\uparrow}}
\def\dctx{F}
\def\bictx{\ip{\bctx}}
\def\src#1{\subap{\sstpsrc}{#1}}
\newcommand{\NH}[1]{#1}
\begin{document}
\title{Confluence by Critical Pair Analysis Revisited
\ifextendedversion\xspace (Extended Version)\fi%
\thanks{Supported by JSPS KAKENHI Grant Number 17K00011 and Core to Core Program.
\ifextendedversion
\NH{This paper is an extended version of \cite{Hiro:Nage:Oost:Oyam:19}.}
\fi
}}
%
%
\author{%
Nao Hirokawa\inst{1}\orcidID{0000-0002-8499-0501}
\and
Julian Nagele\inst{2}\orcidID{0000-0002-4727-4637}
\and
Vincent van Oostrom\inst{3}\orcidID{0000-0002-4818-7383}
\and
Michio Oyamaguchi\inst{4}
}
\authorrunning{N. Hirokawa et al.}
%
\institute{%
JAIST, Japan, \email{hirokawa@jaist.ac.jp}
\and
Queen Mary University of London, UK, \email{j.nagele@qmul.ac.uk}
\and
University of Innsbruck, Austria, \email{Vincent.van-Oostrom@uibk.ac.at}
\and
Nagoya University, Japan, \email{oyamaguchi@za.ztv.ne.jp}
}

\maketitle              
\begin{abstract}
We present two methods for proving confluence of left-linear term rewrite
systems.  One is \emph{hot-decreasingness}, combining the
parallel/development closedness theorems with rule labelling based on a
terminating subsystem.  The other is \emph{critical-pair-closing system},
allowing to boil down the confluence problem to confluence 
of a special subsystem whose duplicating rules are relatively terminating.

\keywords{Term rewriting \and Confluence \and Decreasing diagrams.}
\end{abstract}

\section{Introduction} 
\label{sec:intro}



We present two results for proving confluence of first-order
left-linear term rewrite systems, which extend and generalise
three classical results: Knuth and Bendix' criterion~\cite{Knut:Bend:70} 
and strong and parallel closedness due to Huet~\cite{Huet:80}.
Our idea is to reduce confluence of a term rewrite system $\atrs$
to that of a subsystem $\dtrs$ comprising rewrite rules 
needed for \emph{closing} the critical pairs of $\atrs$.
In Section~\ref{sec:hot} we introduce \emph{hot-decreasingness},
requiring that critical pairs can be closed using rules
that are either below those in the peak or in a
terminating subsystem $\dtrs$.
In Section~\ref{sec:cpcs} we introduce the notion of a 
\emph{critical-pair-closing system} and present a confluence-preservation
result based on relative termination $\dtrsd/\atrs$ of 
the duplicating part $\dtrsd$ of $\dtrs$. 
For the left-linear systems we consider, our first criterion 
generalises both Huet's parallel closedness and Knuth and Bendix' criterion,
and the second Huet's strong closedness.
In Section~\ref{sec:implementation}, we assess viability of the new techniques,
reporting on their implementation and empirical results.

Huet's parallel closedness result relies on the notion
of overlap whose geometric intuition is subtle~\cite{Baad:Nipk:98,Nage:Midd:16}, and
reasoning becomes intricate for development closedness as covered by Theorem~\ref{thm:hot}.
We factor the classical theory of overlaps
and critical pairs through the \emph{encompassment} lattice in which overlapping 
redex-patterns \emph{is} taking their join and the amount of overlap between
redex-patterns is computed \emph{via} their meet,
thus allowing to reason algebraically about overlaps.
Methodologically, our contribution here is the introduction of  
the lattice-theoretic language itself, relevant as it allows one to
reason about occurrences of patterns\footnote{%
Modelled in various ways, via e.g.:
tree homomorphisms (tree automata~\cite{Tata:07}),
term-operations (algebra), 
context-variables, 
labelling (rippling~\cite{Bund:etal:05}), to name a few.
}
and their amount of (non-)overlap, omnipresent in deduction.
Technically, whereas Huet's critical pair lemma~\cite{Huet:80} is well-suited
for proving confluence of \emph{terminating} TRSs, it is ill-suited to do so 
for \emph{orthogonal} TRSs.
Our lattice-theoretic results remedy this, allowing to decompose a reduction $R$ both
\emph{horizontally} (as $\relsco{R_1}{R_2}$) and 
\emph{vertically} (as $\substra{x}{R_2}{R_1}$), enabling both 
termination and orthogonality reasoning in confluence proofs
(Theorem~\ref{thm:hot}).

In the last decade various classical confluence results for \emph{term}
rewrite systems have been factored through the decreasing diagrams
method~\cite{Oost:94,Oost:08} for proving confluence of \emph{abstract} rewrite systems,
often leading to generalisations along the way: 
e.g.\ Felgenhauer's multistep labelling~\cite{Felg:15} generalises 
Okui's simultaneous closedness~\cite{Okui:98},
the layer framework~\cite{Felg:Midd:Zank:Oost:15} 
generalises Toyama's modularity~\cite{Toya:87}, 
critical pair systems~\cite{Hiro:Midd:11} generalise both 
orthogonality~\cite{Rose:73} and Knuth and Bendix' criterion~\cite{Knut:Bend:70},
and Jouannaud and Liu generalise, among others~\cite{Liu:16},
parallel closedness, but in a way we do not know 
how to generalise to development closedness~\cite{Oost:97}.
This paper fits into this line of research.\footnote{%
\ifextendedversion
For space reasons we have omitted the proof by decreasing diagrams of Theorem~\ref{thm:cpcs}
from the main text.  See the appendix for omitted proofs.
\else
For space reasons we have omitted the proof by decreasing diagrams of Theorem~\ref{thm:cpcs}.
\fi
}

We assume the reader is familiar
with term rewriting~\cite{Ders:Joua:90,Baad:Nipk:98,Tere:03} in general
and confluence methods~\cite{Knut:Bend:70,Huet:80,Oost:08} in particular.
Notions not explicitly defined in this paper can all be found in those
works.

\section{Preliminaries on decreasingness and encompassment} 
\label{sec:prelims}

We recall the key ingredients of 
the decreasing diagrams method for proving confluence,
see~\cite{Ohle:02,Tere:03,Oost:08,Liu:16},
and revisit the classical notion of critical pair,
recasting its traditional account~\cite{Knut:Bend:70,Huet:80,Baad:Nipk:98}
based on redexes (substitution instances of left-hand sides)
into one based on redex-patterns (left-hand sides).

\subsubsection{Decreasingness} 
\label{sec:decreasing}

Consider an ARS comprising an $I$-indexed relation
${\to} = \bigcup_{\alab \in I} {\to_\alab}$ equipped with a
well-founded strict order $\succ$.
We refer to 
$\{ \blab \in I \mid \alab \succ \blab \}$ by $\orddwna{\alab}$, and to
$\orddwna{\alab} \cup \orddwna{\blab}$ by $\orddwna{\alab,\blab}$.
For a subset $J$ of $I$ we define $\to_J$ as $\bigcup_{\alab \in J} {\to_\alab}$.
\begin{definition} \label{def:decreasing}
  A diagram for a peak $\iarsinva{\alab}{\bobj}{\iarsa{\blab}{\aobj}{\cobj}}$
  is \emph{decreasing} if its closing conversion has shape
  $\relap{\relsco{\aiarsequ{\orddwna{\alab}}
}{\relsco{\aiarsref{\blab}
}{\relsco{\aiarsequ{\orddwna{\alab,\blab}}
}{\relsco{\aiarsinvref{\alab}
        }{\aiarsequ{\orddwna{\blab}}
        }}}}}{\bobj}{\cobj}$.
  An ARS in this setting is called \emph{decreasing} if every peak
  can be completed into a decreasing diagram.
\end{definition}
One may think of decreasing diagrams as combining the diamond 
property~\cite[Theorem~1]{Newm:42} (via the steps in the closing conversion 
with labels $\alab$, $\blab$) at the basis of confluence of \emph{orthogonal}
systems~\cite{Chur:Ross:36,Rose:73}, with local 
confluence diagrams~\cite[Theorem~3]{Newm:42} (via the conversions with 
labels $\orddwna{\alab,\blab}$) at the basis of 
confluence of \emph{terminating} systems~\cite{Knut:Bend:70,Mayr:Nipk:98}.
\begin{theorem}[\textnormal{\cite{Oost:94,Oost:08}}]  \label{thm:dd:complete}
An ARS is confluent if it is decreasing.
Conversely, every \emph{countable} ARS that is confluent, is decreasing
for some set of indices $I$.
\end{theorem}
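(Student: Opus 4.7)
The plan is to handle the two directions of Theorem~\ref{thm:dd:complete} separately.

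For the soundness direction (decreasingness implies confluence), I would strengthen the target to the Church--Rosser property: every conversion $\aobj \leftrightarrow^* \bobj$ admits a common reduct $\aobj \tto \cobj \tfrom \bobj$. The proof proceeds by well-founded induction on a measure of the conversion, in essence the multiset of step-labels compared by the multiset extension of $\succ$, which is well-founded since $\succ$ is. If the conversion is already a valley, we are done; otherwise locate a peak $\bobj \leftarrow_\alab \aobj \rightarrow_\blab \cobj$, invoke the decreasingness hypothesis to obtain a closing conversion of the shape prescribed in Definition~\ref{def:decreasing}, and splice it in place of the peak.

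The main obstacle is showing that this splicing strictly decreases the measure. The closing conversion may contain a single $\blab$-step and a single $\alab$-step in its middle (the optional $\rightarrow^=_\blab$ and $\leftarrow^=_\alab$), so a bare multiset of labels need not drop when a peak is replaced. The classical remedy, originally due to van Oostrom~\cite{Oost:94}, is to refine the measure by tagging each step with its \emph{label-environment}---the multiset of labels of steps dominating it in the ambient conversion---and to eliminate peaks in an outer-to-inner manner, so that any reintroduced $\alab$- or $\blab$-step necessarily sits in a strictly smaller environment than the eliminated one. The verification is combinatorial but routine once the right measure is in place.

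For the converse (completeness for countable confluent ARSs), enumerate the objects as $\aobj_0, \aobj_1, \ldots$ and exploit confluence by constructing a \emph{cofinal reduction} $\aobj_{i_0} \rightarrow \aobj_{i_1} \rightarrow \cdots$ into which every object rewrites; such a reduction exists for any countable confluent ARS by a standard zig-zag argument combining confluence with the enumeration. Take $\succ$ to be the reverse of the natural order on $\NN$ and label each step by the position on the cofinal reduction first reached from its target. Any peak then closes by chasing both sides along the cofinal reduction, producing intermediate objects whose labels are strictly below those of the peak, which yields a decreasing diagram. The delicate ingredient here is the existence of the cofinal reduction; once that is established, the decreasing labelling drops out.
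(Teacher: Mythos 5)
Your sketch of the soundness direction is the standard argument from the literature (which the paper only cites, giving no proof of its own), and you correctly identify the one non-trivial point --- that a bare multiset of labels need not drop when the closing conversion reintroduces an $\alab$- and a $\blab$-step --- together with the usual remedy; no objection there.

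The converse is where there is a genuine gap. First, ``the reverse of the natural order on $\NN$'' is not well-founded ($0 \succ 1 \succ 2 \succ \cdots$ would be an infinite descent), so you must mean ordinary greater-than. But take $\succ$ to be $>$ and label a step $s \to t$ by $n_t := \min\{n \mid t \tto a_n\}$ for the cofinal reduction $a_0 \to a_1 \to \cdots$. These labels are \emph{non-decreasing} along any reduction: if $u \to v \to w$ then everything reachable from $w$ is reachable from $v$, so $n_v \le n_w$. Consequently, for a peak $b \from a \to c$ whose left step has label $k = n_b$, \emph{every} step $u \to v$ on a closing reduction $b \tto a_k$ has label exactly $k$: its target satisfies $v \tto a_k$, so $n_v \le k$, while $b \tto v$ gives $n_v \ge n_b = k$. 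Symmetrically, every step of $c \tto a_l$ has label exactly $l$. So the closing conversion you propose is saturated with steps carrying precisely the labels of the peak, whereas the format of Definition~\ref{def:decreasing} admits at most one such step on each side; the diagram is not decreasing, and your claim that the intermediate labels are ``strictly below those of the peak'' fails for this labelling. The paper takes a different route that circumvents exactly this: by Lemma~\ref{lem:countable:confluent:spanning} the cofinal reduction can be grown into a \emph{spanning forest} $\relle{\bars}{\aars}$, and one labels forest edges $0$ and all remaining steps $1$ with $1 \succ 0$. The reason a forest (and not merely a cofinal set) is needed is determinism: a peak of two $0$-labelled steps forces its targets to coincide, which is what rescues the case where both peak labels are minimal and $\orddwna{0} = \setemp$ leaves no room for any nonempty closing conversion. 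To repair your argument you would either have to adopt that two-label construction or replace the position-based labelling by one whose labels genuinely decrease toward the cofinal reduction.
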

For the converse part it suffices that the set of \emph{labels} $I$ is a doubleton, 
a result that can be reformulated without referring to decreasing diagrams, as follows.
\begin{lemma}[\textnormal{\cite{Endr:Klop:Over:18}}] 
  \label{lem:countable:confluent:spanning}
  A countable confluent rewrite relation has a spanning forest.
\end{lemma}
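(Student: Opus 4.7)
The plan is a greedy construction, with confluence supplying the crucial ingredient for the spanning property. Since ${\to}$ is countable, enumerate its steps as $e_0, e_1, e_2, \ldots$ and build a subrelation $F \subseteq {\to}$ in stages: start with $F := \emptyset$ and at stage $n$ add $e_n$ to $F$ provided this does not create a cycle in $F$ viewed as an undirected graph; otherwise skip $e_n$. The resulting $F$ is acyclic by construction, hence a forest of the underlying undirected graph of ${\to}$.

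The remaining task is to show $F$ is spanning, that is, that the undirected reachability in $F$ coincides with ${\leftrightarrow^*}$. Fix $a \leftrightarrow^* b$. By confluence of ${\to}$ there exists a common reduct $c$ with $a \to^* c$ and $b \to^* c$. For any edge $e$ appearing on either reduction, consider the stage $n$ at which it was processed: either $e$ was added to $F$, so its endpoints are trivially $F$-connected, or $e$ was rejected, in which case its endpoints were already $F$-connected at stage $n$, since that was the reason for rejection. Either way, the endpoints of every such $e$ lie in a common $F$-component, and chaining along both reductions yields $F$-connectedness of $a$ with $c$ and of $b$ with $c$, hence of $a$ with $b$.

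The main subtlety will be the precise definition of ``forest'' adopted. If the lemma requires only that the underlying undirected graph of $F$ be acyclic, the above is complete. If instead it requires a functional parent relation (each object having at most one direct $F$-predecessor along ${\to}$), then the greedy construction must be adapted---for instance by enumerating \emph{objects} and, on processing a new object, using confluence to attach it to the existing forest through a canonical common reduct with an already-processed object, thereby picking one root per $\leftrightarrow^*$-class and orienting its tree consistently. The undirected-greedy kernel is the cleanest part of the argument; the directed refinement is what downstream applications (such as two-label decreasingness) typically demand.
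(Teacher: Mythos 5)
There is a genuine gap, and it lies exactly where you park the ``main subtlety''. The paper's notion of forest is the \emph{deterministic} one: $F\subseteq{\to}$ must satisfy that $a\mathrel{F}b$ and $a\mathrel{F}c$ imply $b=c$, besides being acyclic and spanning ($\leftrightarrow_F^*\,=\,\leftrightarrow^*$). Your Kruskal-style greedy on the underlying undirected graph only delivers undirected acyclicity, so it does not prove the lemma as stated; moreover, for that weaker version confluence is not even needed (a rejected edge already has $F$-connected endpoints, so every original edge, hence all of $\leftrightarrow^*$, is captured without ever passing through a common reduct). Confluence is precisely what makes the \emph{directed, deterministic} version possible, and that is the version the downstream two-label decreasingness argument needs.

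Your sketched adaptation --- enumerate objects and attach each new one to the existing forest via a common reduct with an already-processed object --- does not work as stated. If the current tree has root $r$ (the unique node without an outgoing $F$-edge) and you attach a fresh $a$ via a common reduct $c$ of $a$ and $r$, you must also adjoin the reduction $r\to^* c$, re-rooting the tree; but that reduction may re-enter the tree at a non-root node $b$, which already has an outgoing $F$-path to $r$, so you either violate determinism or create a cycle $r\to^* b\to_F^* r$. The paper sidesteps this by first invoking the classical consequence of countability plus confluence: there is a \emph{cofinal} reduction $a_0\to a_1\to\cdots$ such that every object reduces to some $a_i$. After deleting repetitions this becomes a fixed, never-extended trunk; each subsequent object is attached by a repetition-free reduction to the trunk, \emph{truncated at the first moment it hits the current tree}, so only fresh nodes ever acquire an outgoing edge and determinism and acyclicity are preserved. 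The cofinal reduction is the missing key idea; without it (or an equivalent device) the greedy attachment cannot be made deterministic.
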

Here a spanning forest for $\aars$ is a relation $\relle{\bars}{\aars}$
that is \emph{spanning} ($\releq{\barsequ}{\aarsequ}$) and a \emph{forest}, i.e.\ 
deterministic ($\arsinvb{\bobj}{\arsb{\aobj}{\cobj}}$ implies $\objeq{\bobj}{\cobj}$) 
and acyclic.

\subsubsection{Critical peaks revisited} 
\label{sec:critical:peaks}

We introduce \emph{clusters} as the structures obtained \emph{after}
the matching of the left-hand side of a rule in a rewrite step, 
but \emph{before} its replacement by the right-hand side.
When proving the aforementioned results in Sections~\ref{sec:hot}
and \ref{sec:cpcs}, we use them as a tool to analyse overlaps and critical peaks.
To illustrate our notions we use the following running example.
We refer to \cite{Tere:03,Hiro:Midd:11} for the notions of multistep.
\begin{example} \label{exa:cluster:1}
  In the TRS $\atrs$ with $\rula{x} \hastype \rulstr{f(f(x))}{g(x)}$
  the term $\atrm \isdefd f(f(f(f(a))))$  allows the step
  $f(\rula{f(a)}) \hastype \arsa{\atrm}{f(g(f(a)))}$ and 
  multistep $\rula{\rula{a}} \hastype \arsdeva{\atrm}{g(g(a))}$.
\end{example}  
Here $f(\rula{f(a)})$ and $\rula{\rula{a}}$ are so-called proofterms,
\emph{terms} representing \emph{proofs} of rewritability in rewriting logic~\cite{Mese:92,Tere:03}. 
The source of a proofterm can be computed by the $2$nd-order substitution
$\sstpsrc$ of the left-hand side of the rule for the rule symbol\footnote{%
$\sstpsrc$ can be viewed as tree homomorphism~\cite{Tata:07},
or as a term algebra $\funap{\algap{\salglhs}{\arul}}{\vec{\atrm}} \isdefd \substra{\vec{\atrmvar}}{\vec{\atrm}}{\alhs}$.} 
$\subap{\sstpsrc}{f(\rula{f(a)})} \isdefd 
 \trmeq{\msubstra{\arul}{\lambda x.f(f(x))}{f(\rula{f(a)})}
      }{f(f(f(f(a))))
      }$, and, \emph{mutatis mutandis}, the same for the target via $\sstptgt$.
Proofclusters, introduced here, 
abstract from such proofterms by allowing to represent the matching and substitution phases 
of multisteps as well, by means of let-expressions.
\begin{example} \label{exa:cluster:2}
  The multistep in Example~\ref{exa:cluster:1}
  comprises three phases~\cite[Chapter~4]{Oost:94}:
  \begin{enumerate}
  \item
    $\clslet{\amtrmvar,\bmtrmvar\hspace{-0.52ex}}{\hspace{-0.52ex}\lambda x.f(f(x)),\lambda y.f(f(y))}{\mtrmvara{\mtrmvarb{a}}}$
    denotes \emph{matching} $f(f(x))$ twice;
  \item
    $\clslet{\amtrmvar,\bmtrmvar\hspace{-0.52ex}}{\hspace{-0.52ex}\lambda x.\rula{x},\lambda x.\rula{x}}{\mtrmvara{\mtrmvarb{a}}}$
    denotes \emph{replacing} by $\arul$ twice;
  \item
    $\clslet{\amtrmvar,\bmtrmvar\hspace{-0.52ex}}{\hspace{-0.52ex}\lambda x.g(x),\lambda x.g(x)}{\mtrmvara{\mtrmvarb{a}}}$
    denotes \emph{substituting} $g(x)$ twice.
  \end{enumerate}
\end{example}
To represent these we assume to have
\emph{proofterms} $\atrm, \btrm, \ctrm, \ldots$ over a signature comprising
\emph{function} symbols $\asym, \bsym, \csym, \ldots$, 
\emph{rule} symbols $\arul, \brul, \crul, \ldots$,
$2$nd-order variables $\amtrmvar, \bmtrmvar, \cmtrmvar, \ldots$,
all having natural number arities, 
and $1$st-order variables $\atrmvar, \btrmvar, \ctrmvar, \ldots$
(with arity $\natzer$).
We call proofterms without $2$nd-order variables or rule symbols, 
\emph{$1$st-order} proofterms respectively \emph{terms}, 
ranged over by $\amtrm$, $\bmtrm$, $\cmtrm$, $\ldots$.
\begin{definition}
  A \emph{proofcluster} is a let-expression 
  $\clslet{\vec{\amtrmvar}}{\vec{Q}}{\atrm}$, where 
  \begin{itemize}
  \item
    $\vec{\amtrmvar}$ is a
    vector $\aimtrmvar{\natone},\ldots,\aimtrmvar{\anat}$
    of (pairwise distinct) second-order variables;
  \item
    $\vec{Q}$ is a vector of length $\anat$
    of closed $\lambda$-terms 
    $\ip{Q}{\aidx} = \lambda \vec{\aitrmvar{\aidx}}.\bitrm{\aidx}$,
    where $\bitrm{\aidx}$ is a proofterm
    and the length of the vector $\vec{\aitrmvar{\aidx}}$ of variables is the arity of $\aimtrmvar{\aidx}$; and
  \item
    $\atrm$ is a proofterm, the \emph{body}, with its $2$nd-order variables among $\vec{\amtrmvar}$.
  \end{itemize}
  Its \emph{denotation} $\algtrm{\clslet{\vec{\amtrmvar}}{\vec{Q}}{\atrm}}$
  is $\msubstra{\vec{\amtrmvar}}{\vec{Q}}{\atrm}$.
  It is a \emph{cluster} if $\bitrm{\natone},\ldots,\bitrm{\anat},\atrm$ are terms.
\end{definition}
We let $\acls$, $\bcls$, $\ccls$, $\ldots$ range over (proof)clusters.
They denote (proof)terms.
\begin{example} \label{exa:cluster:3}
  Using $\acls$, $\bcls$, $\ccls$ for the three 
  let-expressions in Example~\ref{exa:cluster:2}, each
  is a proofcluster and $\acls$, $\ccls$ are clusters.  
  Their denotations are the
  term $\algtrm{\acls} \isdefd \trmeq{f(f(f(f(a))))}{\atrm}$,
  proofterm $\algtrm{\bcls} \isdefd \rula{\rula{a}}$, and
  term $\algtrm{\ccls} \isdefd g(g(a))$.
\end{example}
We assume the usual variable renaming conventions, both for the $2$nd-order
ones in let-binders and the $1$st-order ones in $\lambda$-abstractions.
We say a proofcluster $\acls$ is \emph{linear} if every (let or $\lambda$) binding 
binds exactly once, and~\emph{canonical}~\cite{Meti:83}
if, when a binding variable occurs to the left of another such (of the same type),
then the first bound occurrence of the former occurs before that of the latter 
in the pre-order walk of the relevant proofterm. 
\begin{example} \label{exa:cluster:4}
  Let $\bcls'$ and $\ccls'$ be the clusters
  $\clslet{\amtrmvar}{\lambda x.f(f(x))}{\mtrmvara{\mtrmvara{a}}}$ and
  $\clslet{\amtrmvar,\bmtrmvar}{\lambda yz.f(f(y)),\lambda x.f(f(x))}{\mtrmvarb{\mtrmvara{a,f(a)}}}$.
  Each of $\acls$, $\bcls'$, $\ccls'$ denotes $\atrm$ in Example~\ref{exa:cluster:1}.
  The cluster $\acls$ is linear and canonical,
  $\bcls'$ is canonical but not linear ($\amtrmvar$ occurs twice in the body), and
  $\ccls'$ is neither linear ($z$ does not occur in $f(y)$) 
  nor canonical ($\bmtrmvar$ occurs outside of $\amtrmvar$ in the body).
\end{example}
We adopt the convention that absent $\lambda$-binders are inserted linearly, canonically;
$\clslet{\amtrmvar}{f(f(x))}{\mtrmvara{\mtrmvara{a}}}$ is $\bcls'$.
Clusters witness encompassment $\strmencref$~\cite{Ders:Joua:90}.
\begin{proposition} \label{prop:encompassment}
  $\trmencref{\atrm}{\btrm}$ iff 
  $\exists \ctrm,\amtrmvar$ s.t.\ 
  $\trmeq{\algtrm{\clslet{\amtrmvar}{\btrm}{\ctrm}}}{\atrm}$
  and $\amtrmvar$ occurs once in $\ctrm$.
\end{proposition}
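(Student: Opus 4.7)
The plan is to prove the two implications by explicit construction, unfolding encompassment as $\trmencref{\atrm}{\btrm}$ iff $\atrm|_\apos = \subap{\asub}{\btrm}$ for some position $\apos$ and substitution $\asub$, and exploiting that the let-denotation $\algtrm{\clslet{\amtrmvar}{\btrm}{\ctrm}} = \msubstra{\amtrmvar}{\lambda\vec{\atrmvar}.\btrm}{\ctrm}$ behaves as a capture-avoiding $\beta$-reduction at each occurrence of $\amtrmvar$. Here, by the binder-insertion convention, $\vec{\atrmvar} = \atrmvar_1,\ldots,\atrmvar_\anat$ enumerates the free variables of $\btrm$ in the canonical order dictated by the arity $\anat$ of $\amtrmvar$.

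For the forward direction, suppose $\trmencref{\atrm}{\btrm}$ is witnessed by $\apos$ and $\asub$ with $\atrm|_\apos = \subap{\asub}{\btrm}$. Pick a fresh second-order variable $\amtrmvar$ of arity $\anat$ and define $\ctrm$ by replacing the subterm of $\atrm$ at $\apos$ by $\mtrmvara{\asub(\atrmvar_1),\ldots,\asub(\atrmvar_\anat)}$. Then $\amtrmvar$ occurs exactly once in $\ctrm$, and unfolding the let replaces that unique occurrence by $(\lambda\vec{\atrmvar}.\btrm)(\asub(\atrmvar_1),\ldots,\asub(\atrmvar_\anat)) = \subap{\asub}{\btrm}$, which restores $\atrm$.

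For the backward direction, assume $\amtrmvar$ occurs once in $\ctrm$. Decompose $\ctrm$ along that occurrence as $\ctxa{\mtrmvara{\btrm_1,\ldots,\btrm_\anat}}$ with $\actx$ an $\amtrmvar$-free context, say with hole at position $\apos$. Because the denotation $\atrm$ is a first-order term, neither $\actx$ nor the arguments $\btrm_\aidx$ may contain rule symbols or further second-order variables, since such symbols are untouched by the substitution on $\amtrmvar$ and would survive in $\atrm$; hence $\actx$ and each $\btrm_\aidx$ are term-level. Unfolding the let then yields $\atrm = \ctxa{(\lambda\vec{\atrmvar}.\btrm)(\btrm_1,\ldots,\btrm_\anat)} = \ctxa{\subap{\asub}{\btrm}}$ with $\asub$ the substitution $\atrmvar_\aidx \mapsto \btrm_\aidx$, so $\atrm|_\apos = \subap{\asub}{\btrm}$, which is the required encompassment.

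The only delicate point is bookkeeping around the binder-insertion convention: one must check that the silently inserted abstraction $\lambda\vec{\atrmvar}$ binds precisely the free variables of $\btrm$ in the canonical order matching the arity of $\amtrmvar$, so that $\msubstra{\amtrmvar}{\lambda\vec{\atrmvar}.\btrm}{\cdot}$ really performs the capture-avoiding $\beta$-step substituting $\btrm_\aidx$ for $\atrmvar_\aidx$. Beyond this, the argument is a direct translation of the definitions and requires no combinatorial insight.
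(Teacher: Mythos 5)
Your proof is correct and matches the paper's: the paper's entire argument is the single identity $\trmeq{\ctxa{\suba{\btrm}}}{\msubstra{\amtrmvar}{\lambda \vec{\atrmvar}.\btrm}{\ctxa{\mtrmvara{\vec{\suba{\atrmvar}}}}}}$ (with $\vec{\atrmvar}$ the variables of $\btrm$), which you have simply unfolded into its two directions with the explicit constructions of $\ctrm$ from $\pair{\apos}{\asub}$ and vice versa. The extra care you take about the binder-insertion convention and about rule symbols not surviving into the first-order denotation is sound bookkeeping that the paper leaves implicit.
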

We define the \emph{size} $\trmsiz{\atrm}$ of a proofterm $\atrm$
in a way that is compatible with encompassment.
Formally, $\trmsiz{\atrm}$ is the pair comprising the number 
of non-$1$st-order-variable symbols in $\atrm$, and the sum over the $1$st-order variables $\atrmvar$,
of the \emph{square} of the number of occurrences of $\atrmvar$ in $\atrm$.
Then $\natgt{\trmsiz{\atrm}}{\trmsiz{\btrm}}$ if $\trmenc{\atrm}{\btrm}$, 
where we (ab)use $\snatgt$ to denote the lexicographic 
product of the greater-than relation with itself,
e.g. $\natgt{\trmsiz{g(a,a)} \isdefd \pair{\natthr}{\natzer}
    }{\natgt{\trmsiz{g(x,x)} \isdefd \pair{\natone}{\natfou}
           }{\trmsiz{g(x,y)} \isdefd \pair{\natone}{\nattwo}
           }}$.
For a proofcluster $\acls$ given by
$\clslet{\vec{\atrmvar}}{\vec{\btrm}}{\atrm}$ 
its \emph{pattern}-size $\clspsz{\acls}$ is $\natSum{\aidx}{\trmsiz{\bitrm{\aidx}}}$
(adding component-wise, with empty sum $\pair{\natzer}{\natzer}$) 
and its \emph{body}-size $\clsbsz{\acls}$ is $\trmsiz{\atrm}$.
Encompassment $\strmencref$ is at the basis of the theory of reducibility~\cite[Section~3.4.2]{Tata:07}: 
$\atrm$ is reducible by a rule $\rulstr{\alhs}{\arhs}$ iff $\trmencref{\atrm}{\alhs}$.
For instance, 
$\clslet{\amtrmvar}{f(f(x))}{f(\mtrmvara{f(a)})}$ is a witness 
to reducibility of $\atrm$ in Example~\ref{exa:cluster:1}.
We call it, or simply $f(f(x))$, a pattern in $\atrm$.
\begin{definition} \label{def:pattern}
  Let $\acls$ be a canonical linear proofcluster $\clslet{\vec{\amtrmvar}}{\vec{\btrm}}{\atrm}$
  with term $\atrm$.
  We say $\acls$ is a \emph{multipattern} if each $\bitrm{\aidx}$ is 
  a non-variable $1$st-order term, and 
  $\acls$ is a \emph{multistep} if each $\bitrm{\aidx}$
  has shape $\rula{\vec{\atrmvar}}$, i.e.\ a rule symbol applied to 
  a sequence of pairwise distinct variables.
  If $\vec{\amtrmvar}$ has length $\natone$ we drop the prefix `multi'.
\end{definition}
We use $\aStp,\bStp,\cStp,\ldots$ to range over multisteps, and $\astp,\bstp,\cstp,\ldots$
to range over steps. Taking their denotation yields the usual multistep~\cite{Tere:03,Hiro:Midd:11} 
and step ARSs $\aarsdev$ and~$\aars$ underlying a TRS $\atrs$.
These can be alternatively obtained by first applying $\sstpsrc$ and $\sstptgt$
(of which only the former is guaranteed to yield a multipattern, by left-linearity)
and then taking denotations:
$\trmeq{\algtrm{\subap{\sstpsrc}{\aStp}}
      }{\subap{\sstpsrc}{\algtrm{\aStp}}
      }$ and
$\trmeq{\algtrm{\subap{\sstptgt}{\aStp}}
      }{\subap{\sstptgt}{\algtrm{\aStp}}
      }$.
Pattern- and body-sizes of multipatterns are compositional.
\begin{proposition} \label{prop:multipattern:size}
  For multipatterns $\acls$,$\vec{\acls}$ if
  $\trmeq{\acls}{\substra{\vec{\atrmvar}}{\vec{\acls}}{\aicls{\natzer}}}$
  with each variable among $\vec{\atrmvar}$ occurring once in the body of $\aicls{\natzer}$, then
  $\nateq{\clspsz{\acls}}{\natSum{\aidx}{\aicls{\aidx}}}$, and
  $\natge{\clsbsz{\acls}}{\clsbsz{\aicls{\aidx}}}$ for all $\aidx$,
  with strict inequality holding in case the substitution is not a bijective renaming.
  Here multipattern-substitution substitutes in the body and combines let-bindings.
\end{proposition}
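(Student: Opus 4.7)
The plan is to unfold the definition of multipattern substitution and then verify both parts componentwise, exploiting the particular weighting used in $\trmsiz$. Write $\aicls{j} \isdefd \clslet{\vec{\amtrmvar}^{(j)}}{\vec{\btrm}^{(j)}}{\atrm^{(j)}}$ for $j=0,\ldots,n$; by the renaming convention on let-binders we may assume the families $\vec{\amtrmvar}^{(j)}$ are pairwise disjoint. By the definition "substitute in the body, combine let-bindings", $\acls$ carries the concatenated let-bindings and has body $\atrm^{(0)}[\vec{\atrmvar}\mathbin{{:}{=}}\vec{\atrm}]$ where $\vec{\atrm} = (\atrm^{(1)},\ldots,\atrm^{(n)})$; this substitution is safe because each $\atrmvar \in \vec{\atrmvar}$ occurs exactly once in $\atrm^{(0)}$.

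The pattern-size identity then drops out directly from the definition: $\clspsz{\acls}=\sum_{j,k}\trmsiz{\btrm_k^{(j)}}=\sum_j \clspsz{\aicls{j}}$, the first equality being by definition of $\clspsz{\,\cdot\,}$ applied to the concatenated let-bindings, the second by regrouping. For the body-size inequality, write $|\cdot|_1$ and $|\cdot|_2$ for the two components of $\trmsiz$ (non-$1$st-order-variable symbol count and sum of squared occurrences of $1$st-order variables). Substituting $1$st-order variables only removes those variables and appends the contributions of each $\atrm^{(j)}$, giving $|\acls.\text{body}|_1 = |\atrm^{(0)}|_1 + \sum_{j\geq 1} |\atrm^{(j)}|_1$; hence $|\acls.\text{body}|_1 \geq |\atrm^{(i)}|_1$ for every $i$, and the lexicographic comparison yields $\clsbsz{\acls}\geq \clsbsz{\aicls{i}}$, falling back to $|\cdot|_2$ only when the first components coincide.

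For the strict-inequality clause, note that $[\vec{\atrmvar}\mathbin{{:}{=}}\vec{\atrm}]$ is a bijective renaming iff every $\atrm^{(j)}$ is a $1$st-order variable, these variables are pairwise distinct, and none of them already occurs in $\atrm^{(0)}$. If this fails, either (a) some $\atrm^{(j)}$ contributes a non-variable symbol, making $|\acls.\text{body}|_1$ strictly exceed $|\atrm^{(0)}|_1$ (and equal $|\atrm^{(i)}|_1$ only in degenerate configurations), or (b) all $\atrm^{(j)}$ are variables but at least two coalesce onto the same variable $y$: if $y$ already occurs $m\geq 0$ times in $\atrm^{(0)}$ outside $\vec{\atrmvar}$ and receives $k$ new copies via substitutions $\atrmvar_j\mapsto y$, then its contribution to $|\cdot|_2$ jumps from $m^2+k$ (the $m^2$ for $y$ plus $1$ for each coalescing $\atrmvar_j$) to $(m+k)^2$, a strict gain of $2mk+k^2-k$ whenever $k\geq 1$ and $(m,k)\neq (0,1)$, which is precisely the failure of bijectivity. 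I expect the main obstacle to be keeping the case analysis clean across scenarios (non-variable replacement, identification of distinct $\atrmvar_j$'s, and capture of an existing variable in $\atrm^{(0)}$), together with handling the degenerate configuration $\atrm^{(0)}=\atrmvar_1$ with $n=1$ where $\acls$ collapses to $\aicls{1}$; this boundary case is exactly what the conditional phrasing of the strict inequality is designed to accommodate.
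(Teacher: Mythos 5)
The paper states Proposition~\ref{prop:multipattern:size} without proof, so there is no official argument to compare against; your direct computation is precisely the routine verification the authors intend. Unfolding ``substitute in the body, combine let-bindings'' gives the pattern-size identity by regrouping the concatenated bindings, and tracking the two components of $\trmsiz{\cdot}$ additively through the substitution (which is safe because each $\atrmvar_j$ occurs exactly once in the body of $\aicls{\natzero}$, so $|\cdot|_1$ of the new body is $|\atrm^{(0)}|_1+\sum_{j\geq 1}|\atrm^{(j)}|_1$) gives the weak inequality. Your $m^2+k$ versus $(m+k)^2$ bookkeeping for coalescing variables is correct and correctly identifies $(m,k)=(0,1)$ for every substituted variable, i.e.\ a capture-free injective renaming, as the only equality case of the second component.

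The one place you should not let stand is the closing sentence. The degenerate configuration you isolate --- the body of $\aicls{\natzero}$ being the bare variable $\atrmvar_1$ with $n=1$ and $\aicls{\natone}$ non-variable --- is \emph{not} accommodated by the conditional phrasing: there $\acls$ coincides with $\aicls{\natone}$, so $\clsbsz{\acls}=\clsbsz{\aicls{\natone}}$, yet the substitution is certainly not a bijective renaming, so the strictness clause as literally stated fails for $\aidx=\natone$. This is an imprecision in the proposition rather than a flaw in your computation, but your proof cannot close it by appeal to the side condition; you must either add the hypothesis that the body of $\aicls{\natzero}$ is not a single substituted variable (which is how the proposition is actually invoked in the proof of Lemma~\ref{lem:vertical}, where the split into $\aimtrm{\idxout}$ and $\aimtrm{\idxin}$ is explicitly chosen so that both contain a non-$1$st-order-variable symbol), or weaken the strictness claim to exclude that index. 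With that caveat made explicit, the rest of your case analysis (non-variable substitute forces a strict gain in $|\cdot|_1$ for $\aidx=\natzero$ and for every $\aidx$ once $\atrm^{(0)}$ contains a symbol; all-variable substitutes reduce to the $|\cdot|_2$ computation) is complete and correct.
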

Multipatterns are ordered by refinement $\sclsle$.
\begin{definition} \label{def:refinement}
  Let $\acls$ and $\bcls$ be multipatterns
  $\clslet{\vec{\amtrmvar}}{\vec{\btrm}}{\atrm}$ and 
  $\clslet{\vec{\bmtrmvar}}{\vec{\ctrm}}{\dtrm}$.
  We say $\acls$ \emph{refines} $\bcls$
  and write $\clsle{\acls}{\bcls}$,
  if there is a $2$nd-order substitution $\asub$ on $\vec{\bmtrmvar}$ 
  with $\trmeq{\suba{\dtrm}}{\atrm}$ and
  $\trmeq{\algtrm{\clslet{\vec{\amtrmvar}
                        }{\vec{\btrm}
                        }{\suba{\imtrmvarb{\aidx}{\vec{\bitrmvar{\aidx}}}}}
                        }
        }{\citrm{\aidx}
        }$
  for all $\aidx$,
  with $\vec{\bitrmvar{\aidx}}$ the variables of $\citrm{\aidx}$.
\end{definition}
\begin{example} \label{exa:cluster:6}
  We have $\clsle{\acls}{\acls'}$ with $\acls'$ being $\clslet{\cmtrmvar}{f(f(f(f(z))))}{\mtrmvarc{a}}$,
  and $\acls$ as in Example~\ref{exa:cluster:3},
  as witnessed by the $2$nd-order substitution mapping $\cmtrmvar$ to $\lambda \atrmvar.\mtrmvara{\mtrmvarb{\atrmvar}}$.
\end{example}
\begin{lemma} \label{lem:lattice}
  $\sclsle$ is a finite distributive lattice~\cite{Dave:Prie:90} on
  multipatterns denoting a $1$st-order term $\atrm$, with least element $\clsbot$ the empty let-expression
  $\clslet{}{}{\atrm}$, and greatest element $\clstop$ of shape
  $\clslet{\amtrmvar}{\atrm'}{\mtrmvara{\vec{\atrmvar}}}$ with
  $\vec{\atrmvar}$ the vector of variables in $\atrm$.
\end{lemma}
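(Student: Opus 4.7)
The plan is to set up a bijection between multipatterns of $\atrm$ in canonical form and pairs $(B,\Pi)$, where $B$ is the subset of non-variable positions $V$ of $\atrm$ occupied by body function symbols, and $\Pi$ is a partition of $V\setminus B$ into blocks each forming a connected subtree in the subgraph induced from $\atrm$'s position tree. The body $\atrm'$ and each pattern $\bitrm{\aidx}$ are then reconstructible uniquely up to metavariable renaming from such a pair: each block of $\Pi$ becomes one pattern by attaching fresh variables at its outer boundary, and the body is the canonical skeleton gluing the patterns together through metavariable-applications. Under the dictionary, $\acls\sqsubseteq\bcls$ translates to $B_{\acls}\supseteq B_{\bcls}$ together with the statement that $\Pi_{\acls}$ refines the partition $\Pi_{\bcls}$ restricted by intersection to $V\setminus B_{\acls}$. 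This reduces the whole lemma to a combinatorial statement about a finite lattice of such pairs.

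I would first verify the partial-order properties: reflexivity via the identity $2$nd-order substitution, transitivity via composition, and antisymmetry via Proposition~\ref{prop:multipattern:size}, since $\acls\sqsubseteq\bcls\sqsubseteq\acls$ forces equal pattern-sizes, hence bijective renaming substitutions, hence canonical equality. For the extremes, $\asub(\aimtrmvar{\aidx})=\lambda\vec{\aitrmvar{\aidx}}.\bitrm{\aidx}$ inlines every pattern of $\acls$ into its body to reproduce $\atrm$, witnessing $\clsbot\sqsubseteq\acls$; dually $\asub(\amtrmvar)=\lambda\vec{\atrmvar}.\atrm'$, with $\atrm'$ the body of $\acls$, witnesses $\acls\sqsubseteq\clstop$. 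Meets and joins are then constructed through the dictionary: $B_{\acls\sqcap\bcls}=B_{\acls}\cup B_{\bcls}$ and $\Pi_{\acls\sqcap\bcls}$ is the common refinement of $\Pi_{\acls}$ and $\Pi_{\bcls}$ restricted to $V\setminus(B_{\acls}\cup B_{\bcls})$, further subdivided into connected components as needed; dually $B_{\acls\sqcup\bcls}=B_{\acls}\cap B_{\bcls}$ and $\Pi_{\acls\sqcup\bcls}$ is the finest common coarsening consisting of connected subtrees.

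Distributivity decomposes into the $B$-component (Boolean algebra on $V$) and the $\Pi$-component. The key observation for the latter is that on a finite tree, subtree partitions are in bijection with subsets of edges (the \emph{cut edges}), forming a Boolean lattice, and the constructed meet and join translate under this bijection into union and intersection of cut-edge sets. Gluing the two components and verifying that this yields $\acls\sqcup(\bcls\sqcap\ccls)=(\acls\sqcup\bcls)\sqcap(\acls\sqcup\ccls)$ reduces edge-by-edge and position-by-position to the distributive laws for $\cup,\cap$ on finite sets. Finiteness is immediate since $V$ is finite.

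The main obstacle is rigorously pinning down the combinatorial dictionary, especially checking well-formedness of the proposed meet and join: enlarging the body subset may disconnect a previously connected pattern block in the induced subgraph, so $\Pi_{\acls\sqcap\bcls}$ must further subdivide such blocks into their connected components within $V\setminus(B_{\acls}\cup B_{\bcls})$. Once this bookkeeping is carried out, the conclusion that $\sclsle$ is a finite distributive lattice with least element $\clsbot$ and greatest element $\clstop$ follows directly from the Boolean and tree-partition facts.
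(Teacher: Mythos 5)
Your proposal is correct and follows essentially the same route as the paper: both pass from let-expressions to a positional representation of multipatterns on the tree of non-variable positions of $\atrm$, read off the refinement order there, and compute $\sclsglb$ and $\sclslub$ positionally --- your common refinement of blocks is the paper's $\setabs{\setglb{\aPos}{\bPos}}{\clsovl{\aPos}{\bPos}}$, and your finest common coarsening into subtrees is the paper's overlap-connected components. Where you genuinely diverge is in how distributivity is discharged: the paper leaves it at ``set-theoretic reasoning'' on the set-of-position-sets representation, whereas you push the encoding one step further, to pairs of a body-position set and a cut-edge set, so that meet and join become componentwise Boolean operations and distributivity is inherited from a sublattice of a powerset lattice. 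This buys a mechanical, checkable distributivity proof at the cost of the bookkeeping you describe; the paper's version is shorter but leaves the crucial verification implicit. Two remarks. First, the complication you flag for the meet --- having to re-split blocks of the common refinement into connected components --- does not actually arise: the intersection of two convex position sets (subtrees) in a term tree is again convex, which is precisely the fact the paper isolates as the crux of the argument (and which fails for, say, connected subgraphs of general graphs), so the blocks $\setglb{\aPos}{\bPos}$ are already connected; this same fact is what makes your cut-edge dictionary translate $\sclsglb$ and $\sclslub$ into intersection and union of edge sets. Second, antisymmetry is best obtained directly from the bijection with $(B,\Pi)$-pairs, as you also note; the detour through Proposition~\ref{prop:multipattern:size} is unnecessary once the dictionary is established, and the paper likewise avoids any direct argument on let-expressions.
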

\begin{proof}[Idea]
  Although showing that $\sclsle$ is reflexive and transitive is easy,
  showing anti-symmetry or existence of/constructions 
  for meets $\sclsglb$ and joins $\sclslub$, directly is not. 
  Instead, it \emph{is} easy to see that each multipattern 
  $\clslet{\vec{\amtrmvar}}{\vec{\btrm}}{\atrm}$ is determined by the set
  of the (non-empty, convex,\footnote{%
Here convex means that for each pair of positions $\apos$,$\bpos$ in
the set, all positions on the shortest path from $\apos$ to $\bpos$
in the term tree are also in the set, cf.~\cite[Definition~8.6.21]{Tere:03}.
}
   pairwise disjoint) sets of node 
  positions of its patterns $\bitrm{\aidx}$ in $\atrm$, and vice versa.
  For instance, the multipatterns 
  $\acls$ and $\acls'$ in Example~\ref{exa:cluster:6} are determined by 
  $\setstr{\setstr{\posemp,\posone},
           \setstr{\possco{\posone}{\posone},\possco{\posone}{\possco{\posone}{\posone}}}}$ and
  $\setstr{\setstr{\posemp,\posone,\possco{\posone}{\posone},
                   \possco{\posone}{\possco{\posone}{\posone}}
                  }}$. 
  Viewing multipatterns as sets in that way
  $\clsle{\acls}{\bcls}$ iff
  $\forall\setin{\aPos}{\acls}$, $\exists\setin{\bPos}{\bcls}$ with $\setle{\aPos}{\bPos}$.
  Saying $\setin{\aPos,\bPos}{\setlub{\acls}{\bcls}}$ 
  have \emph{overlap} if $\setneq{\setglb{\aPos}{\bPos}}{\setemp}$, 
  denoted by $\clsovl{\aPos}{\bPos}$, characterising meets and joins now also is easy:
  $\clsglb{\acls}{\bcls} \isdefd
   \setabs{\setglb{\aPos}{\bPos}
         }{\text{$\setin{\aPos}{\acls}$, $\setin{\bPos}{\bcls}$, and $\clsovl{\aPos}{\bPos}$}}$, and
  $\clslub{\acls}{\bcls} \isdefd 
   \setabs{\setLub{}{\aiPos{\sclsovl}}
         }{\setin{\aPos}{\setlub{\acls}{\bcls}}
         }$, where 
  $\aiPos{\sclsovl} \isdefd \setabs{\setin{\bPos}{\setlub{\acls}{\bcls}}}{\clsovlequ{\aPos}{\bPos}}$,
  i.e.\ the sets connected to $\aPos$ by successive overlaps.
  On this set-representation $\sclsle$ can be shown to be a finite distributive lattice
  by set-theoretic reasoning,
  using that the intersection of two overlapping patterns is a pattern again\footnote{%
This fails for, e.g., connected graphs; these may fall apart into non-connected ones.
}.
  For instance, $\clsbot$ \emph{is} the empty set and
  $\clstop$ \emph{is} the singleton containing the set of all non-variable positions in $\atrm$. 
  \qed
\end{proof}
The (proof of the) lemma allows to freely switch between viewing multisteps and multipatterns as 
let-expressions and as sets of sets of positions,
and to reason about (non-)overlap of multipatterns and multisteps in lattice-theoretic terms.
We show any multistep $\aStp$ can be 
decomposed \emph{horizontally} 
as $\astp$ followed by $\stpres{\aStp}{\astp}$ 
for any step $\stpin{\astp}{\aStp}$~\cite{Hiro:Midd:11,Oost:97},
and \emph{vertically} as some vector $\vec{\aStp}$
substituted in a prefix $\aiStp{\idxout}$ of $\aStp$,
and that peaks can be decomposed correspondingly.
\begin{definition} \label{def:critical}
  For a pair of multipatterns $\acls$,$\bcls$ denoting the same term
  its \emph{amount} of overlap\footnote{%
For the amount of overlap for \emph{redexes} in parallel reduction $\arspr$,
see e.g.~\cite{Huet:80,Baad:Nipk:98,Nage:Midd:16}.}
   and non-overlap is
  $\pkovl{\acls}{\bcls} \isdefd \clspsz{\clsglb{\acls}{\bcls}}$
  respectively 
  $\pkprt{\acls}{\bcls} \isdefd \clsbsz{\clslub{\acls}{\bcls}}$,
  we say $\acls$,$\bcls$ is \emph{overlapping} if 
  $\clsneq{\clsglb{\acls}{\bcls}}{\clsbot}$,
  and \emph{critically} overlapping if moreover
  $\clseq{\clslub{\acls}{\bcls}}{\clstop}$
  and $\trmeq{\algtrm{\acls}}{\algtrm{\bcls}}$ is linear.
  This extends to peaks $\iarsdevinva{\aStp}{\btrm}{\iarsdeva{\bStp}{\atrm}{\ctrm}}$ via 
  $\subap{\sstpsrc}{\aStp}$ and $\subap{\sstpsrc}{\bStp}$.
\end{definition}
Note $\acls$,$\bcls$ is overlapping iff $\natneq{\pkovl{\acls}{\bcls}}{\pair{\natzer}{\natzer}}$.
Critical peaks $\iarsinva{\astp}{\btrm}{\iarsa{\bstp}{\atrm}{\ctrm}}$ 
are classified by comparing the root-positions $\aipos{\astp}$, $\aipos{\bstp}$
of their patterns with respect to the prefix order $\spospfx$,
into being \emph{outer--inner} ($\pospfx{\aipos{\astp}}{\aipos{\bstp}}$),
\emph{inner--outer} ($\pospfx{\aipos{\bstp}}{\aipos{\astp}}$), or 
\emph{overlay} ($\poseq{\aipos{\bstp}}{\aipos{\astp}}$),
and induce the usual~\cite{Knut:Bend:70,Huet:80,Ders:Joua:90,Baad:Nipk:98,Ohle:02,Tere:03}
notion of critical \emph{pair} $\pair{\btrm}{\ctrm}$.\footnote{%
We exclude neither overlays of a rule with itself
nor pairs obtained by symmetry.}
\begin{definition} \label{def:inner}
  A pair $\pair{\acls'}{\bcls'}$ of overlapping patterns such that $\acls'$, $\bcls'$ are
  in the multipatterns $\acls$, $\bcls$ with $\clseq{\clstop}{\clslub{\acls}{\bcls}}$, 
  is called \emph{inner}, if it is minimal among all such pairs, 
  comparing them in the 
  lexicographic product of $\spospfx$ with itself,
  via the root-positions of their patterns, ordering these themselves first by $\spospfxref$. 
  This extends to pairs of steps in peaks of multisteps via $\sstpsrc$.
\end{definition}
\begin{proposition} \label{prop:inner:overlay}
  If $\pair{\astp}{\bstp}$ is an inner pair for a critical peak 
  $\relsco{\aiarsdevinv{\aStp}}{\aiarsdev{\bStp}}$,
  and $\stpin{\astp}{\aStp}$, $\stpin{\bstp}{\bStp}$ contract
  redexes at the same position, 
  then $\clseq{\astp}{\aStp}$ and $\clseq{\bstp}{\bStp}$.     
\end{proposition}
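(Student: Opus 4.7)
The plan is to argue by contradiction. Suppose $(\astp, \bstp)$ is an inner pair of the critical peak with $\aipos{\astp} = \aipos{\bstp} = \pi$, yet $\astp \neq \aStp$; the case $\bstp \neq \bStp$ is symmetric. Then $\aStp$ contains a further pattern $\astp' \neq \astp$ at some root position $\pi' \neq \pi$ whose underlying position set, by pairwise disjointness of the patterns inside a multistep, is disjoint from that of $\astp$.

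First I would unpack the critical-overlap assumption via Lemma~\ref{lem:lattice}, viewing multipatterns as collections of pairwise disjoint convex sets of positions of the common source $\atrm$. The assumption $\clslub{\src{\aStp}}{\src{\bStp}} = \clstop$ then says that the patterns of $\aStp$ and $\bStp$ together form a single $\sclsovlequ$-equivalence class whose union is the full set of non-variable positions of $\atrm$. Since patterns within each of $\aStp$ and $\bStp$ are pairwise disjoint among themselves, this connectivity can only be witnessed by overlaps \emph{across} the two sides; in particular, $\astp'$ must overlap directly with some $\bstp^\dagger \in \bStp$, producing an overlapping pair $(\astp', \bstp^\dagger)$.

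The crux is then to show that $(\astp', \bstp^\dagger)$ is strictly smaller than $(\astp, \bstp)$ in the inner ordering, contradicting inner-minimality. This proceeds by case analysis on the prefix relation of $\aipos{\bstp^\dagger}$ to $\pi$ and $\pi'$: linearity of $\atrm$ (guaranteed by the critical-overlap assumption) together with convexity of patterns is used to exclude the configurations in which the new overlap could lie at a position incomparable with, or strictly above, $\pi$, because in such a configuration a variable hole on one side would be forced to coincide with a function-symbol position on the other, and two occurrences of the same variable of $\atrm$ would then have to match distinct subterms. The main obstacle is precisely this geometric case analysis; once it is carried out, comparing the ordered root-positions in the lex product of $\spospfx$ delivers the contradiction routinely, giving $\astp = \aStp$ and, by the symmetric argument, $\bstp = \bStp$.
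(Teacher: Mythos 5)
Your overall strategy---exhibit a second cross-overlapping pair and contradict the choice of $\pair{\astp}{\bstp}$ as inner---is a legitimate reformulation of the paper's direct argument, and your first step (every pattern of $\aStp$ must \emph{directly} overlap some pattern of $\bStp$ and vice versa, since the join is $\clstop$ and patterns within one multistep are pairwise disjoint) is exactly the paper's. The gap lies in the step you defer to ``geometric case analysis''. To contradict innerness you must show that the new pair is \emph{comparable} with and strictly deeper than $\pair{\astp}{\bstp}$, which amounts to showing that \emph{every} pattern of $\aStp$, $\bStp$ other than $\astp$, $\bstp$ has root position strictly below $\pi$. The mechanism you invoke for this---linearity of $\atrm$ forcing a variable hole to coincide with a function-symbol position, with one variable matching two distinct subterms---is not what is at work here: linearity of the source plays no role in this proposition (the paper's proof uses only that terms are trees and that patterns are convex), and the variables of $\atrm$ lie outside every pattern, so nothing is ever matched against them. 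What actually excludes an overlap \emph{from above} is convexity plus within-multistep disjointness: if $\cstp$ overlaps $\astp$ or $\bstp$ at some position $q$ with $\pospfxref{\aipos{\cstp}}{\pi}$, then the path from $\aipos{\cstp}$ to $q$ passes through $\pi$, so by convexity $\cstp$ contains $\pi$, hence meets whichever of $\astp$, $\bstp$ lies in $\cstp$'s own multistep---contradicting disjointness.

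More seriously, no \emph{local} analysis of the single pair $\pair{\astp'}{\bstp^\dagger}$ can exclude the case that $\aipos{\astp'}$ is \emph{incomparable} with $\pi$: two patterns overlapping each other inside a sibling subtree are perfectly consistent with convexity, disjointness and linearity, and a pair at incomparable positions would \emph{not} contradict minimality in the lexicographic order on root positions. That case is eliminated only globally: since $\clseq{\clslub{\src{\aStp}}{\src{\bStp}}}{\clstop}$, every pattern is connected to $\astp$ by a chain of successive overlaps, and an induction along such a chain---using at each link the ``no overlap from above'' fact above---shows that every pattern has root position $\pospfxinvref{}{}$-above by $\pi$, i.e.\ $\pospfxref{\pi}{\aipos{\cstp}}$ for all $\cstp$, with equality only for $\astp$ and $\bstp$. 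This propagation step is the missing idea; with it, your new pair is strictly deeper and the contradiction goes through. The paper packages the same content without contradiction: nothing overlaps $\astp$ or $\bstp$ from above (equal roots) nor strictly from below (innerness), so $\setstr{\astp,\bstp}$ is a connected component of the overlap graph, which by criticality is the whole graph, whence $\clseq{\astp}{\aStp}$ and $\clseq{\bstp}{\bStp}$.
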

For patterns and peaks of ordinary steps, their join
being top, entails they are overlapping, and 
the patterns in a join are joins of their constituent patterns.
\begin{proposition} \label{prop:critical:peak}
  Linear patterns $\acls$,$\bcls$ are critically overlapping iff 
  $\clseq{\clslub{\acls}{\bcls}}{\clstop}$.
\end{proposition}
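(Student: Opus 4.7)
The plan is to prove both directions of the biconditional. The forward direction is immediate from Definition~\ref{def:critical}: any critically overlapping pair satisfies $\clseq{\clslub{\acls}{\bcls}}{\clstop}$ by definition. The substantive content lies in the backward direction.

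For the backward direction I would pass to the set-theoretic view justified by Lemma~\ref{lem:lattice}. A pattern corresponds to a singleton family consisting of one non-empty convex set of non-variable positions of the common denoted term $\amtrm$, so write $\acls = \setstr{\aPos_{\acls}}$ and $\bcls = \setstr{\aPos_{\bcls}}$. Unfolding the explicit construction of joins from the proof of Lemma~\ref{lem:lattice}, $\clslub{\acls}{\bcls}$ is obtained by partitioning $\setlub{\acls}{\bcls}$ into overlap-equivalence classes and taking unions of the resulting classes. The assumption $\clseq{\clslub{\acls}{\bcls}}{\clstop}$ forces this family to be a singleton, so $\aPos_{\acls}$ and $\aPos_{\bcls}$ must land in the same equivalence class, i.e.\ they must share a position; hence $\clsneq{\clsglb{\acls}{\bcls}}{\clsbot}$, establishing the overlap condition.

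It remains to verify the linearity condition, that $\trmeq{\algtrm{\acls}}{\algtrm{\bcls}}$ is linear. Combined with the description of $\clstop$ in Lemma~\ref{lem:lattice}, the hypothesis tells us that $\setlub{\aPos_{\acls}}{\aPos_{\bcls}}$ exhausts the non-variable positions of $\amtrm$, so every variable position of $\amtrm$ lies at the frontier of at least one of the two patterns. The plan is then to trace any hypothetical pair of occurrences of the same variable back along the frontier correspondence: each occurrence would be witnessed by a distinct $\lambda$-bound placeholder of the pattern whose frontier it inhabits, and by linearity of the two patterns those placeholders each bind only once, so a bookkeeping step rules out any repeated variable in $\amtrm$. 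This last step, relating repetitions in $\amtrm$ to pattern linearity via the frontier correspondence, is the main technical obstacle; by contrast, the lattice-theoretic argument for overlap in the previous paragraph is a direct consequence of the join construction from Lemma~\ref{lem:lattice}.
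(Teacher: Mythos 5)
Your forward direction and your overlap argument for the backward direction are correct, and the latter is essentially the paper's own proof: the paper likewise argues via Lemma~\ref{lem:lattice} that if $\clseq{\clslub{\acls}{\bcls}}{\clstop}$ held together with $\clseq{\clsglb{\acls}{\bcls}}{\clsbot}$, the two disjoint position sets could not merge into the single class that $\clstop$ consists of, so one of them would have to be empty, which is impossible for patterns. Your phrasing via the overlap-equivalence classes of the join construction is the same argument in only slightly different clothing.

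The genuine gap is exactly where you locate it, and the ``bookkeeping step'' you defer cannot be carried out: linearity of the two patterns together with $\clseq{\clslub{\acls}{\bcls}}{\clstop}$ does \emph{not} imply linearity of the denoted term. Take $\trmeq{\algtrm{\acls}}{\trmeq{\algtrm{\bcls}}{f(g(x,x))}}$ with $\acls$ given by $\clslet{\amtrmvar}{f(g(y,z))}{\mtrmvara{x,x}}$ and $\bcls$ by $\clslet{\bmtrmvar}{g(y,z)}{f(\mtrmvarb{x,x})}$: both are linear patterns, their join is $\clstop$ and their meet is not $\clsbot$, yet the denoted term repeats $x$, so the pair is not critically overlapping in the sense of Definition~\ref{def:critical}. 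Your frontier correspondence does produce two \emph{distinct} $\lambda$-bound placeholders ($y$ and $z$), but nothing prevents the instantiation recorded in the body from sending both to the same first-order variable, so distinctness of placeholders does not propagate to distinctness of variables in $\algtrm{\acls}$. Note that the paper's own proof silently omits this condition and establishes only the overlap half; the proposition is used where linearity of the denoted term comes for free, since the source of a critical peak is the unifier of two variable-disjoint linear left-hand sides and hence linear. So the honest fix is to treat linearity of $\algtrm{\acls}$ as a standing hypothesis (or part of what ``linear patterns'' means) rather than something derivable from linearity of the individual patterns; as an implication it is false.
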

\begin{lemma} \label{lem:join:join}
  If 
  $\ccls \isdefd \clslub{\acls}{\bcls}$ 
  and $\clsle{\acls,\bcls}{\ccls}$ are witnessed by the $2$nd-order substitutions $\asub$, $\bsub$,
  for multipatterns $\acls$ and $\bcls$ given by
  $\clslet{\vec{\amtrmvar}}{\vec{\atrm}}{\amtrm}$ and
  $\clslet{\vec{\bmtrmvar}}{\vec{\btrm}}{\bmtrm}$,
  then for all let-bindings $\cmtrmvar\,=\,\ctrm$ of $\ccls$,
  $\trmeq{\iclstop{\ctrm}
        }{\clslub{(\clslet{\vec{\amtrmvar}}{\vec{\atrm}}{\suba{\mtrmvarc{\vec{\ctrmvar}}}})
                }{(\clslet{\vec{\bmtrmvar}}{\vec{\btrm}}{\subb{\mtrmvarc{\vec{\ctrmvar}}}})
                }
        }$.
\end{lemma}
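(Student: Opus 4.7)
The plan is to argue through the set-of-positions representation of Lemma~\ref{lem:lattice}, exploiting the concrete description of joins given in its proof, namely that $\clslub{\acls}{\bcls}$ is obtained by taking successive-overlap equivalence classes in $\setlub{\acls}{\bcls}$.

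First, I would fix a let-binding $\cmtrmvar\,=\,\ctrm$ in $\ccls$ and identify its associated position-set $\cPos$ in the common denotation $\algtrm{\ccls}=\algtrm{\acls}=\algtrm{\bcls}$. By construction of $\sclslub$, $\cPos$ equals $\setLub{}{\aiPos{\sclsovl}}$ for some $\setin{\aPos}{\setlub{\acls}{\bcls}}$, that is, $\cPos$ is the union of all pattern-position-sets of $\acls$ and of $\bcls$ that are linked to $\aPos$ by a chain of overlaps. Let $\acls_{\cmtrmvar}$ and $\bcls_{\cmtrmvar}$ denote the sub-collections of $\acls$ and $\bcls$ whose position-sets lie in $\cPos$; these are exactly the pattern-position-sets of $\acls$ and $\bcls$ belonging to the $\sclsovlequ$-equivalence class determining $\cPos$.

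Next, I would identify the two multipatterns appearing in the statement with restrictions to $\ctrm$. By Definition~\ref{def:refinement}, $\asub$ expresses each let-bound variable of $\ccls$, in particular $\cmtrmvar$, in terms of the $\vec{\amtrmvar}$ so that substituting with $\vec{\atrm}$ reproduces $\ctrm$; hence the multipattern $\clslet{\vec{\amtrmvar}}{\vec{\atrm}}{\suba{\mtrmvarc{\vec{\ctrmvar}}}}$ is a multipattern on $\ctrm$ whose position-sets are precisely those of $\acls_{\cmtrmvar}$ translated into $\ctrm$ by stripping the common prefix. The analogous statement holds for $\bcls$ and $\bsub$.

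Finally, I would compute the join of these two restricted multipatterns inside $\ctrm$. Since $\cPos$ arose as a single $\sclsovlequ$-equivalence class, the union of the position-sets of $\acls_{\cmtrmvar}$ and $\bcls_{\cmtrmvar}$ is closed under successive overlap and covers all non-variable positions of $\ctrm$ (otherwise $\ctrm$ would itself split and the let-binding would decompose further in $\ccls$, contradicting that $\cmtrmvar$ is a single binding of $\ccls$). Therefore the join is the singleton containing all non-variable positions of $\ctrm$, which by Lemma~\ref{lem:lattice} is exactly $\iclstop{\ctrm}$. The main obstacle, and the step requiring the most care, will be verifying that the successive-overlap closure internal to $\ctrm$ indeed saturates to the full set of non-variable positions of $\ctrm$; this hinges on the observation that the overlap relation is preserved and reflected when passing between $\algtrm{\ccls}$ and the subterm $\ctrm$, so that connectedness in $\cPos$ transfers to connectedness inside $\ctrm$.
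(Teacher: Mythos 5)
Your proposal is correct, but it proves the lemma by a genuinely different route than the paper. You work concretely in the set-of-positions representation from the proof of Lemma~\ref{lem:lattice}: a let-binding $\cmtrmvar=\ctrm$ of $\ccls$ corresponds to one $\sclsovlequ$-equivalence class of the patterns of $\setlub{\acls}{\bcls}$, the two restricted multipatterns pick out exactly the $\acls$- and $\bcls$-members of that class translated into $\ctrm$, and since the class is by construction overlap-connected and its union is all of $\cPos$, their join inside $\ctrm$ saturates to $\iclstop{\ctrm}$. The paper instead argues abstractly and by contradiction, never unfolding the positional description of the join: it first notes that both restricted multipatterns denote $\ctrm$ (directly from Definition~\ref{def:refinement}), and then observes that any common $\sclsle$-upper bound of the two restrictions strictly below $\iclstop{\ctrm}$, with witnesses $\asub'$, $\bsub'$, could be spliced into $\asub$ and $\bsub$ at $\cmtrmvar$ to produce an upper bound of $\acls$, $\bcls$ strictly below $\ccls$, contradicting that $\ccls$ is the \emph{least} upper bound. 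The paper's argument is shorter and purely order-theoretic, reusing only the defining property of the join; yours is constructive and makes the geometric content visible, at the price of re-verifying that the witnessing substitutions really do carve out exactly the overlap-class restricted to $\ctrm$ and that overlap-connectedness is preserved under stripping the common position prefix --- the two steps you rightly flag as needing the most care. Both are sound; your "otherwise $\ctrm$ would itself split" parenthetical can be replaced by the more direct observation that $\cPos=\setLub{}{\aiPos{\sclsovl}}$ \emph{is} the union of the class by the very definition of $\sclslub$ in Lemma~\ref{lem:lattice}.
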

\begin{lemma}[Vertical] \label{lem:vertical}
  A  peak 
  $\iarsdevinva{\aStp
              }{\btrm
    }{\iarsdeva{\bStp
              }{\atrm
              }{\ctrm
              }}$ 
  of overlapping multisteps either is critical or 
  it can be vertically decomposed as:    
  \[ \iarsdevinva{\substra{\vec{\atrmvar}}{\vec{\aStp}}{\aiStp{\idxout}}
                }{\substra{\vec{\atrmvar}}{\vec{\btrm}}{\bitrm{\idxout}}
      }{\iarsdeva{\substra{\vec{\atrmvar}}{\vec{\bStp}}{\biStp{\idxout}}
                }{\substra{\vec{\atrmvar}}{\vec{\atrm}}{\aitrm{\idxout}}
                }{\substra{\vec{\atrmvar}}{\vec{\ctrm}}{\citrm{\idxout}}
                }} \]
  for peaks
  $\iarsdevinva{\aiStp{\aidx}
              }{\bitrm{\aidx}
    }{\iarsdeva{\biStp{\aidx}
              }{\aitrm{\aidx}
              }{\citrm{\aidx}
              }}$ 
  with 
  $\natge{\pkovl{\aStp}{\bStp}
        }{\pkovl{\aiStp{\aidx}}{\biStp{\aidx}}
        }$ and
  $\natgt{\pkprt{\aStp}{\bStp}
        }{\pkprt{\aiStp{\aidx}}{\biStp{\aidx}}
        }$, for all $\aidx$.
\end{lemma}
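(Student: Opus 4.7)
The plan is to analyse the peak through the encompassment lattice on multipatterns of Lemma~\ref{lem:lattice}. I will set $\acls \isdefd \subap{\sstpsrc}{\aStp}$ and $\bcls \isdefd \subap{\sstpsrc}{\bStp}$, both multipatterns denoting the common source $\atrm$ by left-linearity, and compute their join $\ccls \isdefd \clslub{\acls}{\bcls}$, again a multipattern denoting $\atrm$. If $\clseq{\ccls}{\clstop}$, then by the extension of Definition~\ref{def:critical} to peaks of multisteps via $\sstpsrc$ the peak is critical (using that $\atrm$ is linear by left-linearity of the TRS), and we are done.

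Otherwise $\clsneq{\ccls}{\clstop}$, and I will write $\ccls$ as $\clslet{\vec{\cmtrmvar}}{\vec{\etrm}}{\dmtrm}$. Introducing fresh $1$st-order variables $\vec{\atrmvar}$ to replace each $2$nd-order application $\mtrmvarc{\cdot}$ occurring in the body $\dmtrm$ yields a proper $1$st-order term $\aitrm{\idxout}$ with variables $\vec{\atrmvar}$. Viewing multipatterns as sets of convex disjoint non-empty position sets (as in the proof of Lemma~\ref{lem:lattice}), the refinements $\acls,\bcls \sclsle \ccls$ say exactly that every pattern position of $\aStp$ and $\bStp$ lies inside the region of some $\etrm_\aidx$. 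For each $\aidx \geq 1$ I will define the inner multistep $\aiStp{\aidx}$ (resp.\ $\biStp{\aidx}$) on source $\aitrm{\aidx} \isdefd \etrm_\aidx$ by collecting the let-bindings of $\aStp$ (resp.\ $\bStp$) whose patterns fall within the $\aidx$-th region; and the outer multisteps $\aiStp{\idxout}, \biStp{\idxout}$ on source $\aitrm{\idxout}$ by collecting the remaining bindings, i.e.\ those whose patterns strictly contain some region. By construction $\substra{\vec{\atrmvar}}{\vec{\aStp}}{\aiStp{\idxout}} = \aStp$ and symmetrically for $\bStp$, with sources and targets as required.

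For the measure inequalities, I will apply Lemma~\ref{lem:join:join} at each binding of $\ccls$: the local join $\clslub{\subap{\sstpsrc}{\aiStp{\aidx}}}{\subap{\sstpsrc}{\biStp{\aidx}}}$ equals the top $\clstop$ for $\etrm_\aidx$, which I write $\clstop_\aidx$. Since the pattern positions of the outer peak are pairwise disjoint from everything else, Proposition~\ref{prop:multipattern:size} gives the decomposition $\pkovl{\aStp}{\bStp} = \sum_{\aidx \geq 1}\pkovl{\aiStp{\aidx}}{\biStp{\aidx}}$, whence $\natge{\pkovl{\aStp}{\bStp}}{\pkovl{\aiStp{\aidx}}{\biStp{\aidx}}}$ for every $\aidx$. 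For body sizes, $\ccls$ is obtained by substituting $\vec{\clstop_\aidx}$ for $\vec{\atrmvar}$ in the trivial multipattern on $\aitrm{\idxout}$; this substitution is not a bijective renaming, since each $\clstop_\aidx$ has body $\mtrmvarc{\vec{\atrmvar}_\aidx}$ which is not a $1$st-order variable, so the same Proposition yields $\natgt{\clsbsz{\ccls}}{\clsbsz{\clstop_\aidx}}$, i.e.\ $\natgt{\pkprt{\aStp}{\bStp}}{\pkprt{\aiStp{\aidx}}{\biStp{\aidx}}}$.

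The hard part will be the factorisation itself: matching each let-binding of $\aStp$ and $\bStp$ to the correct region of $\ccls$ (or placing it in the outer skeleton) and recomposing both halves of the peak through a common outer skeleton with consistent sources and targets. The set-of-convex-sets view of multipatterns from Lemma~\ref{lem:lattice} makes this transparent, as the pattern positions of $\acls$ and $\bcls$ partition cleanly along those of $\ccls$ — each position set of $\ccls$ being by definition a union of overlap-connected position sets from $\acls$ and $\bcls$.
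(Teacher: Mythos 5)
There are two genuine gaps. First, your disposal of the critical case is wrong: you claim that $\clseq{\ccls}{\clstop}$ makes the peak critical ``using that $\atrm$ is linear by left-linearity of the TRS'', but left-linearity only makes \emph{left-hand sides} linear, not the source term of an arbitrary peak. By Definition~\ref{def:critical} criticality additionally requires the denoted term to be linear, so for a rule $\rula{\atrmvar,\btrmvar} \hastype \rulstr{f(\atrmvar,\btrmvar)}{\dots}$ and a peak of two overlapping root redexes in $f(z,z)$ the join \emph{is} $\clstop$ yet the peak is \emph{not} critical, and your proof produces no decomposition for it. The paper devotes a separate case to exactly this situation, decomposing into a linearised prefix over fresh pairwise distinct variables together with trivial renaming peaks $\clslet{}{}{\ditrmvar{\aidx}}$; the second component of $\strmsiz$ (the sum of \emph{squares} of variable multiplicities) exists precisely so that this linearisation strictly decreases $\spkprt$.

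Second, your flat $n$-ary decomposition in the case $\clsneq{\ccls}{\clstop}$ does not exist in general, because the patterns of the join $\ccls$ may be \emph{nested} in the term tree: multisteps contract nested redexes, so e.g.\ $\ccls$ can have body $\imtrmvarc{\natone}{\imtrmvarc{\nattwo}{a}}$ when $\aStp$ contracts two nested redexes and $\bStp$ overlaps only the outer one. Replacing ``each $2$nd-order application'' by a fresh variable then either collapses the skeleton $\aitrm{\idxout}$ to a single variable with one inner peak equal to the original peak (so $\spkprt$ does not decrease), or requires an iterated rather than a simultaneous substitution, which is not the shape the lemma asserts; your claim that each local join is the top for $\etrm_{\aidx}$ fails in the same situation. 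The paper instead performs a \emph{binary} split of the body of $\ccls$ into two pieces each containing at least one non-$1$st-order-variable symbol (always possible outside the renaming case above), which yields the strict decrease of body-size on both sides via Proposition~\ref{prop:multipattern:size}. Your bookkeeping through Lemma~\ref{lem:join:join} and Proposition~\ref{prop:multipattern:size} is in the right spirit, but it is applied to a construction that is not available.
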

Let $\aStp$, $\bStp$ in 
$\iarsdevinva{\aStp}{\btrm}{\iarsdeva{\bStp}{\atrm}{\ctrm}}$
be given by 
$\clslet{\vec{\amtrmvar}\!}{\!\vec{\rula{\vec{\atrmvar}}}}{\amtrm}$ and
$\clslet{\vec{\bmtrmvar}\!}{\!\vec{\rulb{\vec{\btrmvar}}}}{\bmtrm}$,
for rules 
$\irula{\aidx}{\vec{\aitrmvar{\aidx}}} \hastype \rulstr{\ailhs{\aidx}}{\airhs{\aidx}}$ and
$\irulb{\bidx}{\vec{\bitrmvar{\bidx}}} \hastype \rulstr{\bilhs{\bidx}}{\birhs{\bidx}}$.
Lemma~\ref{lem:lattice} entails that if $\aStp$, $\bStp$ are non-overlapping
their patterns are (pairwise) disjoint,
so that the join $\clslub{\subap{\sstpsrc}{\aStp}}{\subap{\sstpsrc}{\bStp}}$ 
is given by taking the (disjoint) union of the let-bindings:
$\clslet{\vec{\amtrmvar}\vec{\bmtrmvar}
         }{\vec{\alhs}\vec{\blhs}
         }{\cmtrm
         }$ for some $\cmtrm$ such that
$\trmeq{\msubstra{\vec{\bmtrmvar}}{\vec{\blhs}}{\cmtrm}}{\amtrm}$ and
$\trmeq{\msubstra{\vec{\amtrmvar}}{\vec{\alhs}}{\cmtrm}}{\bmtrm}$.
We define the \emph{join}\footnote{%
This does not create ambiguity with joins of multipatterns since if
$\clsneq{\aStp}{\bStp}$, then $\clsneq{\algtrm{\aStp}}{\algtrm{\bStp}}$
unless the let-bindings of both are empty, so both are bottom.
}
$\clslub{\aStp}{\bStp}$ and \emph{residual} $\stpres{\aStp}{\bStp}$ by
$\clslet{\vec{\amtrmvar}\vec{\bmtrmvar}
       }{\vec{\rula{\vec{\atrmvar}}}\vec{\rulb{\vec{\btrmvar}}}
       }{\cmtrm
       }$ respectively
$\clslet{\vec{\amtrmvar}
       }{\vec{\rula{\vec{\atrmvar}}}
       }{\msubstra{\vec{\bmtrmvar}}{\vec{\brhs}}{\cmtrm}
       }$, where, as substituting the right-hand sides $\vec{\brhs}$
may lose being linear and canonical, we implicitly canonise and 
linearise the latter by reordering and replicating
let-bindings.
Then
$\relap{\relsco{\aiarsdev{\clslub{\aStp}{\bStp}}
              }{\aiarsdevinv{\stpres{\aStp}{\bStp}}
              }
      }{\atrm
      }{\ctrm
      }$, giving rise to the classical
residual theory~\cite{Chur:Ross:36,Huet:Levy:91a,Boud:85,Bare:85},
see~\cite[Section~8.7]{Tere:03}.
We let $\stpin{\astp}{\aStp}$ abbreviate
$\exists \bStp.\clseq{\aStp}{\clslub{\astp}{\bStp}}$.
\begin{example} \label{exa:cluster:5}
  The steps $\astp$ and $\bstp$ given by
  $\clslet{\amtrmvar}{\lambda x.\rula{x}}{\mtrmvara{f(f(a))}}$ respectively
  $\clslet{\amtrmvar}{\lambda x.\rula{x}}{f(f(\mtrmvarb{a}))}$,
  are non-overlapping, $\stpin{\astp,\bstp}{\bcls}$, 
  $\clseq{\clslub{\astp}{\bstp}}{\bcls}$,
  and $\iarsdeva{\stpres{\astp}{\bstp}}{f(f(g(a)))}{g(g(a))}$,
  for $\bcls$ and $\arul$ as in Example~\ref{exa:cluster:3}.
\end{example}  
\begin{lemma}[Horizontal] \label{lem:horizontal}
  A peak 
  $\relap{\relsco{\aiarsdevinv{\aStp}}{\aiarsdev{\bStp}}
        }{\atrm
        }{\btrm
        }$ 
  of multisteps either
  \begin{enumerate}
  \item \label{ite:horizontal:non-overlap}
    is non-overlapping and then
    $\relap{\relsco{\aiarsdev{\stpres{\bStp}{\aStp}}}{\aiarsdevinv{\stpres{\aStp}{\bStp}}}
        }{\atrm
        }{\btrm
        }$, with the rule symbols occurring in 
    $\stpres{\bStp}{\aStp}$ contained in $\bStp$ 
    (and those in $\stpres{\aStp}{\bStp}$ contained in $\aStp$); or
  \item \label{ite:horizontal:overlap}
    it can be horizontally decomposed:
    $\relap{\relsco{\relsco{\aiarsdevinv{\stpres{\aStp}{\astp}}}{\aiarsinv{\astp}}
                  }{\relsco{\aiars{\bstp}}{\aiarsdev{\stpres{\bStp}{\bstp}}}
                  }
          }{\atrm
          }{\btrm
          }$ 
    for some peak $\relsco{\aiarsinv{\astp}}{\aiars{\bstp}}$ 
    of overlapping steps $\stpin{\astp}{\aStp}$ and $\stpin{\bstp}{\bStp}$.
  \end{enumerate} 
\end{lemma}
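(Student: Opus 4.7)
The plan is to work throughout in the set-of-sets-of-positions representation of multipatterns supplied by Lemma~\ref{lem:lattice}, applied to the source patterns $\subap{\sstpsrc}{\aStp}$ and $\subap{\sstpsrc}{\bStp}$, which are well-defined multipatterns by left-linearity.

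In case~(\ref{ite:horizontal:non-overlap}), non-overlap translates by Lemma~\ref{lem:lattice} into every pattern-position set in $\subap{\sstpsrc}{\aStp}$ being disjoint from every pattern-position set in $\subap{\sstpsrc}{\bStp}$. Consequently, the let-bindings of $\aStp$ and $\bStp$ combine into a single linear canonical cluster representing both rewrites simultaneously. Substituting the right-hand sides of $\aStp$'s rules then leaves every pattern of $\bStp$ intact at its head position, since those positions lie in variable holes of $\aStp$'s patterns or in disjoint subtrees; such a pattern may be duplicated or erased by non-linear or erasing right-hand sides of $\aStp$, but no new rule symbol is introduced. This is the classical parallel-moves / strip argument and yields the residual multisteps $\stpres{\bStp}{\aStp}$ from $\atrm$ and $\stpres{\aStp}{\bStp}$ from $\btrm$ meeting at a common reduct, with the rule symbols occurring in $\stpres{\bStp}{\aStp}$ drawn from $\bStp$, and dually.

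In case~(\ref{ite:horizontal:overlap}), Lemma~\ref{lem:lattice} furnishes sets $\setin{\aPos}{\subap{\sstpsrc}{\aStp}}$ and $\setin{\bPos}{\subap{\sstpsrc}{\bStp}}$ with $\setneq{\setglb{\aPos}{\bPos}}{\setemp}$. Let $\astp$ and $\bstp$ be the individual steps of $\aStp$, $\bStp$ whose single source pattern-position sets are $\aPos$ and $\bPos$; then $\astp$, $\bstp$ are overlapping and satisfy $\stpin{\astp}{\aStp}$ and $\stpin{\bstp}{\bStp}$. I would then invoke the join-decomposition $\clseq{\aStp}{\clslub{\astp}{\stpres{\aStp}{\astp}}}$, which is well-defined because the constituent patterns of a single multistep are pairwise disjoint (the multistep is itself a linear canonical cluster). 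By the classical residual calculus this sequentialises $\aStp$: there is an intermediate $\dtrm$ with $\iarsa{\astp}{\ctrm}{\dtrm}$ and $\iarsdeva{\stpres{\aStp}{\astp}}{\dtrm}{\atrm}$. The analogous split on the $\bStp$-side produces $\etrm$ with $\iarsa{\bstp}{\ctrm}{\etrm}$ and $\iarsdeva{\stpres{\bStp}{\bstp}}{\etrm}{\btrm}$. Concatenating the reversal of the first with the second yields the stated decomposition.

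The main obstacle I anticipate is the bookkeeping in the sequentialisation step: verifying that the cluster-level identity $\clseq{\aStp}{\clslub{\astp}{\stpres{\aStp}{\astp}}}$ really translates into the claimed single-step-followed-by-multistep factorisation. This is classical orthogonal residual theory (Church--Rosser, Boudol, Huet--L\'evy), as recalled in Section~\ref{sec:critical:peaks}, here applied to the internal constituents of $\aStp$, whose pairwise non-overlap is guaranteed by canonicity. Overlap detection itself, and the disjointness invoked in case~(\ref{ite:horizontal:non-overlap}), are immediate from Lemma~\ref{lem:lattice}.
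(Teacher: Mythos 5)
Your proposal is correct and follows essentially the same route as the paper's own (very terse) proof: case~(\ref{ite:horizontal:non-overlap}) by the construction of joins and residuals of non-overlapping multisteps given just before the lemma, and case~(\ref{ite:horizontal:overlap}) by using the lattice representation of Lemma~\ref{lem:lattice} (the paper phrases the extraction of an overlapping constituent pair via distributivity of the meet over the join of the singleton steps, which amounts to your direct reading of the meet construction) followed by the classical residual sequentialisation $\clseq{\aStp}{\clslub{\astp}{\stpres{\aStp}{\astp}}}$. The "obstacle" you flag is exactly the point the paper also discharges by appeal to residual theory, so there is no gap.
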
   
The above allows to refactor the proof of the critical pair 
lemma~\cite[Lemma~3.1]{Huet:80} for left-linear TRSs,
as an induction on the amount of \emph{non-overlap} between
the steps in the peak, such that the critical peaks form the \emph{base} case:
\begin{lemma} \label{lem:critical:pair}
  A left-linear TRS is locally confluent if all critical pairs are joinable.
\end{lemma}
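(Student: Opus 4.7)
The plan is to strengthen the statement to peaks of multisteps---any peak $\atrm \tfrom_{\aStp} \btrm \tto_{\bStp} \ctrm$ admits a valley $\atrm \tto \cdot \tfrom \ctrm$---and prove this by well-founded induction on the non-overlap measure $\pkprt{\aStp}{\bStp}$, which takes values in $\NN^2$ with the lexicographic order.  The lemma as stated is then the specialisation to single-step peaks.

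For any such peak, I apply Lemma~\ref{lem:horizontal}.  In the non-overlapping case, clause~(\ref{ite:horizontal:non-overlap}) provides residual multisteps that close the peak directly, and multisteps are realisable as $\tto$-reductions.  In the overlapping case, clause~(\ref{ite:horizontal:overlap}) factors the peak through a central peak $\btrm' \fromB{\astp} \atrm' \to_{\bstp} \ctrm'$ of overlapping single steps with $\stpin{\astp}{\aStp}$, $\stpin{\bstp}{\bStp}$, surrounded by horizontal residual contexts $\stpres{\aStp}{\astp}$ and $\stpres{\bStp}{\bstp}$.  It then suffices to close the central peak and prepend/append the residual reductions.

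For the central peak of overlapping single steps, I apply Lemma~\ref{lem:vertical}.  If the peak is critical, then by Definition~\ref{def:critical} it is a substitution instance of a critical peak, whose associated critical pair is joinable by hypothesis; the joining lifts through the substitution by stability of rewriting under substitutions.  Otherwise, Lemma~\ref{lem:vertical} decomposes the peak as a vertical substitution of an outer core $\aiStp{\idxout}$ (and $\biStp{\idxout}$) over subpeaks $\bitrm{\aidx} \fromB{\aiStp{\aidx}} \aitrm{\aidx} \to_{\biStp{\aidx}} \citrm{\aidx}$ sitting at the variable positions of the core, each with strictly smaller $\pkprt$; the induction hypothesis supplies a valley $\bitrm{\aidx} \tto \cdot \tfrom \citrm{\aidx}$ for each, and substituting these valleys into the core produces a valley for the central peak.

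The main obstacle is the reassembly step: one must verify that substituting valleys at the variables of $\aiStp{\idxout}$ and $\biStp{\idxout}$ yields a well-formed closing conversion on the ambient terms $\substra{\vec{\atrmvar}}{\vec{\btrm}}{\bitrm{\idxout}}$ and $\substra{\vec{\atrmvar}}{\vec{\ctrm}}{\citrm{\idxout}}$.  Left-linearity of $\atrs$ is essential here, as it guarantees that the outer-core templates of the vertical decomposition are linear in their variables, so that the subvalleys substitute compatibly; the only complication is that variables may be duplicated in right-hand sides of the outer rules, but this merely replicates a valley, which remains a valley.  Combining the resulting valley for the central peak with the residual reductions from the horizontal decomposition yields the desired $\atrm \tto \cdot \tfrom \ctrm$.
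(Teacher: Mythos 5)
There is a genuine gap, and it lies in the step you describe as ``prepend/append the residual reductions''; worse, the strengthened statement you induct on is actually false under the lemma's hypotheses. After applying Lemma~\ref{lem:horizontal}(\ref{ite:horizontal:overlap}) you obtain
$t_1 \mathrel{\arsdevinv} u_1 \fromB{\astp} s \to_{\bstp} u_2 \mathrel{\arsdev} t_2$,
so the residuals $\stpres{\aStp}{\astp}$ and $\stpres{\bStp}{\bstp}$ point \emph{away} from the central peak. Closing the central peak by a valley $u_1 \tto w \tfrom u_2$ therefore yields the conversion $t_1 \tfrom u_1 \tto w \tfrom u_2 \tto t_2$, which is not a valley: it contains two fresh peaks, each between a multistep and an arbitrary reduction, to which neither your induction hypothesis nor your measure $\pkprt{\aStp}{\bStp}$ applies. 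This cannot be repaired, because the inductive invariant is false: for the TRS $\atrs'$ of Example~\ref{exa:half:levy:hot} (i.e.\ $\arul_1,\ldots,\arul_4$ without $\arul_5$) all critical pairs are joinable, yet the multistep peak $f(c,c) \mathrel{\arsdevinv} f(a,a) \to_{\arul_1} b$ admits no valley, since $b$ is a normal form while the only reduct of $f(c,c)$ is $f(c,c)$ itself. So ``every peak of multisteps has a valley'' does not follow from joinability of critical pairs. Taming exactly these recombined peaks is what the double induction on $\pair{\spkovl}{\spkprt}$ and the decreasing-diagrams labelling in the proof of Theorem~\ref{thm:hot} are for, and even there only under the stronger hot-decreasingness hypothesis.

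The repair is to drop the strengthening. The paper proves joinability only for peaks of \emph{empty or single} steps, by induction on $\pkprt{\astp}{\bstp}$: the non-overlapping case is Lemma~\ref{lem:horizontal}(\ref{ite:horizontal:non-overlap}), and in the overlapping case Lemma~\ref{lem:vertical} applies \emph{directly} to the single-step peak --- no horizontal decomposition is needed, hence there are no residual contexts to recombine --- with the critical case as base and the vertical decomposition producing again peaks of empty or single steps, so the induction stays inside that class. Note that your own argument never invokes the induction hypothesis on a genuinely multi-redex peak (the vertical subpeaks of a single-step peak are single or empty steps), so the multistep generality buys nothing and is precisely what breaks the proof.
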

\begin{proof}
  We show every peak $\relsco{\aiarsinvref{\astp}}{\aiarsref{\bstp}}$
  of empty or single steps is joinable, 
  by induction on the amount of non-overlap 
  ($\pkprt{\astp}{\bstp}$) ordered by $\strmgt$.
  We distinguish cases on whether $\astp$, $\bstp$ are overlapping 
  ($\natneq{\pkovl{\astp}{\bstp}}{\pair{\natzer}{\natzer}}$) or not.
  If $\astp$, $\bstp$ do not have overlap, in particular when either $\astp$ or $\bstp$ 
  is empty, then we conclude by Lemma~\ref{lem:horizontal}(\ref{ite:horizontal:non-overlap}).
  If $\astp$, $\bstp$ do have overlap, then by Lemma~\ref{lem:vertical} the peak either
  \begin{itemize}
  \item
    is critical and we conclude by assumption; or
  \item
    can be (vertically) \emph{decomposed} into smaller such peaks
    $\relsco{\aiarsinvref{\aistp{\aidx}}
           }{\aiarsref{\bistp{\aidx}}
           }$.
    Since these are $\strmgt$-smaller, the induction hypothesis yields
    them joinable, from which we conclude by reductions and joins being
    closed under \emph{composition}. \qed
  \end{itemize}
\end{proof}
\begin{remark}
  Apart from enabling our proof of Theorem~\ref{thm:hot} below,
  we think this refactoring is methodologically interesting,
  as it extends to (parallel and) simultaneous critical pairs,
  then yielding, we claim, simple statements and proofs 
  of confluence results~\cite{Okui:98,Felg:15} based on these
  and their higher-order generalisations.
\end{remark}

\section{Confluence by hot-decreasingness} 
\label{sec:hot}

Linear TRSs have a critical-pair criterion for so-called 
rule-labelling~\cite{Oost:08,Hiro:Midd:11,Zank:Felg:Midd:15}:
If all critical peaks are decreasing with respect some rule-labelling, 
then the TRS is decreasing, hence confluent.
We introduce the hot-labelling extending that result to left-linear 
TRSs. To deal with non-right-linear rules we make use of a 
rule-labelling for multisteps that is invariant under duplication, 
cf.~\cite{Felg:15,Zank:Felg:Midd:15}.
\begin{remark}
  Na{\"\i}ve extensions fail.
  Non-left-linear TRSs need not 
  be confluent even without critical pairs~\cite[Exercise~2.7.20]{Tere:03}.
  That non-right-linear TRSs need not be confluent even if all critical peaks
  are decreasing for rule-labelling, is witnessed by~\cite[Example~8]{Hiro:Midd:11}.
\end{remark}
\begin{definition}
  For a TRS $\atrs$, terminating subsystem $\dtrs \subseteq \atrs$,
  and labelling of $\setdif{\atrs}{\dtrs}$-rules
  into a well-founded order $\aord$,
  \emph{hot}-labelling $\aLabhot$ maps a multistep 
  $\aStp \hastype \iarsdeva{\atrs}{\atrm}{\btrm}$ 
  \begin{itemize}
  \item
    to the \emph{term} $\atrm$ if $\aStp$ contains $\dtrs$-rules only; and
  \item
    to the \emph{set} of $\aord$-maximal $\setdif{\atrs}{\dtrs}$-rules in $\aStp$ otherwise.
  \end{itemize}
  The \emph{hot}-order $\aordhot$ relates terms by $\aiarstra{\dtrs}$,
  sets by $\aordmul$, and all sets to all terms.
\end{definition}
Note $\aordhot$ is a well-founded order as series composition~\cite{Bech:Groo:Reto:97}
of $\aiarstra{\dtrs}$ and $\aordmul$, which are well-founded orders by the 
assumptions on $\dtrs$ and $\aord$.
Taking the \emph{set} of maximal rules in a multistep
makes hot-labelling invariant under duplication.
As with the notation $\orddwna{\alab}$, we denote
$\{ \blab \mid \ordhota{\alab}{\blab} \}$ by $\ordhotdwna{\alab}$, and
$\{ \blab \mid \ordhotrefa{\alab}{\blab} \}$ by $\ordhotrefdwna{\alab}$.
\begin{definition} \label{def:hot:decreasing:trs}
  A TRS $\atrs$ is \emph{hot-decreasing} if its critical peaks
  are decreasing for the hot-labelling, for some $\dtrs$
  and $\aord$, such that each outer--inner critical peak $\relsco{\aiarsinv{\alab}}{\aars}$ for label $\alab$, 
  is decreasing by a conversion of shape (oi): 
  $\relsco{\aiarsequ{\ordhotdwna{\alab}}}{\aiarsdevinv{\ordhotrefdwna{\alab}}}$.
\end{definition}
\begin{theorem} \label{thm:hot}
  A left-linear TRS is confluent, if it is hot-decreasing.
\end{theorem}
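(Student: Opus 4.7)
The plan is to invoke Theorem~\ref{thm:dd:complete} on the multistep ARS $\aarsdev$ labelled by $\aLabhot$; since $\aarsdev$ has the same reflexive-transitive closure as $\aars$, this yields confluence of $\atrs$. To establish decreasingness I would show, by induction on a well-founded measure pairing the combined hot-labels $\Labhota{\aStp}, \Labhota{\bStp}$ with the amount of non-overlap $\pkprt{\subap{\sstpsrc}{\aStp}}{\subap{\sstpsrc}{\bStp}}$ ordered lexicographically, that every multistep peak $\iarsdevinva{\aStp}{\btrm}{\iarsdeva{\bStp}{\atrm}{\ctrm}}$ admits a closing conversion fitting the shape of Definition~\ref{def:decreasing}.

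First I apply Horizontal decomposition (Lemma~\ref{lem:horizontal}). In the non-overlapping case the peak closes immediately via the residuals $\stpres{\bStp}{\aStp}$ and $\stpres{\aStp}{\bStp}$; because hot-labelling assembles maximal rule symbols into a \emph{set}, it is invariant under the replication residuation can perform, so the residuals' labels stay bounded by $\Labhota{\bStp}$ and $\Labhota{\aStp}$ respectively, yielding a decreasing diamond. In the overlapping case, Horizontal isolates a peak $\relsco{\aiarsinv{\astp}}{\aiars{\bstp}}$ of overlapping single steps nested between outer residuals whose labels are bounded by those of the original peak. I then invoke Vertical decomposition (Lemma~\ref{lem:vertical}) on this step peak: if it is critical, hot-decreasingness supplies the closing conversion directly --- using shape~(oi) for outer--inner critical peaks --- and if it is not, Vertical splits it as sub-peaks substituted into a prefix, each with strictly smaller $\pkprt{\cdot}{\cdot}$. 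These sub-peaks close decreasingly by the induction hypothesis, and the resulting conversions are reassembled under the substitution and composed around the outer residuals.

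The main obstacle I expect is controlling labels during this reassembly, especially in the outer--inner case. The outer residuals flanking the inner closing conversion contribute labels in $\ordhotrefdwna{\Labhota{\aStp}}$ and $\ordhotrefdwna{\Labhota{\bStp}}$; when the inner peak is outer--inner critical with outer label $\alab$, the combined conversion must respect the shape of Definition~\ref{def:decreasing}. Shape~(oi) is engineered for exactly this contingency: by confining the forward part of the inner closing conversion to $\ordhotdwna{\alab}$ and the backward part to multisteps at $\ordhotrefdwna{\alab}$, prepending/appending the $\alab$-labelled outer context still fits the decreasing template. The remaining bookkeeping --- interleaving the multiset hot-labels of the outer residuals with the inner decreasing conversion, and verifying that substitution closure and compositionality of the decreasing shape survive --- should be routine but requires care with how hot-labelling interacts with substitution into sub-multisteps.
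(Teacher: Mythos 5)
There is a genuine gap, and it sits exactly where you defer to ``routine bookkeeping'': the reassembly around the outer residuals in the outer--inner case, together with the induction measure that is supposed to justify it. After Horizontal isolates the inner critical peak between $\astp$ and $\bstp$, simply flanking its closing conversion with $\stpres{\aStp}{\astp}$ and $\stpres{\bStp}{\bstp}$ yields a conversion that \emph{begins} with a backward multistep whose label is only in $\ordhotrefdwna{\Labhota{\aStp}}$ (possibly equal to $\Labhota{\aStp}$); the template of Definition~\ref{def:decreasing} admits such a step only in the single $\aiarsinvref{\alab}$ slot near the end, and the (oi)-shape of the inner diagram already contributes its own multistep for that slot. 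So the naive composition does not fit the decreasing template. The paper's proof instead \emph{lifts} the closing conversion of the encompassed critical peak into the context of the whole of $\aStp$ (substituting the right-hand sides of $\aStp'$, so that $\rlft{\hat{\bStp}}$ starts at $\btrm$ itself and the leading backward residual disappears), merges the two offending backward multisteps into the single combination $\clft{\aStp'}{\hat{\aStp}}$, and is then left with a genuine residual peak $\iarsdevinva{\clft{\aStp'}{\hat{\aStp}}}{\dtrm''}{\iarsdeva{\stpres{\bStp}{\bstp}}{\ctrm'}{\ctrm}}$ to which the induction hypothesis must be applied.

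That recursive call is where your measure fails. The remaining peak satisfies only $\ordhotrefa{\Labhota{\aStp}}{\Labhota{\clft{\aStp'}{\hat{\aStp}}}}$ --- equality occurs, e.g., when $\aStp'$ already contains an $\aord$-maximal rule of $\aStp$ --- and its amount of non-overlap bears no useful relation to $\pkprt{\aStp}{\bStp}$, since both the term and the multisteps have changed. What does strictly decrease is the amount of \emph{overlap}: the inequality chain~(\ref{eqn:decrease}) shows $\pkovl{\aStp}{\bStp} > \pkovl{(\clft{\aStp'}{\hat{\aStp}})}{(\stpres{\bStp}{\bstp})}$ by a non-trivial lattice-theoretic argument, and this is precisely why the paper's induction is on the pair $\pair{\pkovl{\aStp}{\bStp}}{\pkprt{\aStp}{\bStp}}$ under the lexicographic product of $\snatgt$ with itself, rather than on labels paired with non-overlap. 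Since your measure omits overlap entirely, the induction hypothesis you need in the crucial case is simply not available. A secondary, smaller gap: in the non-overlapping case, invariance of \emph{set}-labels under replication does not cover the situation where a multistep contains only $\dtrs$-rules and is therefore \emph{term}-labelled; there one must argue via $\aiarstre{\dtrs}$-reachability of the sources, as in Lemma~\ref{lem:hot:structural}(\ref{ite:hot:commute}).
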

Before proving Theorem~\ref{thm:hot}, we give (non-)examples and special cases.
\begin{example} \label{exa:nats:hot} 
  Consider the left-linear TRS $\atrs$:
  \[\begin{array}{lrcl@{\,\,\,\,}lrcl@{\,\,\,\,}lrcl}
  \arul_1\hastype & \m{nats}                  & \srulstr & \m{0} : \m{inc}(\m{nats}) &
  \arul_3\hastype & \m{inc}(x : y)            & \srulstr & \m{s}(x) : \m{inc}(y)     &
  \arul_5\hastype & \m{hd}(x : y)             & \srulstr & x \\
  \arul_2\hastype & \m{d}(x)                  & \srulstr & x : (x : \m{d}(x))        &
  \arul_4\hastype & \m{inc}(\m{tl}(\m{nats})) & \srulstr & \m{tl}(\m{inc}(\m{nats})) &
  \arul_6\hastype & \m{tl}(x : y)             & \srulstr & y
  \end{array}\]
  By taking $\dtrs \isdefd \trsemp$, labelling rules by themselves, and
  ordering $\orda{\airul{4}}{\airul{1},\airul{3},\airul{6}}$ 
  the only critical peak 
  $\relsco{\aiarsinv{\setstr{\airul{4}}}}{\aiars{\setstr{\airul{1}}}}$
  can be completed into the decreasing diagram:
  \begin{center}
  \begin{tikzpicture}
  \node (t) at (0,0.9) {$\m{tl}(\m{inc}(\m{nats}))$};
  \node (o) at (4,0.9) {$\m{inc}(\m{tl}(\m{nats}))$};
  \node (s) at (8,0.9) {$\m{inc}(\m{tl}(\m{0} : \m{inc}(\m{nats})))$};
  \node (u) at (0,0)   {$\m{tl}(\m{inc}(\m{0} : \m{inc}(\m{nats})))$};
  \node (v) at (4.3,0) {$\m{tl}(\m{s}(\m{0}) : \m{inc}(\m{inc}(\m{nats}))$};
  \node (w) at (8,0)   {$\m{inc}(\m{inc}(\m{nats}))$};
  \draw[->]
    (o) edge node[anchor=north] {$\scriptstyle\{\arul_4\}$} (t)
    (o) edge node[anchor=north] {$\scriptstyle\{\arul_1\}$} (s)
    (t) edge node[anchor=east]  {$\scriptstyle\{\arul_1\}$} (u)
    (u) edge node[anchor=north] {$\scriptstyle\{\arul_3\}$} (v)
    (v) edge node[anchor=north] {$\scriptstyle\{\arul_6\}$} (w)
    (s) edge node[anchor=west]  {$\scriptstyle\{\arul_6\}$} (w)
  ;
  \end{tikzpicture}\vspace{-0.7em}
  \end{center}
  Since the peak is outer--inner, the closing conversion must be of (oi)-shape
  $\relsco{\aiarsequ{\ordhotdwna{\setstr{\airul{4}}}}}{\aiarsdevinv{\ordhotrefdwna{\setstr{\airul{4}}}}}$.
  It is, so
  the system is confluent by Theorem~\ref{thm:hot}.
\end{example}
\begin{example} \label{exa:half:levy:hot}
  Consider the left-linear confluent TRS $\atrs$:
  \[\begin{array}{lrcl@{\quad\quad}lrcl@{\quad\quad}lrcl}
    \arul_1 \hastype & f(a,a) & \srulstr & b      &
    \arul_3 \hastype & f(c,x) & \srulstr & f(x,x) & 
    \arul_5 \hastype & f(c,c) & \srulstr & f(a,c) \\
    \arul_2 \hastype & a      & \srulstr & c      &
    \arul_4 \hastype & f(x,c) & \srulstr & f(x,x)
  \end{array}\]
  Since $b$ is an $\atrs$-normal form,
  the only way to join the outer--inner critical peak
  $\iarsinva{\airul{1}}{b}{\iarsa{\airul{2}}{f(a,a)}{f(c,a)}}$ 
  is by a conversion starting with a step 
  $\iarsinva{\airul{1}}{b}{f(a,a)}$.
  As its label must be identical to the same step in the peak, not smaller,
  whether we choose $\airul{1}$ to be in $\dtrs$ or not,
  the peak is not hot-decreasing, so Theorem~\ref{thm:hot} does not apply.

  That hot-decreasingness in Theorem~\ref{thm:hot} cannot be weakened
  to (ordinary) decreasingness, can be seen by considering $\atrs'$ obtained
  by omitting $\arul_5$ from $\atrs$.  Although $\atrs'$ is not
  confluent~\cite[Example 8]{Hiro:Midd:11},
  by taking $\dtrs = \varnothing$ and $\orda{\airul{1}}{\airul{3},\airul{4}}$,
  we can show that all critical peaks of $\atrs'$ are decreasing for the
  hot-labelling.
\end{example}
A special case of Theorem~\ref{thm:hot}, is that a left-linear terminating 
TRS is confluent~\cite{Knut:Bend:70}, if each critical pair is joinable,
as can be seen by setting $\dtrs \isdefd \atrs$.
\begin{corollary} \label{cor:dev:closed}
  A left-linear development closed TRS is confluent~\cite[Cor.~24]{Oost:97}.
\end{corollary}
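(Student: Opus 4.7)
The plan is to derive the corollary as an instance of Theorem~\ref{thm:hot} that mirrors the Knuth--Bendix case at the opposite extreme of the $\dtrs$-spectrum: take $\dtrs \isdefd \trsemp$, which is vacuously terminating, and label each rule of $\atrs$ by itself, so that every non-trivial multistep is hot-labelled by the set of its $\aord$-maximal rules, with such sets compared by the multiset extension $\aordmul$.

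For an outer--inner critical peak with outer rule $\arul$, development closedness supplies a multistep from the inner side to the outer side; the (oi) shape demands that this multistep's hot-label be $\aordmulref \setstr{\arul}$, i.e., that every rule occurring in the closing multistep is $\aord$-dominated by $\arul$. For overlay critical peaks, the weaker general decreasing shape suffices, and is likewise satisfied by the same closing multistep provided the same domination property holds, taking the common reduct to be the outer-side term.

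The central task is therefore to exhibit a well-founded strict $\aord$ on the rules of $\atrs$ under which every outer rule of a critical peak strictly dominates every rule appearing in its associated development-closing multistep. The plan is to take $\aord$ as the transitive closure of the natural \emph{closing-dependency} relation, declaring $\arul$ strictly above $\arul'$ whenever $\arul'$ occurs in a development-closing multistep for some critical peak with outer rule $\arul$. The main obstacle---indeed the technical crux---is verifying that this relation is acyclic, so that its closure is genuinely well-founded; this requires a structural analysis of the supplied multisteps (essentially the content of van Oostrom's original proof), using that each such multistep resolves the critical overlap at the outer rule's redex position without recreating it. Once $\aord$ is in place, both the (oi) condition for outer--inner peaks and the general decreasing condition for overlay peaks are directly witnessed by the development-closing multisteps, hot-decreasingness is established, and Theorem~\ref{thm:hot} delivers confluence.
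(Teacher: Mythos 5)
There is a genuine gap, and it stems from a misreading of the (oi)-condition in Definition~\ref{def:hot:decreasing:trs}. That condition requires the closing multistep of an outer--inner critical peak to carry a label in $\ordhotrefdwna{\alab}$ --- the \emph{reflexive} down-set of the peak's outer label $\alab$ --- not in the strict down-set $\ordhotdwna{\alab}$. Consequently the closing multistep is allowed to have \emph{the same} label as the outer step, and no strict domination of the rules in the closing multistep by the outer rule is needed. The paper exploits exactly this: it takes $\dtrs \isdefd \trsemp$ and labels \emph{all} rules identically (say by $0$), so that every critical peak and every development-closing multistep supplied by development closedness carries the label $\setstr{0}$; since $\setstr{0} \in \ordhotrefdwna{\setstr{0}}$, the single multistep $\arsdevinva{\atrm}{\btrm}$ already has (oi)-shape (with the preceding conversion empty), and Theorem~\ref{thm:hot} applies immediately. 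No order on rules has to be constructed at all.

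Your plan instead labels each rule by itself and tries to build a well-founded order $\aord$ from a ``closing-dependency'' relation, putting $\arul$ strictly above every rule occurring in a closing multistep for a peak with outer rule $\arul$. You correctly identify acyclicity of this relation as the crux, but this is not merely hard to verify --- it is false in general. Nothing in development closedness prevents the closing multistep from using the outer rule $\arul$ itself (for instance, with $\arul_1 \hastype \rulstr{f(a)}{g(f(a))}$ and $\arul_2 \hastype \rulstr{a}{a}$, the outer--inner critical pair for outer rule $\arul_1$ is closed by a development that is a single $\arul_1$-step), which immediately yields $\orda{\arul_1}{\arul_1}$ and destroys well-foundedness. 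So the strengthened condition you impose (strict domination) is both unnecessary for Theorem~\ref{thm:hot} and unattainable from the hypotheses; the proof cannot be completed along this route. The repair is simply to drop the order entirely and use the trivial labelling, which is precisely the paper's argument.
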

\begin{proof}
  A TRS is development closed if for every critical pair $\pair{\atrm}{\btrm}$
  such that $\atrm$ is obtained by an outer step, $\arsdevinva{\atrm}{\btrm}$ holds.
  Taking $\dtrs \isdefd \trsemp$ and labelling all rules the same, say by $0$,
  yields that each outer--inner or overlay critical peak is labelled as
  $\relap{\relsco{\aiarsinv{\setstr{0}}}{\aiars{\setstr{0}}}}{\atrm}{\btrm}$, and can
  be completed as $\iarsdevinva{\setstr{0}}{\atrm}{\btrm}$, 
  yielding a hot-decreasing diagram of (oi)-shape.
  We conclude by Theorem~\ref{thm:hot}. \qed
\end{proof}
The proof of Theorem~\ref{thm:hot} uses the following structural properties 
of decreasing diagrams specific to the hot-labelling. The labelling was designed so they hold.
\begin{lemma}  \label{lem:hot:structural}
  \begin{enumerate}
  \item \label{ite:hot:decreasing}
    If the peak $\iarsdevinva{\alab}{\btrm}{\iarsdeva{\blab}{\atrm}{\ctrm}}$
    is hot-decreasing, then it can be completed into a hot-decreasing
    diagram of shape
    $\iarsequa{\ordhotdwna{\alab}
            }{\btrm
  }{\iarsdeva{\blab
            }{\btrm'
  }{\iarsequa{\ordhotdwna{\alab\blab}
            }{\btrm''
}{\iarsdevinva{\alab
            }{\ctrm''
  }{\iarsequa{\ordhotdwna{\blab}}
            }{\ctrm'
            }{\ctrm
            }}}}$
    such that the $1$st-order variables in all terms in the diagram
    are contained in those of $\atrm$.
  \item \label{ite:hot:commute}
    If the multisteps $\aStp$, $\bStp$ in the peak
    $\iarsdevinva{\aStp}{\btrm}{\iarsdeva{\bStp}{\atrm}{\ctrm}}$
    are non-overlapping, then the valley
    $\relap{\relsco{\aiarsdev{\stpres{\bStp}{\aStp}}
        }{\aiarsdevinv{\stpres{\aStp}{\bStp}}
         }
        }{\btrm
        }{\ctrm
        }$ completes it into a hot-decreasing diagram.    
  \item \label{ite:hot:vertical}
  If the peak 
  $\iarsdevinva{
              }{\btrm
    }{\iarsdeva{
              }{\atrm
              }{\ctrm}}$ and vector of peaks
  $\iarsdevinva{
              }{\vec{\btrm}
    }{\iarsdeva{
              }{\vec{\atrm}
              }{\vec{\ctrm}
              }}$ 
  have hot-decreasing diagrams, so does the composition
  $\iarsdevinva{
              }{\substra{\vec{\atrmvar}}{\vec{\btrm}}{\btrm}
    }{\iarsdeva{
              }{\substra{\vec{\atrmvar}}{\vec{\atrm}}{\atrm}
              }{\substra{\vec{\atrmvar}}{\vec{\ctrm}}{\ctrm}
              }}$.
   \end{enumerate}
\end{lemma}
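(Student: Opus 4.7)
My plan is to prove the three parts together, since they form a package used by Theorem~\ref{thm:hot}: (\ref{ite:hot:decreasing}) provides hygiene, (\ref{ite:hot:commute}) handles non-overlap via residuals, and (\ref{ite:hot:vertical}) performs vertical composition of hot-decreasing diagrams.

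For Part~(\ref{ite:hot:decreasing}), I would take any hot-decreasing completion of the peak and apply a 1st-order substitution $\asub$ sending every variable occurring in the completion but outside $\atrm$ to a fixed term whose variables lie in $\atrm$. Rewriting is closed under substitution, so the image still has the prescribed five-segment shape. Hot-labels are preserved or improved: set-labels, being sets of rule symbols, are unaffected by a term-level substitution; term-labels arising from $\dtrs$-only segments are of the form $\btrm$, and since $\aiarstra{\dtrs}$ is stable under substitution, the $\aordhot$-comparisons on terms survive instantiation.

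For Part~(\ref{ite:hot:commute}), Lemma~\ref{lem:horizontal}(\ref{ite:horizontal:non-overlap}) supplies the valley $\relsco{\aiarsdev{\stpres{\bStp}{\aStp}}}{\aiarsdevinv{\stpres{\aStp}{\bStp}}}$, with the rule symbols of each residual contained in those of the corresponding original. I would verify that $\Labhota{\stpres{\bStp}{\aStp}}$ fits the decreasing shape by cases: if the residual still contains a non-$\dtrs$ rule, its set-label is $\aordmulref$-below $\Labhota{\bStp}$ because the $\aord$-maximal elements of a submultiset form an $\aordmul$-smaller-or-equal set; if only $\dtrs$-rules remain, the label becomes a term, which is $\aordhot$-below any set-label $\blab$ (when $\bStp$ had non-$\dtrs$ rules), or reached from the term-label $\blab$ by $\aiarstra{\dtrs}$ (when $\bStp$ was pure-$\dtrs$). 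The symmetric check for $\stpres{\aStp}{\bStp}$ is identical, so the valley fits the five-segment shape with the residuals placed in the middle $\ordhotdwna{\alab\blab}$-segment.

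For Part~(\ref{ite:hot:vertical}), I would first apply Part~(\ref{ite:hot:decreasing}) to the outer diagram $D$ and to each inner diagram $\ip{D}{\aidx}$ to confine their variables. Next I substitute $\vec{\atrm}$ for $\vec{\atrmvar}$ throughout $D$, turning every term and multistep into its instance, and at each descendant of a variable $\ip{\atrmvar}{\aidx}$ in the resulting conversion, paste in a copy of $\ip{D}{\aidx}$. Closure of rewriting under substitution preserves the outer five-segment shape, and labels behave as in Part~(\ref{ite:hot:decreasing}); the pasted inner conversions carry labels that hot-decreasingness forces $\aordhotref$-below the inner peak labels, which in turn are $\aordhotref$-below the outer $\alab$ and $\blab$. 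The main obstacle, and the reason behind the design of the hot-labelling, is precisely this paste-in: duplicating or erasing copies of $\ip{D}{\aidx}$ must not inflate labels, which is why the label is the \emph{set} of $\aord$-maximal non-$\dtrs$ rules (invariant under duplication) and why pure-$\dtrs$ multisteps are labelled by their source (ensuring pasted-in $\dtrs$-only fragments fall below any outer set-label). Part~(\ref{ite:hot:decreasing}) is the technical prerequisite that makes the paste-in unambiguously defined at every descendant.
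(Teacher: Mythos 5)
Parts~(\ref{ite:hot:decreasing}) and~(\ref{ite:hot:commute}) of your proposal are essentially the paper's own proof: the variable-confining substitution (the paper uses a fresh constant, but your argument via stability of $\aiarstra{\dtrs}$ under substitution works the same way), and the case analysis on how $\dtrs$- and $\setdif{\atrs}{\dtrs}$-rules are distributed over the multisteps and their residuals. One small slip in~(\ref{ite:hot:commute}): when a residual's label \emph{equals} that of the corresponding peak multistep it cannot sit in the middle $\ordhotdwna{\alab\blab}$-segment (which demands strictly smaller labels) but must occupy the $\aiarsdev{\blab}$- resp.\ $\aiarsdevinv{\alab}$-slot of the diagram; the label comparisons you computed already justify that placement, so this is only a misstatement of where the residuals go.

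Part~(\ref{ite:hot:vertical}) has a genuine gap. Your decreasingness argument rests on the claim that the inner peak labels are $\aordhotref$-below the outer labels $\alab$, $\blab$. No such relation holds: the inner peaks are independent of the outer one, so an inner multistep may contain rules $\aord$-above every rule of $\aiStp{\idxout}$, and if it is pure-$\dtrs$ its term label $\aitrm{\aidx}$ is a \emph{subterm} of the composite source $\substra{\vec{\atrmvar}}{\vec{\atrm}}{\atrm}$, not a $\aiarstra{\dtrs}$-reduct of $\atrm$ or of the composite source. What is actually needed — and what the paper establishes by a four-way case split on whether each composite multistep is empty, pure-$\dtrs$, or contains a $\setdif{\atrs}{\dtrs}$-rule — is that the labels of the \emph{composite} multisteps dominate the labels occurring in all the constituent closing conversions. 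For set labels this follows because the set of $\aord$-maxima of the union of the rule multisets is $\aordmul$-above each constituent's set of maxima; for term labels it needs closure of $\aiarstra{\dtrs}$ under contexts and substitutions. The pure-$\dtrs$ case, which you do not treat, is where the real work lies: both composite multisteps are then labelled by the term $\substra{\vec{\atrmvar}}{\vec{\atrm}}{\atrm}$, and one must show that \emph{every} step of the composite closing conversion has a source \emph{strictly} $\aiarstra{\dtrs}$-below it. The paper first normalises each constituent closing conversion so that no step with source $\atrm$ (resp.\ $\aitrm{\aidx}$) occurs in it — an argument using termination of $\dtrs$ and acyclicity of $\aiarstra{\dtrs}$ — and only then substitutes and composes; without that normalisation your paste-in can produce steps whose term labels violate the required strict decrease.
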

The proof of Theorem~\ref{thm:hot}
refines our refactored proof (see Lemma~\ref{lem:critical:pair}) 
of Huet's critical pair lemma,
by wrapping the induction on the amount of non-overlap ($\spkprt$)
between multisteps,
into an outer induction on their amount of overlap ($\spkovl$).
\begin{proof}[of Theorem~\ref{thm:hot}]
  We show that every peak 
  $\iarsdevinva{\aStp}{\btrm}{\iarsdeva{\bStp}{\atrm}{\ctrm}}$
  of multisteps $\aStp$ and $\bStp$ 
  can be closed into a hot-decreasing diagram,
  by induction on the pair 
  $\pair{\pkovl{\aStp}{\bStp}}{\pkprt{\aStp}{\bStp}}$ 
  ordered by the lexicographic product of $\snatgt$ with itself.
  We distinguish cases on whether or not $\aStp$ and $\bStp$ have overlap.
 
  If $\aStp$ and $\bStp$ do not have overlap, 
  Lemma~\ref{lem:horizontal}(\ref{ite:horizontal:non-overlap})
  yields
  $\relap{\relsco{\aiarsdev{\stpres{\bStp}{\aStp}}
        }{\aiarsdevinv{\stpres{\aStp}{\bStp}}
         }
        }{\btrm
        }{\ctrm
        }$.
  This valley completes the peak into a hot-decreasing diagram 
  by Lemma~\ref{lem:hot:structural}(\ref{ite:hot:commute}).
  
  If $\aStp$ and $\bStp$ do have overlap, then we further distinguish cases on whether
  or not the overlap is critical.

  If the overlap is not critical, then by Lemma~\ref{lem:vertical}
  the peak can be vertically decomposed into a number of peaks between
  multisteps $\aiStp{\aidx}$, $\biStp{\aidx}$ that have an
  amount of overlap that is not greater,
  $\natge{\pkovl{\aStp}{\bStp}
        }{\pkovl{\aiStp{\aidx}}{\biStp{\aidx}}
        }$, and a strictly smaller amount of non-overlap 
  $\natgt{\pkprt{\aStp}{\bStp}
        }{\pkprt{\aiStp{\aidx}}{\biStp{\aidx}}
        }$.
  Hence the I.H.\ applies and yields that each such peak can be completed into a hot-decreasing
  diagram. We conclude by vertically recomposing them
  yielding a hot-decreasing diagram by Lemma~\ref{lem:hot:structural}(\ref{ite:hot:vertical}).
 
  If the overlap is critical, then by Lemma~\ref{lem:horizontal} the peak can be
  horizontally decomposed as
     $\iarsdevinva{\stpres{\aStp}{\astp}
               }{\btrm
     }{\iarsinva{\astp
               }{\btrm'
        }{\iarsa{\bstp
               }{\atrm
     }{\iarsdeva{\stpres{\bStp}{\bstp}
               }{\ctrm'
               }{\ctrm
               }}}}$ 
  for some peak $\iarsinva{\astp}{\btrm'}{\iarsa{\bstp}{\atrm}{\ctrm'}}$ 
  of overlapping steps $\stpin{\astp}{\aStp}$ and $\stpin{\bstp}{\bStp}$,
  i.e.\ such that $\clseq{\aStp}{\clslub{\astp}{\aStp'}}$ 
  $\clseq{\bStp}{\clslub{\bstp}{\bStp'}}$ for some $\aStp'$, $\bStp'$.
  We choose $\pair{\astp}{\bstp}$ to be \emph{inner} among such overlapping pairs
  (see Definition~\ref{def:inner}),
  assuming w.l.o.g.\ that $\pospfxref{\aipos{\astp}}{\aipos{\bstp}}$
  for the root-positions $\aipos{\astp}$,$\aipos{\bstp}$ of their patterns.
  We distinguish cases on whether or not $\aipos{\astp}$ is a strict prefix of $\aipos{\bstp}$.
   
  If $\poseq{\aipos{\astp}}{\aipos{\bstp}}$, 
  then $\clseq{\astp}{\aStp}$ and $\clseq{\bstp}{\bStp}$ 
  by Proposition~\ref{prop:inner:overlay},
  so the peak \emph{is} overlay, from which we conclude
  since such peaks are hot-decreasing by assumption.
  
  Suppose $\pospfx{\aipos{\astp}}{\aipos{\bstp}}$.
  We will construct a hot-decreasing diagram $D$ for the peak 
  $\iarsdevinva{\aStp}{\btrm}{\iarsdeva{\bStp}{\atrm}{\ctrm}}$
  out of several smaller such diagrams
  as illustrated in Figure~\ref{fig:outer:inner}, 
  using the multipattern
  $\acls \isdefd \clslub{\subap{\sstpsrc}{\aStp}}{\subap{\sstpsrc}{\bstp}}$
  as a basic building block;
  it has as patterns those of $\aStp'$ and the join of the patterns of $\astp$, $\bstp$.
  To make $\acls$ explicit, unfold
  $\aStp$ and $\bStp$ to let-expressions
  $\clslet{\vec{\amtrmvar}}{\vec{\rula{\vec{\atrmvar}}}}{\amtrm}$ respectively
  $\clslet{\vec{\bmtrmvar}}{\vec{\rulb{\vec{\btrmvar}}}}{\bmtrm}$,
  for rules of shapes 
  $\irula{\aidx}{\vec{\aitrmvar{\aidx}}} \hastype \rulstr{\ailhs{\aidx}}{\airhs{\aidx}}$ and
  $\irulb{\bidx}{\vec{\bitrmvar{\bidx}}} \hastype \rulstr{\bilhs{\bidx}}{\birhs{\bidx}}$.
  We let $\vec{\amtrmvar} = \vec{\amtrmvar}'\amtrmvar$
  and $\vec{\bmtrmvar} = \vec{\bmtrmvar}'\bmtrmvar$ be
  such that $\amtrmvar$ and $\bmtrmvar$ are the $2$nd-order variables  
  corresponding to $\stpin{\astp}{\aStp}$ and $\stpin{\bstp}{\bStp}$
  for rules 
  $\rula{\vec{\atrmvar}} \hastype \rulstr{\alhs}{\arhs}$ and
  $\rulb{\vec{\btrmvar}} \hastype \rulstr{\blhs}{\brhs}$.
  By the choice of $\pair{\astp}{\bstp}$ as inner,
  $\subap{\sstpsrc}{\astp}$ is the \emph{unique} pattern in $\subap{\sstpsrc}{\aStp}$ overlapping
  $\subap{\sstpsrc}{\bstp}$.
  As a consequence we can write $\acls$ as
  $\clslet{\vec{\amtrmvar}'\cmtrmvar
         }{\vec{\alhs}'\hat{\atrm}
         }{\cmtrm
         }$, for some pattern $\hat{\atrm}$, the 
  join of the patterns of $\astp$,$\bstp$, such that
  $\asub$ maps $\cmtrmvar$ to a term of shape $\mtrmvara{\vec{\bilhs{\bstp}}}$ 
  as $\astp$ is the outer step,
  and $\bsub$ maps it to a term of shape $\ctxa{\mtrmvarb{\vec{\ailhs{\astp}}}}$,\footnote{%
$\actx$ is a prefix of the left-hand side $\alhs$ of $\arul$. 
For instance, for a peak from $f(g(a))$ between
$\arul \hastype \rulstr{f(g(a))}{\ldots}$ and 
$\brul \hastype \rulstr{g(x)}{\ldots}$, 
$\cmtrmvar$ is mapped by $\asub$ to $\mtrmvara{}$ and 
by $\bsub$ to $f(\mtrmvarb{a})$.}
  where $\asub$, $\bsub$ witness $\clsle{\subap{\sstpsrc}{\aStp},\subap{\sstpsrc}{\bstp}}{\acls}$.  
  That the other $2$nd-order variables are $\vec{\amtrmvar}'$ 
  follows by $\asub$ being the identity on them (their patterns do not overlap $\bstp$),
  and that these are bound to the patterns $\vec{\alhs}'$ by 
  $\bsub$ mapping them to $1$st-order terms (only $\cmtrmvar$ can be mapped to a non-$1$st-order term).
   \begin{figure}[t]
     \def\afig{$\aStp$}
     \def\bfig{$\astp$}
     \def\cfig{$\stpres{\aStp}{\astp}$}
     \def\dfig{$\bStp$}
     \def\efig{$\bstp$}
     \def\ffig{}
     \def\gfig{$\rlft{\hat{\bStp}}$}
     \def\hfig{}
     \def\ifig{}
     \def\jfig{$\clft{\aStp'}{\hat{\aStp}}$}
     \def\kfig{$\stpres{\bStp}{\bstp}$}
     \def\lfig{$\hat{\bStp'}$}
     \def\mfig{$\hat{\aStp'}$}
     \def\nfig{$$}    
     \def\stfig{$\ast$}
     \def\dafig{}
     \def\dbfig{$D'$}
     \def\ihfig{$D_{\text{IH}}$}
      \begin{center}
      \scalebox{0.45}{\input{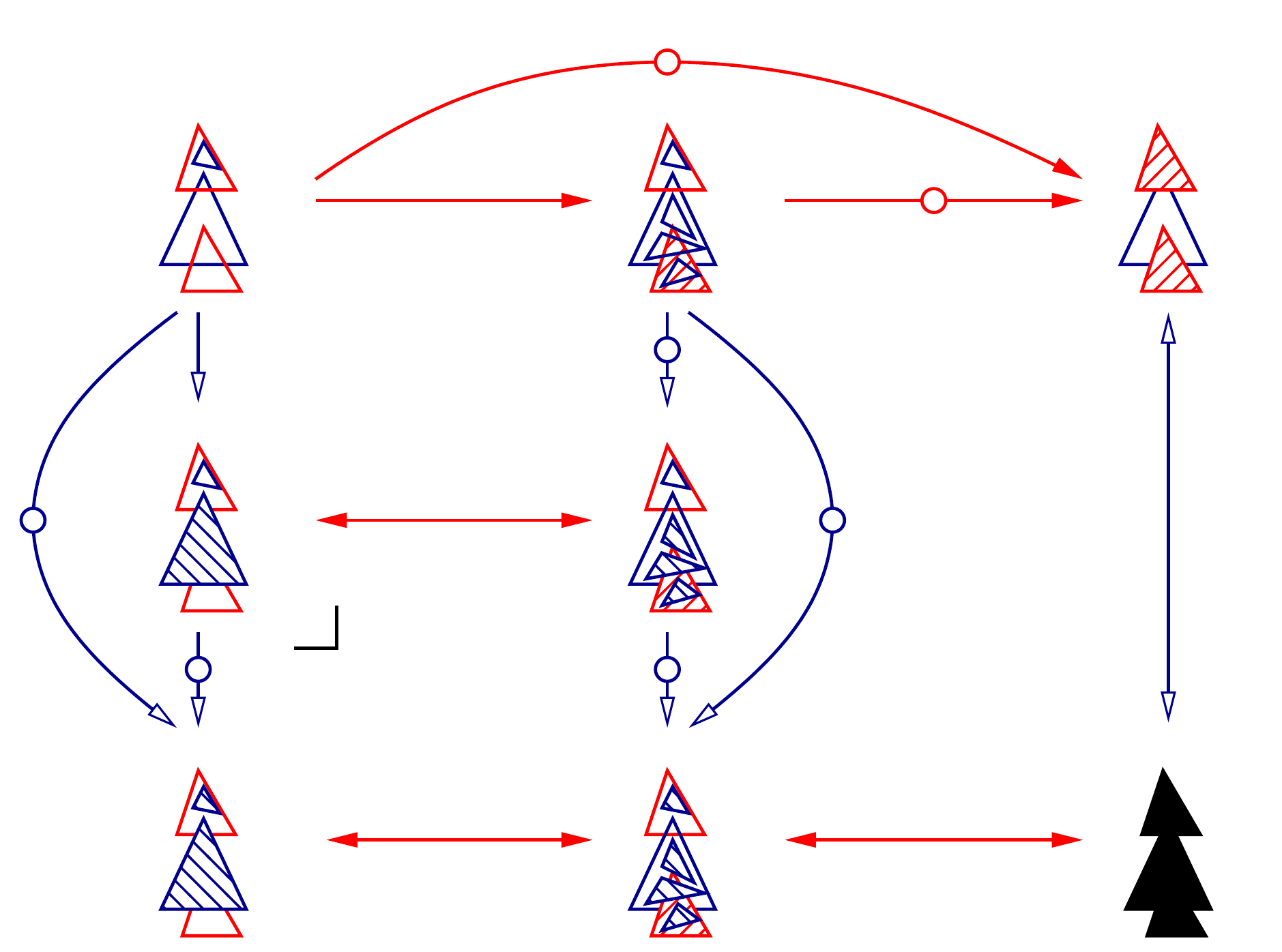_t}}
      \end{center}
       \caption{Outer--inner critical peak construction} 
          \label{fig:outer:inner}
      \end{figure}    
  
  We start with constructing a hot-decreasing diagram $\hot{D}$ 
  for the \emph{critical} peak 
  $\iarsinva{\hat{\astp}}{\hat{\btrm}}{\iarsa{\hat{\bstp}}{\hat{\atrm}}{\hat{\ctrm}}}$ 
  encompassed by the peak between $\astp$ and $\bstp$, as follows.
  We set $\hat{\astp}$ and $\hat{\bstp}$ to 
  $\clslet{\amtrmvar}{\rula{\vec{\atrmvar}}}{\suba{\mtrmvarc{\vec{\ctrmvar}}}}$ respectively
  $\clslet{\bmtrmvar}{\rulb{\vec{\btrmvar}}}{\subb{\mtrmvarc{\vec{\ctrmvar}}}}$.
  This yields a peak as desired, which 
  is outer--inner as $\pospfx{\aipos{\hat{\astp}}}{\aipos{\hat{\bstp}}}$
   by $\pospfx{\aipos{\astp}}{\aipos{\bstp}}$,
  and critical by Lemma~\ref{lem:join:join},
  hence by the hot-decreasingness assumption, 
  it can be completed into a hot-decreasing diagram $\hat{D}$ by a conversion of (oi)-shape:
     $\iarsequa{\ordhotdwna{\Labhota{\hat{\astp}}}
              }{\hat{\btrm}
 }{\iarsdevinva{\ordhotrefdwna{\Labhota{\hat{\astp}}}
              }{\hat{\dtrm}
              }{\hat{\ctrm}
              }}$.
  Below we  refer to its conversion and multistep as $\hat{\bStp}$ and $\hat{\aStp}$.
Based on $\hat{D}$ we construct a hot-decreasing 
    diagram $D'$ (Figure~\ref{fig:outer:inner}, left) for the peak 
    $\iarsdevinva{\aStp
                }{\btrm
         }{\iarsa{\bstp
               }{\atrm
               }{\ctrm'
               }}$
    by constructing a conversion 
    $\rlft{\hat{\bStp}} \hastype
     \arsequa{\btrm
            }{\dtrm''
            }$ and 
    a multistep 
    $\clft{\aStp'}{\hat{\aStp}} \hastype 
     \arsdeva{\ctrm'}{\dtrm''}$,
    with their composition (reversing the latter) of (oi)-shape.
        
    The conversion $\rlft{\hat{\bStp}} \hastype
     \arsequa{\btrm
            }{\dtrm''
            }$
    is constructed by
      lifting the closing conversion $\hat{\bStp}$ of the diagram $\hat{D}$ back into  $\acls$.
    Formally, for any multistep 
    $\hat{\cStp}$ 
    given by
    $\clslet{\vec{\hat{\cmtrmvar}}
           }{\vec{\rulc{\vec{\dtrmvar}}}
           }{\hat{\cmtrm}
           }$ for rules $\irulc{\cidx}{\vec{\ditrmvar{\cidx}}}$,
   occurring anywhere in $\hat{\bStp}$ ,
   we define its \emph{lifting} $\rlft{\hat{\cStp}}$ to be
   $\clslet{\vec{\hat{\cmtrmvar}}
           }{\vec{\rulc{\vec{\dtrmvar}}}
           }{\substra{\vec{\amtrmvar}',\cmtrmvar}{\vec{\arhs}',\hat{\cmtrm}}{\cmtrm}
           }$.
    That is, we update $\acls$
    by substituting\footnote{%
For this to be a valid $2$nd-order substitution, the $1$st-order variables of $\hat{\cStp}$ ($\hat{\cmtrm}$)
must be contained in those of $\hat{\atrm}$, which we may assume by
Lemma~\ref{lem:hot:structural}(\ref{ite:hot:decreasing}).}
   both $\hat{\cStp}$ (for $\cmtrmvar$, instead of binding that to $\hat{\atrm}$)
   and the \emph{right}-hand sides $\vec{\arhs}'$ in its body.
   Because right-hand sides $\vec{\arhs}$ need not be linear, 
   the resulting proofclusters may have to be linearised (by replicating let-bindings) first to obtain multisteps.
   This extends to terms $\etrm$ by $\rlft{\etrm} \isdefd \algtrm{\rlft{(\clslet{}{}{\etrm})}}$.
   That this yields multisteps and terms that connect into a conversion  
        $\iarsequa{\rlft{\hat{\bStp}}
              }{\trmeq{\btrm}{\rlft{\hat{\btrm}}}
              }{\trmeq{\rlft{\hat{\dtrm}}}{\dtrm''}
              }$
   as desired follows by \emph{computation}.
   E.g., 
      $\clseq{\btrm
     }{\clseq{\substra{\vec{\amtrmvar}',\amtrmvar}{\vec{\arhs}',\arhs}{\amtrm}
     }{\clseq{\substra{\vec{\amtrmvar}',\cmtrmvar}{\vec{\arhs}',\hat{\btrm}}{\cmtrm}
            }{\rlft{\hat{\btrm}}
            }}}$
    using that $\asub$ witnesses $\clsle{\subap{\sstpsrc}{\aStp}}{\acls}$ so that
    $\clseq{\amtrm}{\suba{\cmtrm}}$  
    and
    $\trmeq{\hat{\btrm}
          }{\algtrm{\clslet{\amtrmvar}{\arhs}{\suba{\mtrmvarc{\vec{\ctrmvar}}}}}
          }$.
    That the labels in $\rlft{\hat{\bStp}}$ are strictly below $\Labhota{\aStp}$
    follows for \emph{set}-labels from that lifting clearly does not introduce rule symbols
    and from that labels of rule symbols in $\hat{\bStp}$ are, by assumption, strictly below the 
    label of the rule $\arul$ of $\astp$. In case $\aStp$ is \emph{term}-labelled, by $\atrm$, it follows
    from closure of $\aiars{\dtrs}$-reduction under lifting (which also contracts $\aStp'$).   
    
  The multistep $\clft{\aStp'}{\hat{\aStp}} \hastype \arsdeva{\ctrm'}{\dtrm''}$
  is the \emph{combination} of the multisteps $\aStp'$ (the redex-patterns in $\aStp$ other than $\astp$) 
  and $\hat{\aStp}$, lifting the latter into $\acls$.
  For $\hat{\aStp} \hastype \arsdeva{\hat{\ctrm}}{\hat{\dtrm}}$ given by
  $\clslet{\vec{\hat{\amtrmvar}}}{\vec{\funap{\hat{\arul}}{\hat{\atrmvar}}}}{\hat{\amtrm}}$,
  it is defined as
  $\clslet{\vec{\amtrmvar}'\vec{\hat{\amtrmvar}}
         }{\vec{\rula{\vec{\atrmvar}}'}\vec{\funap{\hat{\arul}}{\vec{\hat{\atrmvar}}}}
         }{\substra{\cmtrmvar}{\hat{\amtrm}}{\cmtrm}
         }$.
  Per construction it
  only contracts rules in $\aStp'$, $\hat{\aStp}$, 
  so has a label in $\ordhotrefdwna{\Labhota{\aStp}}$
  by $\clseq{\aStp}{\clslub{\astp}{\aStp'}}$ and the 
  label of $\hat{\aStp}$ is in $\ordhotrefdwna{\Labhota{\hat{\astp}}}$ by the (oi)-assumption.
  That $\clft{\aStp'}{\hat{\aStp}} \hastype \arsdeva{\ctrm'}{\dtrm''}$
  follows again by \emph{computation}, e.g.\ 
   $\trmeq{\algtrm{\clslet{\vec{\amtrmvar}'\vec{\hat{\amtrmvar}}
                         }{\vec{\arhs}'\vec{\hat{\arhs}}
                         }{\substra{\cmtrmvar}{\hat{\amtrm}}{\cmtrm}
                         }}
  }{\trmeq{\substra{\vec{\amtrmvar}',\cmtrmvar
                  }{\vec{\arhs}',\substra{\vec{\hat{\amtrmvar}}}{\vec{\hat{\arhs}}}{\hat{\amtrm}}
                  }{\cmtrm
                  }
  }{\trmeq{\substra{\vec{\amtrmvar}',\cmtrmvar}{\vec{\arhs}',\hat{\dtrm}}{\cmtrm}
  }{\trmeq{\rlft{\hat{\dtrm}}
         }{\dtrm''
         }}}}$.

  Finally, applying the I.H.\ to the peak 
  $\iarsdevinva{\clft{\aStp'}{\hat{\aStp}}
              }{\dtrm''
    }{\iarsdeva{\stpres{\bStp}{\bstp}
              }{\ctrm'
              }{\ctrm
              }}$ yields some hot-decreasing diagram $D_{\text{IH}}$ (Figure~\ref{fig:outer:inner}, right).
  Prefixing $\rlft{\hat{\bStp}}$ to its closing conversion between $\dtrm''$ and $\ctrm$,
  then closes the original peak 
    $\iarsdevinva{\aStp
               }{\btrm
     }{\iarsdeva{\bStp
               }{\atrm
               }{\ctrm
               }}$ into a hot-decreasing diagram $D$,
  because 
  labels of steps in $\rlft{\hat{\bStp}}$ are in $\ordhotdwna{\Labhota{\aStp}}$, 
    $\ordhotrefa{\Labhota{\aStp}
               }{\Labhota{\clft{\aStp'}{\hat{\aStp}}}
               }$ as seen above, and
     $\ordhotrefa{\Labhota{\bStp}
                }{\Labhota{\stpres{\bStp}{\bstp}}
                }$.
  The I.H.\ applies since
  $\pkovl{\aStp}{\bStp} > \pkovl{(\clft{\aStp'}{\hat{\aStp}})}{(\stpres{\bStp}{\bstp})}$:
  To see this, we define
  $\cmtrm' \isdefd \substra{\vec{\cmtrmvar}'}{\vec{\alhs'}}{\cmtrm}$ and
  $\dctx' \isdefd \clslet{\vec{\hat{\amtrmvar}}
           }{\vec{\hat{\alhs}}
           }{\substra{\cmtrmvar}{\hat{\amtrm}}{\cmtrm'}
           }$ and collect needed ingredients (the joins are disjoint):\\
  $\begin{array}{l@{\quad=\quad}lclcl}
    \bctx & \subap{\sstpsrc}{\aStp} & \isdefd &
     \clslub{(\clslet{\vec{\amtrmvar'}}{\vec{\alhs}'}{\substra{\cmtrmvar}{\hat{\atrm}}{\cmtrm}}
          )}{\src{\astp}} & = & \clslub{\src{\aStp'}}{\src{\astp}}\\
    \cctx & \subap{\sstpsrc}{\bStp} & \isdefd &
    \clslub{(\clslet{\vec{\bmtrmvar}'}{\vec{\blhs}'}{\substra{\bmtrmvar}{\blhs}{\bmtrm}})
          }{\src{\bstp}
          }
     &
     = & \clslub{\src{\bStp'}}{\src{\bstp}} \\
     \bctx' & \subap{\sstpsrc}{(\clft{\aStp'}{\hat{\aStp}})} & \isdefd &
     \clslub{(\clslet{\vec{\amtrmvar}'
           }{\vec{\alhs}'
           }{\substra{\cmtrmvar}{\hat{\ctrm}}{\cmtrm}
           })}{\dctx'} \\
     \cctx' & \subap{\sstpsrc}{(\stpres{\bStp}{\bstp})} & \isdefd &
     \clslet{\vec{\bmtrmvar}'}{\vec{\blhs}'}{\substra{\bmtrmvar}{\brhs}{\bmtrm}}
   \end{array}$\\  
   Using these one may reason with sets of patterns (not let-expressions
   as $\trmneq{\atrm}{\btrm'}$; the sets are positions in both $\atrm$,$\btrm'$)
   as follows, relying on distributivity:
   \begin{equation} \label{eqn:decrease}
     \clsgt{(\clsglb{\bctx}{\cctx})
    }{\clseq{(\clsglb{\bictx{-}}{\cctx}) 
    }{\clseq{(\clsglb{\bictx{-}}{\cctx'})
    }{\clsge{(\clsglb{\bictx{+}'}{\cctx'}) 
           }{(\clsglb{\bctx'}{\cctx'})
           }}}} 
    \end{equation}
    where $\dctx$ is the singleton
    $\setstr{\setabs{\setin{\apos}{\subap{\sstpsrc}{\astp}}}{\aipos{\bstp} \npreceq \apos}}$
   having all positions in $\astp$ not below $\bstp$'s root,
    $\bictx{-} \isdefd 
     \clslub{\src{\aStp'}
           }{\dctx
           }$, and
    $\bictx{+}' \isdefd \clslub{(\setdif{\bctx'}{\dctx'})}{\dctx}$. \qed
\end{proof}

\section{Confluence by critical-pair closing systems} 
\label{sec:cpcs}

We introduce a confluence criterion based on identifying for
a term rewrite system $\atrs$ a subsystem $\dtrs$ such that
every $\atrs$-\emph{critical peak} can be \emph{closed} by means of
$\dtrs$-conversions,
rendering the rules used in the peak redundant.
\begin{definition} \label{def:cpcs}
A TRS $\dtrs$ is \emph{critical-pair closing} for a TRS
$\atrs$, if $\dtrs$ is a subsystem of $\atrs$ 
(namely $\dtrs \subseteq \atrs$)
and  $\iarsequa{\dtrs}{s}{t}$ holds for all critical
pairs $\pair{s}{t}$ of $\atrs$.
\end{definition}
We phrase the main result of this section as a preservation-of-confluence result.
We write $\to_{\btrs/\atrs}$ for
$\tto_\atrs \cdot \to_\btrs \cdot \tto_\atrs$, and if it is terminating,
$\btrs/\atrs$ is said to be \emph{(relatively) terminating}.
By $\dtrsd$ we denote the set of all duplicating rules in $\dtrs$.
\begin{theorem} \label{thm:cpcs}
  If $\dtrs$ is a critical-pair-closing system for a left-linear TRS $\atrs$
  such that $\dtrsd/\atrs$ is terminating, 
  then $\atrs$ is confluent if $\dtrs$ is confluent.
\end{theorem}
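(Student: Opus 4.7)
The plan is to apply the decreasing-diagrams criterion (Theorem~\ref{thm:dd:complete}) to $\atrs$. Since $\dtrs$ is countable and confluent, Theorem~\ref{thm:dd:complete} itself supplies a labelling $\ell_\dtrs$ of $\dtrs$-steps into a well-founded order under which $\dtrs$ is decreasing. Termination of $\dtrsd/\atrs$ yields a weight $w(t) \in \NN$ measuring the length of the longest $\to_{\dtrsd/\atrs}$-derivation from $t$. I label each $\atrs$-step $\rho\colon t \to u$ by
\[
\ell(\rho) = \begin{cases}(0,\ell_\dtrs(\rho)) & \text{if $\rho \in \dtrs$,}\\ (1,w(t),\rho) & \text{if $\rho \in \atrs \setminus \dtrs$,}\end{cases}
\]
ordered lexicographically; thus every $\dtrs$-step has a label strictly below every $\atrs \setminus \dtrs$-step, while the comparison among $\dtrs$-steps is inherited from $\ell_\dtrs$.

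For each local peak $t \leftarrow_\alpha s \to_\beta u$ of $\atrs$-steps I would build a decreasing diagram following the refactored proof of Lemma~\ref{lem:critical:pair}, inducting on the pair $\pair{\pkovl{\alpha}{\beta}}{\pkprt{\alpha}{\beta}}$. A non-overlapping peak is closed by residuals via Lemma~\ref{lem:horizontal}(\ref{ite:horizontal:non-overlap}); a non-critical overlapping peak decomposes vertically into strictly smaller peaks via Lemma~\ref{lem:vertical} and recurses; a critical peak is closed using the critical-pair-closing hypothesis, which supplies a $\dtrs$-conversion, which is then straightened to a $\dtrs$-valley by confluence of $\dtrs$. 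In the critical case, if at least one of $\alpha,\beta$ lies in $\atrs \setminus \dtrs$ then every closing step carries leading component $0$ and is automatically strictly below, making the diagram decreasing; if both lie in $\dtrs$, the closing stays inside $\dtrs$ and decreasingness is inherited from $\ell_\dtrs$.

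The main obstacle is the non-overlapping case when residuals duplicate. If the contracted step lies in $\dtrsd$, the $w$-value strictly drops and duplicated residuals therefore receive strictly smaller labels; if it lies in $\dtrs \setminus \dtrsd$, no duplication happens at all; the genuinely delicate remaining case is that of a duplicating rule in $\atrs \setminus \dtrs$, where $w$ is only weakly non-increasing. I would handle this by running the whole construction at the level of multisteps, applying Lemma~\ref{lem:horizontal} so that on each side of a non-overlapping peak a \emph{single} residual multistep appears, and lifting $\ell$ to multisteps as the multiset of single-step labels under multiset extension of the well-founded order; the hot-labelling ideas of Section~\ref{sec:hot} (in particular Lemma~\ref{lem:hot:structural}) provide the invariance under duplication needed to keep this multiset label in $\orddwna{\alpha,\beta}$. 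Once this multiset-decreasingness is verified for non-overlapping peaks and combined with the cases above, Theorem~\ref{thm:dd:complete} concludes that $\atrs$ is confluent.
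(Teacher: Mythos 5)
You have put your finger on exactly the right difficulty --- a non-overlapping peak $t \mathrel{\from_\alpha} s \to_\beta u$ in which a duplicating $\beta \in \atrs \setminus \dtrsd$ replicates the residual of an $(\atrs\setminus\dtrs)$-step $\alpha$ whose label cannot drop --- but the device you propose for it does not work. The multiset extension of a well-founded order is precisely \emph{not} invariant under duplication: $\{l,l\} \succ_\mul \{l\}$, so labelling a multistep by the multiset of the labels of its constituent steps makes the label of the residual multistep $\alpha/\beta$ strictly \emph{larger} than that of $\alpha$ whenever $w$ does not strictly decrease, and the resulting diagram is not decreasing. The hot-labelling of Section~\ref{sec:hot} achieves duplication-invariance only because it takes the \emph{set} of $\succ$-maximal \emph{rules}, whose labels do not depend on the source term; your second component $w(t)$ is source-dependent, so Lemma~\ref{lem:hot:structural} is not available for your labelling. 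A further problem arises when recomposing the sub-peaks produced by Lemma~\ref{lem:vertical}: a strict inequality $w(t') < w(t_i)$ established for a component diagram need not transfer to $w(C[\ldots t'\ldots]) < w(C[\ldots t_i\ldots])$ after substitution back into the surrounding context, whereas the paper's corresponding label component is a \emph{predecessor term} compared by $\to^+_{\dtrsd/\atrs}$, a relation that is closed under contexts and substitution.

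The paper escapes the problematic case by different means. Its main-text proof abandons pure decreasing diagrams: Lemma~\ref{lem:abstract_criterion} establishes the diamond property of the composite relation $\tto_\dtrs \cdot \arsdev_\atrs$, and its condition~1 allows a non-overlapping peak to be closed by a plain valley $\arsdev_\atrs \cdot \arsdevinv_\atrs$ with \emph{no} constraint on the labels of the closing multisteps, so duplication is harmless there; confluence of $\dtrs$ then enters essentially in turning the resulting local property into the diamond property. The appendix proof, which does use decreasing diagrams, labels $(\atrs\setminus\dtrs)$-multisteps by a triple whose \emph{first} component counts the oppositely-directed $(\atrs\setminus\dtrs)$-multisteps to its left in the ambient conversion; in a non-overlapping peak of two such multisteps both residuals see this count decrease by one, which is what makes the closing valley decreasing. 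One of these ideas (or some genuinely duplication-invariant label) is needed where your multiset label sits. The remaining cases of your argument --- the strict drop of $w$ across a $\dtrsd$-step, linear $\dtrs$-rules not duplicating, and the critical case closed by a $\dtrs$-conversion lying below any $(1,\ldots)$-label --- are sound.
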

Any left-linear TRS is critical-pair-closing for itself.
However, the power of the method relies on choosing \emph{small} $\dtrs$.
Before proving Theorem~\ref{thm:cpcs}, we illustrate it by some (non-)examples 
and give a special case.
\begin{example} \label{exa:nats} \label{ex:nats:2}
  Consider again the TRS $\atrs$ in Example~\ref{exa:nats:hot}.
  As we observed, the only critical pair originating from $\airul{4}$ and
  $\airul{1}$ is closed by
  $\relsco{\aiars{\airul{1}}
 }{\relsco{\aiars{\airul{3}}
 }{\relsco{\aiars{\airul{6}}
         }{\aiarsinv{\airul{6}}
         }}}$. 
  So the subsystem $\dtrs \isdefd \{ \arul_1, \arul_3, \arul_6 \}$
  is a critical-pair-closing system for $\atrs$.
  As all $\dtrs$-rules are linear,  $\dtrsd/\atrs$ is vacuously terminating.
  Thus, by Theorem~\ref{thm:cpcs} it is sufficient to show confluence of $\dtrs$.  
  Because $\dtrs$ has no critical pairs, the empty TRS
  $\setemp$ is a critical-pair-closing TRS for $\dtrs$.
  As $\setemp/\dtrs$ is terminating, confluence of $\dtrs$ follows from
  that of $\setemp$, which is trivial.
\end{example}
Observe how confluence was shown by successive applications of the theorem.
\begin{remark}
  In our experiments (see Section~\ref{sec:implementation}),
  $\frac{3}{4}$ of the TRSs proven confluent by means
  of Theorem~\ref{thm:cpcs} used more than $1$ iteration, 
  with the maximum number of iterations being $6$.
  For countable ARSs (see Corollary~\ref{cor:cpcs:ars} below) $1$ iteration suffices, 
  which can be seen by setting $\dtrs$ to the spanning 
  forest obtained by Lemma~\ref{lem:countable:confluent:spanning}.
  This provides the intuition underlying rule specialisation 
   in Example~\ref{exa:half:levy:1} below.
\end{remark}
\begin{example} \label{exa:half:levy}
  Although confluent, the TRS $\atrs$ in Example~\ref{exa:half:levy:hot}
  does not have any confluent critical-pair-closing subsystem
  $\dtrs$ such that $\dtrsd/\atrs$ is terminating,
  not even $\atrs$ itself:
  Because of $b$ being in normal form in the critical pair induced by
  $\iarsinva{\arul_1}{b}{\iarsa{\arul_2}{f(a,a)}{f(a,c)}}$,
  any such subsystem must contain $\arul_4$,
  as one easily verifies, but $\arul_4$ is both duplicating and non-terminating (looping).
  
  Note that the termination condition of $\dtrsd/\atrs$ cannot be
  omitted from Theorem~\ref{thm:cpcs}.  Although the TRS $\atrs'$ in
  Example~\ref{exa:half:levy:hot} is not confluent, it admits the confluent
  critical-pair-closing system $\{ \arul_1, \arul_3, \arul_4 \}$.
\end{example}
\begin{remark}
  The example is taken from~\cite{Hiro:Midd:11} where it was used
  to show that decreasingness of \emph{critical} peaks
  need not imply that of \emph{all} peaks, for rule labelling.
  That example, in turn was adapted from L\'evy's TRS in~\cite{Huet:80} 
  showing that strong confluence need not imply confluence for left-linear TRSs.
\end{remark}
\begin{example}
  For \emph{self-joinable} rules, i.e.\ rules that are self-overlapping and 
  whose critical pairs need further applications of the rule itself to join,
  Theorem~\ref{thm:cpcs} is not helpful since 
  the critical-pair-closing system $\dtrs$ then contains the rule itself.
  Examples of self-joinable rules are associativity 
  $\rulstr{(x \cdot y)\cdot z}{x\cdot(y \cdot z)}$ and self-distributivity
  $\rulstr{(x \cdot y)\cdot z}{(x\cdot z)\cdot(y \cdot z)}$, 
  with confluence of the latter being known to be hard (currently no tool
  can handle it automatically).\footnote{See problem 127 of
\url{http://cops.uibk.ac.at/results/?y=2019-full-run&c=TRS}.}
\end{example}
The special case we consider is that of TRSs that \emph{are} ARSs,
i.e.\ where all function symbols are nullary.
The identification is justified by that any ARS 
in the standard sense~\cite{Ohle:02,Tere:03} can be presented as 
$\aiars{\atrs}$ for the TRS $\atrs$ having a nullary symbol for each
object, and a rule for each step of the ARS. 
Since ARSs have no duplicating rules, Theorem~\ref{thm:cpcs} specialises to
the following result.
\begin{corollary} \label{cor:cpcs:ars}
  If $\dtrs$ is critical-pair-closing for ARS $\atrs$,
  $\atrs$ is confluent if $\dtrs$ is.
\end{corollary}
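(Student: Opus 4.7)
The plan is to derive Corollary~\ref{cor:cpcs:ars} as a direct instance of Theorem~\ref{thm:cpcs} by verifying that the two side-conditions of that theorem---left-linearity of $\atrs$ and termination of $\dtrsd/\atrs$---are automatically satisfied in the ARS setting. The point is that when $\atrs$ presents an ARS, every function symbol of its signature is nullary, so no rule's left- or right-hand side can contain variables at all.

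First, I would spell out the standard presentation recalled just before the corollary: for an abstract rewrite system $\arsa$ one takes a signature consisting of one nullary symbol per object and one rule $\rulstr{\aobj}{\bobj}$ for each step $\arsa{\aobj}{\bobj}$; by construction $\aiars{\atrs}$ coincides with $\arsa$ on the set of (ground, variable-free) terms. Since left-hand sides of $\atrs$ (and hence of any subsystem $\dtrs$) have no variables, every $\atrs$-rule is trivially left-linear, and thus the left-linearity hypothesis of Theorem~\ref{thm:cpcs} is met.

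Next I would observe that a rule $\rulstr{\alhs}{\arhs}$ is duplicating precisely when some variable occurs more times in $\arhs$ than in $\alhs$. Since right-hand sides of $\dtrs$-rules contain no variables at all, no $\dtrs$-rule can be duplicating, so $\dtrsd = \trsemp$. Consequently $\to_{\dtrsd/\atrs}$ is the empty relation, which is vacuously well-founded, so the relative termination hypothesis of Theorem~\ref{thm:cpcs} is also met.

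With these two observations in place, the corollary follows by a single application of Theorem~\ref{thm:cpcs}: assuming $\dtrs$ is critical-pair-closing for $\atrs$ and $\dtrs$ is confluent, the theorem yields confluence of $\atrs$. There is no genuine obstacle here; the entire content of the corollary is the observation that the ARS case forces the extra hypotheses of Theorem~\ref{thm:cpcs} to hold trivially, so that the only remaining requirement is confluence of the chosen critical-pair-closing subsystem $\dtrs$.
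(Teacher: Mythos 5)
Your proposal is correct and matches the paper's own justification exactly: the paper likewise notes that in the ARS presentation all symbols are nullary, so rules are vacuously left-linear and $\dtrsd = \trsemp$ makes $\dtrsd/\atrs$ trivially terminating, whence Theorem~\ref{thm:cpcs} applies directly.
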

\begin{example} \label{exa:cpcs:ars}
  Consider the TRS $\atrs$ given by 
  $\arsa{\cobj}{\arsa{\aobj'}{\arsa{\aobj}{\bobj}}}$ and
  $\arsa{\aobj}{\arsa{\aobj'}{\cobj}}$.
  It is an ARS having the critical-pair-closing system $\dtrs$ given by the first part
  $\arsa{\cobj}{\arsa{\aobj'}{\arsa{\aobj}{\bobj}}}$.
  Since $\dtrs$ is orthogonal it is confluent by Corollary~\ref{cor:cpcs:ars},
  so $\atrs$ is confluent by the same corollary.
  In general, a confluent ARS may have many non-confluent
  critical-pair-closing systems. Requiring local confluence is no impediment to that:
  The subsystem $\dtrs'$ of $\atrs$ obtained by removing $\arsa{\cobj}{\aobj'}$ 
  allows to join all $\atrs$-critical peaks, but is not confluent;
  it simply is Kleene's example~\cite[Figure~1.2]{Tere:03} showing that 
  local confluence need not imply confluence.
\end{example}
For $\dtrsd/\atrs$ to be vacuously terminating it is sufficient that all
rules are linear.%
\begin{example} \label{ex:running}
  Consider the linear TRS $\atrs$ consisting of
  $\rho_1\colon f(x) \to f(f(x))$,
  $\rho_2\colon f(x) \to g(x)$, and
  $\rho_3\colon g(x) \to f(x)$.
  The subsystem $\dtrs = \{ \rho_1, \rho_3 \}$ is critical-pair-closing 
  and has no critical pairs, so $\atrs$ is confluent.
\end{example}
From the above it is apparent that, whereas
usual redundancy-criteria are based on \emph{rules} 
being redundant, the theorem gives a sufficient criterion 
for \emph{peaks} of steps being redundant. 
This allows one to leverage the power of extant confluence methods.
Here we give a generalisation of Huet's strong closeness
theorem~\cite{Huet:80} as a corollary of Theorem~\ref{thm:cpcs}.
\begin{definition} \label{def:strongly:closed}
  A TRS $\atrs$ is \emph{strongly closed}~\cite{Huet:80} if
  $s \tto_\atrs^{} \cdot \fromBT{\atrs}{=} t$ and
  $s \to_\atrs^= \cdot \tfromBT{\atrs}{} t$ 
  hold
  for all critical pairs
  $\pair{s}{t}$.
\end{definition}
\begin{corollary} \label{cor:gsc}
A left-linear TRS $\atrs$ is confluent if there exists a
critical-pair-closing system $\dtrs$ for $\atrs$ such that $\dtrs$ is
linear and strongly closed.
\end{corollary}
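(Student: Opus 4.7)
The plan is a direct application of Theorem~\ref{thm:cpcs}, using Huet's classical strong closedness theorem to discharge its confluence premise on the subsystem $\dtrs$. I need to verify the two hypotheses of Theorem~\ref{thm:cpcs}: (i) relative termination of $\dtrsd/\atrs$, and (ii) confluence of $\dtrs$.

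For (i), since $\dtrs$ is linear it is in particular right-linear; under the standard convention that $\mathsf{vars}(\arhs)\subseteq\mathsf{vars}(\alhs)$ for each rule $\rulstr{\alhs}{\arhs}$, right-linearity rules out any variable duplication. Hence $\dtrsd=\varnothing$, the relation $\to_{\dtrsd/\atrs}$ is empty, and termination holds vacuously.

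For (ii), I invoke Huet's classical strong closedness theorem: a left-linear strongly closed TRS is confluent. Since $\dtrs$ is linear (a fortiori left-linear) and strongly closed by hypothesis, $\dtrs$ is confluent. Here it is important to read ``$\dtrs$ is strongly closed'' as in Definition~\ref{def:strongly:closed} instantiated at $\dtrs$, i.e.\ the condition is required for the critical pairs of $\dtrs$ itself (not of $\atrs$); the closedness of $\atrs$-critical pairs is already encoded in $\dtrs$ being a critical-pair-closing system for $\atrs$.

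With both hypotheses verified, Theorem~\ref{thm:cpcs} applied to the critical-pair-closing subsystem $\dtrs \subseteq \atrs$ yields confluence of $\atrs$. There is no real obstacle in the argument; the methodological point is that the corollary strictly generalises Huet's original strong closedness result, which corresponds to the special case $\dtrs = \atrs$: instead of demanding that all of $\atrs$ be linear and strongly closed, it suffices to exhibit \emph{some} linear strongly closed subsystem $\dtrs$ capable of joining every $\atrs$-critical peak.
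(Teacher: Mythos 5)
Your proof is correct and is exactly the argument the paper intends: linearity of $\dtrs$ forces $\dtrsd=\varnothing$ so that $\dtrsd/\atrs$ is vacuously terminating, Huet's theorem gives confluence of $\dtrs$, and Theorem~\ref{thm:cpcs} then yields confluence of $\atrs$. One small caution: Huet's classical result requires $\dtrs$ to be \emph{linear} (left- \emph{and} right-linear), not merely left-linear as you phrase it when invoking it, though this is harmless here since you apply it to the linear system $\dtrs$.
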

\begin{example} \label{exa:strongly:closed}
  Consider the linear TRS $\atrs$:
  \[\begin{array}{lrcl@{\quad}lrcl}
    \arul_1\hastype & h(f(x,y))    & \srulstr & f(h(r(x)),y) &
    \arul_2\hastype & f(x,k(y,z))  & \srulstr & g(p(y),q(z,x)) \\
    \arul_3\hastype & h(q(x,y))    & \srulstr & q(x,h(r(y))) &
    \arul_4\hastype & q(x,h(r(y))) & \srulstr & h(q(x,y)) \\
    \arul_5\hastype & h(g(x,y))    & \srulstr & g(x,h(y)) \\
    \arul_6\hastype & a(x,y,z)     & \srulstr & h(f(x,k(y,z))) &
    \arul_7\hastype & a(x,y,z)     & \srulstr & g(p(y),q(z,h(r(x)))) 
  \end{array}\]
  $\dtrs = \{ \arul_1,\ldots,\arul_5 \}$ is critical-pair-closing for $\atrs$,
  since the $\atrs$-critical peak between $\arul_6$ and $\arul_7$ can be $\dtrs$-closed:
  $\iarsa{\arul_1
        }{h(f(x,k(y,z)))
 }{\iarsa{\arul_2
        }{f(h(r(x)),k(y,z))
        }{g(p(y),q(z,h(r(x))))
        }}$.
  Because $\dtrs$ is strongly closed and also linear, 
  confluence of $\atrs$ follows by Corollary~\ref{cor:gsc}.
\end{example}
\begin{remark}
  Neither of the TRSs in Examples~\ref{ex:running} and~\ref{exa:strongly:closed}
  is strongly closed. The former not, because
  $f(f(x)) \tto_\atrs \cdot \fromBT{\atrs}{=} g(x)$ does not hold, 
  and the latter not because
  $g(p(y),q(z,h(r(x)))) \tto_\atrs \cdot \fromBT{\atrs}{=} h(f(x,k(y,z)))$ does not hold.
\end{remark}
Having illustrated the usefulness of Theorem~\ref{thm:cpcs}, we now present
its proof.
In TRSs there are \emph{two} types of peaks: overlapping
and non-overlapping ones.
As Example~\ref{exa:half:levy} shows, confluence criteria only addressing
the former need not generalise from ARSs to TRSs. 
Note that one of the peaks showing non-confluence of $\atrs'$, 
the one between $\arul_2$ and $\arul_3$ ($\arul_4$), is non-overlapping.
Therefore, restricting to a subsystem without $\arul_2$ 
can only provide a partial analysis of confluence of $\atrs'$;
the (non-overlapping) interaction between $\dtrs$ and $\setdif{\atrs}{\dtrs}$ 
is not accounted for, and indeed that is fatal here.
The intuition for our proof is that the problem is that
the number of such interactions is unbounded due to the presence
of the duplicating and non-terminating rule $\arul_3$ (and $\arul_4$) in $\dtrs$,
and that requiring termination of $\dtrsd/\atrs$ bounds that number
and suffices to regain confluence.
This is verified by showing that $\tto_\dtrs \cdot \arsdev_\atrs$ has 
the diamond property.
\begin{lemma}
\label{lem:abstract_criterion}
Let ${\to_\AA} = \bigcup_{a \in I} {\to_a}$ be a relation equipped with a
well-founded order $\succ$ on a label set $I$, and let $\to_\BB$ be a
confluent relation with ${\to_\BB} \subseteq {\tto_\AA}$.  The relation
${\to_\AA}$ is confluent if 
\begin{enumerate}
\item
\(\relsco{\aiarsinv{a}}{\aiars{b}}
  \subseteq
  (\relsco{\aiars{\AA}}{\aiarsinv{\AA}}) \cup
  \bigcup_{\{ a,b \} \succ_\mul \{ a',b' \}}
  (\relsco{\aiarsinv{a'}}{\relsco{\aiarsequ{\BB}}{\aiars{b'}}})
\)
for all 
$a, b \in I$; and
\item
\(
{\relsco{\aiarsinv{a}}{\aiars{\BB}}}
\subseteq
 \relsco{(\aiarstre{\BB}}{\aiarsinv{a})} \cup
 \bigcup_{a \succ a'} (\relsco{\aiarstre{\BB}}{\relsco{\aiarsinv{a'}}{\aiarsequ{\BB}}})
\)
for all 
$a \in I$.
\end{enumerate}
Here $\succ_\mul$ stands for the multiset extension of $\succ$. 
\end{lemma}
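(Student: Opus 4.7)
The plan is to prove confluence of $\to_\AA$ by applying the decreasing diagrams technique (Theorem~\ref{thm:dd:complete}) to the union $\to_{\AA \cup \BB}$, which suffices because every $\AA$-conversion is an $(\AA \cup \BB)$-conversion and every $(\AA \cup \BB)$-valley can be expanded into an $\AA$-valley via $\to_\BB \subseteq \tto_\AA$. I would extend the label set $I$ with a fresh minimum $\perp$ (so that $a \succ \perp$ for every $a \in I$), labelling each $\to_a$-step ($a \in I$) by $a$ and each $\to_\BB$-step by $\perp$.

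Under this labelling I would verify decreasing diagrams for the three peak types. For an $\AA$-$\AA$ peak $s \aiarsinv{a} u \aiars{b} t$, condition~(1) yields either an immediate diamond closure $\aiars{\AA} \cdot \aiarsinv{\AA}$ or a closure $\aiarsinv{a'} \cdot \aiarsequ{\BB} \cdot \aiars{b'}$ with $\{a,b\} \succ_\mul \{a',b'\}$; in the latter case, the $\BB$-steps (labelled $\perp$) fit the middle segment $\aiarsequ{\orddwna{a,b}}$ and the multiset inequality places the residual $\AA$-labels $a',b'$ into the correct peripheral segments of the decreasing shape. For a mixed peak $\aiarsinv{a} \cdot \aiars{\BB}$, condition~(2) directly produces a decreasing closure, since any residual $\aiarsinv{a}$ or $\aiarsinv{a'}$ with $a \succ a'$ is surrounded by $\BB$-steps of label $\perp \prec a$. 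For $\BB$-$\BB$ peaks, confluence of $\to_\BB$ provides a valley, which I would make decreasing by refining $\perp$ with a subordinate decreasing labelling for $\BB$ supplied by the converse of Theorem~\ref{thm:dd:complete} in the countable setting.

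The main obstacle is the diamond case of condition~(1): the closure $\aiars{\AA} \cdot \aiarsinv{\AA}$ has arbitrary $\AA$-labels that need not respect the decreasing shape of the peak. I would sidestep this by moving from pure decreasing diagrams to a direct well-founded induction on the multiset $\{a,b\}$ (the multiset extension $\succ_\mul$ being well-founded). In that induction, Case~1 of condition~(1) furnishes a valley immediately; in Case~2, $\BB$-confluence refactors $\aiarsequ{\BB}$ as $\tto_\BB \cdot \tfrom_\BB$, yielding $\aiarsinv{a'} \cdot \tto_\BB \cdot \tfrom_\BB \cdot \aiars{b'}$. Iterated applications of condition~(2) then commute $\aiarsinv{a'}$ past $\tto_\BB$ (and symmetrically $\aiars{b'}$ past $\tfrom_\BB$), with intermediate $\BB$-conversions re-straightened via $\BB$-confluence, exposing an inner peak $\aiarsinv{a''} \cdot \aiars{b''}$ sandwiched between $\BB$-reductions with $\{a'',b''\} \preceq_\mul \{a',b'\} \prec_\mul \{a,b\}$, to which the induction hypothesis applies, the composed $\BB$-reductions supplying the outer $\tto_\AA$-legs. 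The principal technical point is to prove termination of this commutation process by a sub-induction on the lexicographic pair of the $\AA$-label and the length of the $\BB$-reduction being commuted past.
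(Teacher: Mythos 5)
You correctly identify the obstacle that sinks the naive decreasing-diagrams labelling: the closure $\to_\AA\cdot\from_\AA$ provided by condition~(1) carries arbitrary labels, so it need not fit the decreasing shape for the peak $\from_a\cdot\to_b$. (A secondary problem with your second paragraph: the multiset inequality $\{a,b\}\succ_\mul\{a',b'\}$ alone does not place $a',b'$ into the peripheral segments either — e.g.\ $a'=b$ with $b\not\prec a$, $b\neq a$ is permitted by $\succ_\mul$ but fits no slot of Definition~\ref{def:decreasing}.) The fatal gap, however, is in your replacement. The well-founded induction on $\{a,b\}$ that you describe proves, for each local peak $\from_a\cdot\to_b$, the existence of \emph{some} valley $\tto_\AA\cdot\tfrom_\AA$; by induction over all pairs $a,b$ this yields exactly \emph{local} confluence of $\to_\AA$, and nothing more. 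Since no termination assumption is available, local confluence does not give confluence (Kleene's example, which the paper itself recalls in Example~\ref{exa:cpcs:ars}, is of precisely the shape your conditions would tolerate as far as local joinability is concerned). Having abandoned the decreasing-diagrams discipline, you have also abandoned the only mechanism in your proposal that could globalise the local analysis, and you never supply a substitute.

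The repair is a strengthening of the induction statement rather than a new idea: prove, by induction on the lexicographic pair $(\{a,b\},m+n)$, that $\from_a\cdot\to^m_\BB\cdot{}^n\!\from_\BB\cdot\to_b\;\subseteq\;{\rightarrowtail}\cdot{\leftarrowtail}$ for the \emph{fixed} composite relation ${\rightarrowtail}\mathrel{:=}\tto_\BB\cdot\to_\AA$. Your own case analysis already delivers valleys of exactly this shape — condition~(1)'s diamond case gives $\to_\AA\cdot\from_\AA\subseteq{\rightarrowtail}\cdot{\leftarrowtail}$, and your commutation via condition~(2) only ever prepends $\tto_\BB$-legs, which ${\rightarrowtail}$ absorbs — so the combinatorics you describe (multiset decrease when a label drops, decrease of the $\BB$-distance otherwise, i.e.\ your ``sub-induction on label and length'' folded into one lexicographic measure) goes through unchanged. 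The payoff of the strengthened statement is that, together with confluence of $\to_\BB$, it yields the \emph{diamond property} of ${\rightarrowtail}$, and since $\to_\AA\subseteq{\rightarrowtail}\subseteq\tto_\AA$ this globalises to confluence of $\to_\AA$ by the standard sandwiching argument. This is exactly the paper's proof; without that strengthening your argument stops at local confluence and the lemma is not established.
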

\begin{proof}[Sketch]
Let ${\rightarrowtail} \isdefd \relsco{\aiarstre{\BB}}{\aiars{\AA}}$.
We claim that
\(
  \relsco{\aiarsinv{a}
}{\relsco{\aiarsnco{m}{\BB}
}{\relsco{\aiarsinvnce{n}{\BB}
        }{\aiars{b}
        }}}
  \subseteq
  \relsco{{\rightarrowtail}
        }{{\leftarrowtail}
        }
\)
holds for all labels $a, b$ and numbers $m, n \geqslant 0$.
The claim is shown by well-founded induction on $(\{a,b\}, m + n)$ with
respect to the lexicographic product of $\succ_\mul$ and the greater-than
order $\snatgt$ on $\NN$. 
Thus, the diamond property of
${\rightarrowtail}$ follows from the claim and confluence of $\BB$.  As
${\to_\AA} \subseteq {\rightarrowtail} \subseteq {\tto_\AA}$, we conclude
confluence of $\AA$ by e.g.~\cite[Proposition~1.1.11]{Tere:03}.
\end{proof}
\begin{proof}[of Theorem~\ref{thm:cpcs} by Lemma~\ref{lem:abstract_criterion} ]
  Let $I$ comprise pairs of a term and a natural number,
  and define $\iarsa{\pair{\atrmp}{\anat}}{\atrm}{\btrm}$ if
  $\iarstrea{\atrs}{\atrmp}{\iarsdeva{\atrs}{\atrm}{\btrm}}$ 
  with $\anat$ the maximal length of a development of the multistep,\footnote{%
By the Finite Developments Theorem lengths of such developments are finite~\cite{Tere:03}.}
  and $\aiars{\BB} \isdefd \aiars{\dtrs}$,
  in Lemma~\ref{lem:abstract_criterion}.
  As well-founded order $\succ$ on indices we
  take the lexicographic product of $\dtrsd/\atrs$ and 
  greater-than $\snatgt$.
  Henceforth the proof and its case analysis of peaks
  follows the above decreasing-diagrams-based proof.
  Because of this, we only present the interesting case, leaving the others to the reader:
  \begin{itemize}
  \item
    Suppose $\iarsinva{\pair{\atrmp}{\anat}}{\btrm}{\iarsa{\dtrs}{\atrm}{\ctrm}}$
    where the steps do not have overlap. Then 
    by Lemma~\ref{lem:horizontal}(\ref{ite:horizontal:non-overlap}),
    $\relap{\relsco{\aiarsdev{\dtrs}}{\aiarsdevinv{\atrs}}}{\btrm}{\ctrm}$,
    so $\relap{\relsco{\aiarstre{\dtrs}}{\aiarsdevinv{\atrs}}}{\btrm}{\ctrm}$.
    Distinguish cases on the type of 
    the $\dtrs$-rule employed in $\iarsa{\dtrs}{\atrm}{\ctrm}$.
    
    If the rule is duplicating, then 
    $\relap{\relsco{\aiarstre{\dtrs}}{\aiarsinv{\pair{\ctrm}{\bnat}}}}{\btrm}{\ctrm}$
    for $\bnat$ the maximal length of a development of the
    $\aiarsdev{\atrs}$-step from $\ctrm$, and condition~2 is satisfied as
    $\iarsa{\dtrsd}{\atrm}{\ctrm}$ implies
    $\pair{\atrmp}{\anat} \succeq \pair{\ctrm}{\bnat}$.
    
    If the rule is non-duplicating, then
    $\relap{\relsco{\aiarstre{\dtrs}}{\aiarsinv{\pair{\atrmp}{\anat}}}}{\btrm}{\ctrm}$
    as $\iarstrea{\atrs}{\atrmp}{\iarsa{\atrs}{\atrm}{\ctrm}}$ by assumption and
    the length of the maximal development of the residual multistep 
    does not increase
    when projecting over a linear rule.
    Again, condition~2 is satisfied. \qed
  \end{itemize}
\end{proof}

\section{Implementation and experiments} 
\label{sec:implementation}

The presented confluence techniques have been implemented in the confluence
tool \SAIGAWA version~1.12~\cite{Hiro:Klei:12}.
We used the tool to test the criteria on $432$ left-linear TRSs in
COPS~\cite{Hiro:Nage:Midd:18}
Nos.~1--1036, where we ruled out duplicated problems.  Out of $432$
systems, $224$ are known to be confluent and $173$ are non-confluent.

We briefly explain how we automated the presented techniques.  As
illustrated in Example~\ref{ex:nats:2}, 
Theorem~\ref{thm:cpcs} can be used as
a stand-alone criterion.
The condition $s \to_\atrs^* \cdot \fromBT{\atrs}{=} t$ of strong
closedness is tested by
$s \to_\atrs^{\leqslant 5} \cdot \fromBT{\atrs}{=} t$.
For a critical peak $s \from \cdot \to t$ of $\dtrs$,
hot-decreasingness is tested by
$s \tto_\dtrs \cdot \tfromB{\dtrs} t$.
For any other critical peak $s \fromB{\alab} \cdot \to t$, 
we test
the disjunction of
\(
s \to_{\ordhotdwna{\alab}}^{\leqslant 5} \cdot
\mathrel{\aiarsdevinv{\ordhotrefdwna{\alab}}} t
\)
and
\(
s \tto_\dtrs \cdot \mathrel{\aiarsdevinv{\ordhotrefdwna{\alab}}} t
\)
if it is outer--inner one, and if it is overlay, the disjunction of
\(
s \to_{\ordhotdwna{\alab}}^{\leqslant 5} \cdot
\mathrel{\aiarsdevinv{\ordhotrefdwna{\alab}}} t
\)
and $s \tto_\dtrs \cdot \tfromB{\dtrs} t$ is used.  Order constraints for
hot-labeling are solved by SMT solver Yices~\cite{Dute:14}.
For proving (relative) termination we employ the termination tool NaTT
version~1.8~\cite{Yama:Kusa:Saka:14}.
Finally, suitable subsystems $\dtrs$ used in our criteria are searched by
enumeration.

Table~\ref{tab:experiments} gives a summary of the
results.\footnote{Detailed data are available from:
\url{http://www.jaist.ac.jp/project/saigawa/19cade/}}
The tests were run on a PC equipped with Intel Core 
i7-8500Y CPU (1.5 GHz) and 16 GB memory using a timeout of $60$ seconds.
For the sake of comparison we also tested
Knuth and Bendix' theorem (\textsf{kb}),
the strong closedness theorem (\textsf{sc}), and
development closedness theorem (\textsf{dc}).
As theoretically expected, 
they are subsumed by their generalizations.
\begin{table}[tb]
\caption{Experimental results}
\label{tab:experiments}
\centering
\begin{tabular}{l@{\quad}c@{\quad}c@{\quad}c@{\quad}c@{\quad}c@{\quad}c}
\toprule
& Thm.~\ref{thm:hot} 
& Thm.~\ref{thm:cpcs}
& Cor.~\ref{cor:gsc}
& \textsf{kb}
& \textsf{dc} 
& \textsf{sc}
\\[0.5ex]
\# proved (\# timeouts)
& 101 (46)
& 81 (24)
& 94 (15)
& 45 (18)
& 34 (1)
& 62 (1)
\\
\bottomrule
\end{tabular}
\end{table}

\section{Conclusion and future work} 
\label{sec:conclusion}

We presented two methods for proving confluence of TRSs,
dubbed critical-pair-closing systems and hot-decreasingness.
We gave a lattice-theoretic characterisation of overlap.
Since many results in term rewriting, and beyond, are 
based on reasoning about overlap, which is notoriously hard~\cite{Nage:Midd:16},
we expect that formalising our characterisation 
could simplify or even enable formalising them.
We expect that both methods 
generalise to commutation, extend to
HRSs~\cite{Mayr:Nipk:98}, and can be 
strengthened by considering rule \emph{specialisations}.
\begin{example} \label{exa:half:levy:1}
  Analysing the TRS $\atrs$ of Example~\ref{exa:half:levy} 
  one observes that for closing the critical pairs
  only (non-duplicating) \emph{instances} of the duplicating rules
  $\arul_3$ and $\arul_4$ are used. Adjoining these specialisations 
  allows the method to proceed:
  Adjoining 
  $\arul_{3}(a) \hastype \rulstr{f(c,a)}{f(a,a)}$ and
  $\arul_{4}(a) \hastype \rulstr{f(a,c)}{f(a,a)}$ to $\atrs$
  yields a (reduction-equivalent) TRS
  having critical-pair-closing system
  $\{ \arul_1, \arul_{3}(a), \arul_{4}(a), \arul_5 \}$.
  Since this is a linear system without critical pairs,
  it is confluent, so $\atrs$ is as well.
\end{example}

%
%
\bibliographystyle{splncs04}
\bibliography{draft}

\ifextendedversion

\appendix

\section{Proofs omitted from or only sketched in the main text}

That having two labels suffices for the converse part of Theorem~\ref{thm:dd:complete},
is an immediate consequence of Lemma~\ref{lem:countable:confluent:spanning}:
give edges on the tree label $0$, others label $1$.
Our proof of the lemma is a reformulation of the proof of~\cite[Lemma~18]{Endr:Klop:Over:18}.
\begin{proof}[of Lemma~\ref{lem:countable:confluent:spanning}]
  Let $\aars$ be countable and confluent. It suffices to show 
  that if $\aars$ has a single connected component, i.e.\ if
  $\aarsequ$ relates all objects, then we can construct a tree 
  $\relle{\bars}{\aars}$ that spans $\aars$ in the sense that
  $\aars$-convertible objects have a common reduct in the tree:
  $\releq{\aarsequ}{\relsco{\reltre{\bars}}{\ep{\barsinv}{\sreltre}}}$
  (and by determinism then have a least common reduct).
  Because of the countable confluence assumption, there is
  a cofinal reduction~\cite{Klop:80}, i.e.\ a reduction 
  $\arsa{\aiobj{\natzer}}{\arsa{\aiobj{\natone}}{\ldots}}$
  such that for all $\aobj$, there exists $\aidx$ with $\arstrea{\aobj}{\aiobj{\aidx}}$.
  By removing from it any repetitions, we may assume 
  the reduction does not contain cycles~\cite[Proposition~2.2.9]{Oost:94}.
  Taking as initial tree $T_0$ this reduction, its trunk, 
  we construct for each $j$, 
  the tree $T_{j+1}$ by adjoining to $T_j$ any
  reduction from the $j$th (in the countable enumeration) object $a^j$ 
  to $T_j$, stopping at the moment we reach $T_j$ (possibly immediately).
  That a reduction from each $a^j$ to $T_j$ exists holds by cofinality of the trunk $T_0$
  (and monotonicity of the process: $\relle{T_j}{T_{j+1}}$ for all $j$).
  To preserve being a tree, we again remove any repetitions
  from the adjoined reduction.
  That this construction is correct, yields a tree, follows from that
  the trunk $T_0$ is a straight line by construction, and that at
  no stage do we lose determinism: we only adjoin (edges from) nodes not yet
  in the tree and do not introduce cycles per construction. \qed
\end{proof}
\begin{remark}
  The trunk-construction in the proof follows~\cite{Klop:80,Oost:94}, 
  and the tree-construction follows~\cite{Endr:Klop:Over:18}. 
  Compared to~\cite{Oost:94,Endr:Klop:Over:18}
  the proof does not use a minimal distance argument, only cycle-removal.
  A question is whether a spanning tree
  can be constructed in a `single pass'.
\end{remark}
\begin{proof}[of Proposition~\ref{prop:encompassment}]
  $\trmeq{\ctxa{\suba{\btrm}}
        }{\msubstra{\amtrmvar
                  }{\lambda \vec{\atrmvar}.\btrm
                  }{\ctxa{\mtrmvara{\vec{\suba{\atrmvar}}}}
                  }
        }$
  with $\vec{\atrmvar}$ the vector of variables in $\atrm$. \qed
\end{proof}

\begin{proposition} \label{prop:multipattern:compose}
  For  multipatterns $\aicls{\aidx}$ and $\bicls{\aidx}$ such that
  $\trmeq{\algtrm{\aicls{\aidx}}}{\algtrm{\bicls{\aidx}}}$ for all $\aidx$,
  $\clsle{\substra{\vec{\atrmvar}
                 }{\vec{\acls}
                 }{\aicls{\natzer}
                 }
        }{\substra{\vec{\atrmvar}
                 }{\vec{\bcls}
                 }{\bicls{\natzer}
                 }
        }$ iff
   $\clsle{\aicls{\aidx}
         }{\bicls{\aidx}
         }$ for all $\aidx$.
\end{proposition}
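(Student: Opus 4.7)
The plan is to invoke the set-theoretic viewpoint from the proof of Lemma~\ref{lem:lattice}, identifying each multipattern with its family of (non-empty, convex, pairwise disjoint) sets of non-variable positions in its denotation. Under this identification the multipattern-substitution $\substra{\vec{\atrmvar}}{\vec{\acls}}{\aicls{\natzer}}$ admits a clean description: writing $\apos_\aidx$ for the unique position of $\atrmvar_\aidx$ in $\algtrm{\aicls{\natzer}}$, the family of the composite is the disjoint union of the family of $\aicls{\natzer}$ (sitting in the \emph{outer region}, consisting of positions not strictly below any $\apos_\aidx$) together with, for each $\aidx$, the family of $\acls_\aidx$ translated to $\apos_\aidx$ (the \emph{inner region at $\apos_\aidx$}, consisting of positions at or below $\apos_\aidx$). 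The hypothesis $\trmeq{\algtrm{\aicls{\aidx}}}{\algtrm{\bicls{\aidx}}}$ for all $\aidx$ ensures the two composite multipatterns share both denotation and variable-positions $\apos_\aidx$, hence share their region decomposition.

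With this setup, the $(\Leftarrow)$ direction is one step: any pattern-set of $\substra{\vec{\atrmvar}}{\vec{\acls}}{\aicls{\natzer}}$ comes from some $\aicls{\aidx}$, is by hypothesis contained in a pattern-set of $\bicls{\aidx}$, and this latter is a pattern-set of $\substra{\vec{\atrmvar}}{\vec{\bcls}}{\bicls{\natzer}}$. For $(\Rightarrow)$, fix $\aidx$ and let $P$ be a pattern-set of $\aicls{\aidx}$; viewed in the composite, $P$ lies in region $\aidx$, and refinement supplies a pattern-set $Q$ of the composite $\bcls$ with $P \subseteq Q$. It then suffices to show $Q$ originates from the component $\bicls{\aidx}$.

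This region-localisation is the main obstacle. It rests on two observations about $\bcls$: a pattern-set of $\bicls{\natzer}$ lies entirely in the outer region (its positions are non-variable in $\algtrm{\bicls{\natzer}}$ and so avoid the $\apos_\bidx$ together with everything strictly below them), whereas a pattern-set of $\bcls_\bidx$ lies entirely in the inner region at $\apos_\bidx$ by construction. Combined with convexity of $Q$, these confine $Q$ to a single region, which $P \subseteq Q$ pins down to region $\aidx$. A delicate sub-case arises when $\aidx \neq \natzer$ and $P$ reaches the boundary $\apos_\aidx$: there $\apos_\aidx$ is still a variable-position of $\bicls{\natzer}$, so $Q$ cannot come from $\bicls{\natzer}$; it must come from some $\bcls_\bidx$ with $\apos_\aidx$ at or below $\apos_\bidx$, and linearity of $\bicls{\natzer}$'s body (each $\atrmvar_\bidx$ occurs exactly once) forces $\apos_\bidx = \apos_\aidx$ and hence $\bidx = \aidx$, completing the argument.
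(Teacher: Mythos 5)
Your proof is correct, but it takes a genuinely different route from the paper's. The paper argues directly on the let-expression side, using the witnesses of Definition~\ref{def:refinement}: for the if-direction it takes the union $\subLub{\aidx}{\aisub{\aidx}}$ of the witnessing $2$nd-order substitutions (after renaming apart), and for the only-if-direction it decomposes a witness for the composite refinement into components "as to the $2$nd-order variables in the bodies of the $\bicls{\aidx}$". You instead pass to the sets-of-position-sets representation of Lemma~\ref{lem:lattice} and carry out a region analysis: the composite's family splits as the outer family of $\aicls{\natzer}$ plus the translated inner families, and convexity together with the incomparability of distinct variable positions localises any covering pattern-set to a single region. The trade-off is clear: the paper's argument is shorter and stays with the official definition of $\sclsle$, but its converse direction silently assumes exactly the localisation fact (that a pattern-set of the composite $\bcls$ containing a pattern-set coming from component $\aidx$ must itself come from component $\aidx$) that you prove explicitly; your version is longer and leans on the set-representation established in Lemma~\ref{lem:lattice}, but it fills in the one step the paper glosses over, including the genuinely delicate boundary case where a pattern of $\acls_\aidx$ sits at the root of $\algtrm{\acls_\aidx}$ and hence at the variable position $\apos_\aidx$ of the outer body. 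Both arguments are sound given Lemma~\ref{lem:lattice}; yours could serve as the justification the paper's one-line converse implicitly appeals to.
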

\begin{proof}
  Let $\acls \isdefd \substra{\vec{\atrmvar}
                 }{\vec{\acls}
                 }{\aicls{\natzer}
                 }$
                 and $\bcls \isdefd
  \substra{\vec{\atrmvar}
                 }{\vec{\bcls}
                 }{\bicls{\natzer}
                 }$.
  If the $2$nd-order substitutions $\aisub{\aidx}$ witness
  $\clsle{\aicls{\aidx}
         }{\bicls{\aidx}
         }$ for all $\aidx$,
  then $\subLub{\aidx}{\aisub{\aidx}}$ witnesses 
  $\clsle{\acls}{\bcls}$ 
  (assuming $2$nd-order variables 
   are renamed apart).
  Conversely, a witnessing substitution $\asub$ for 
  $\clsle{\acls}{\bcls}$, can be decomposed into 
  $\aisub{\aidx}$ as to the $2$nd-order variables in the bodies of the $\bicls{\aidx}$. \qed
\end{proof}

\begin{proof}[of Proposition~\ref{prop:critical:peak}]
  For $\acls$, $\bcls$ \emph{multi}patterns, 
  $\clseq{\clslub{\acls}{\bcls}}{\clstop}$ entails by Lemma~\ref{lem:lattice}
  that all positions in $\clslub{\acls}{\bcls}$ 
  are related via the `has overlap' relation for patterns in $\acls$,$\bcls$.
  However, if also $\clseq{\clsglb{\acls}{\bcls}}{\clsbot}$ then
  all patterns would be disjoint, yielding either
  $\clseq{\acls}{\clstop}$ and $\clseq{\bcls}{\clsbot}$ or vice versa.
  This is impossible for patterns, as these are non-empty. \qed
\end{proof}

\begin{proof}[of Proposition~\ref{prop:inner:overlay}]
  Intuitively, $\astp$, $\bstp$ cannot be overlapped from above by other steps in 
  $\aStp$, $\bStp$ because the root-positions of their contracted redexes
  are the same, and not from below because of $\pair{\astp}{\bstp}$ being inner. 
  Formally, by assumption the root positions $\aipos{\astp}$
  and $\aipos{\bstp}$ of the contracted redexes are the same.
  By the peak between $\aStp$ and $\bStp$ being critical,
  each redex-pattern in $\aStp$ overlaps some redex-pattern in
    $\bStp$ and vice versa, as each pattern in $\subap{\sstpsrc}{\aStp}$,$\subap{\sstpsrc}{\bStp}$ 
    is connected to each other such pattern in the has-overlap-with relation in their join,
    as shown in Lemma~\ref{lem:lattice}.
  Since $\poseq{\aipos{\astp}}{\aipos{\bstp}}$, no pattern $\stpin{\cstp}{\aStp,\bStp}$
  could overlap $\astp$, $\bstp$ from above, i.e.\ has overlap with them
  such that $\pospfxref{\aipos{\cstp}}{\aipos{\astp}}$.
  This means (using as before that terms are trees and that patterns are convex\footnote{%
For term graphs this fails. There, due to non-convexity of patterns/left-hand sides, 
one may have that part of $\cstp$ overlaps $\astp$ from below 
but at the same time $\pospfx{\aipos{\cstp}}{\aipos{\astp}}$.
}) that in fact for every $\cstp$, $\pospfxref{\aipos{\astp}}{\aipos{\cstp}}$.
  But overlapping $\astp$, $\bstp$ strictly from below, i.e.\ such that
  $\pospfx{\poseq{\aipos{\astp}}{\aipos{\bstp}}}{\aipos{\cstp}}$ 
  is also impossible by the choice of $\astp$,$\bstp$ being inner.
  We conclude that $\cstp$ \emph{is} $\astp$ or $\bstp$, since $\aStp$ and $\bStp$ are multisteps
  and the steps in a multistep are pairwise non-overlapping, from which the claim follows. \qed

\end{proof}

\begin{proof}[of Lemma~\ref{lem:join:join}]
  We have $\trmeq{\algtrm{\clslet{\vec{\amtrmvar}}{\vec{\atrm}}{\suba{\mtrmvarc{\vec{\ctrmvar}}}}}
 }{\trmeq{\ctrm
        }{\algtrm{\clslet{\vec{\bmtrmvar}}{\vec{\btrm}}{\subb{\mtrmvarc{\vec{\ctrmvar}}}}}
        }}$ by definition of $\asub$, $\bsub$ being witnesses to $\clsle{\acls,\bcls}{\ccls}$.
  If there were to exist a $\sclsle$-upperbound of 
  $\clslet{\vec{\amtrmvar}}{\vec{\atrm}}{\suba{\mtrmvarc{\vec{\ctrmvar}}}}$ and
  $\clslet{\vec{\bmtrmvar}}{\vec{\btrm}}{\subb{\mtrmvarc{\vec{\ctrmvar}}}}$ smaller than
  the top $\iclstop{\ctrm}$ of the refinement lattice for $\ctrm$,
  say with witnessing substitutions $\asub'$, $\bsub'$,
  this would contradict $\ccls$ being the join of $\acls$, $\bcls$,
  as updating $\asub$ by mapping the $2$nd-order variables in $\mtrmvarc{\vec{\ctrmvar}}$
  according to $\asub'$, and correspondingly updating $\bsub$ by $\bsub'$,
  would witness an $\sclsle$-upperbound of $\acls$,$\bcls$ smaller than $\ccls$. \qed.
\end{proof}

\begin{proof}[of Lemma~\ref{lem:vertical}]
  Let the multisteps $\aStp$ and $\bStp$ be given by 
  $\clslet{\vec{\bmtrmvar}}{\vec{\rula{\vec{\btrmvar}}}}{\bmtrm}$ respectively
  $\clslet{\vec{\cmtrmvar}}{\vec{\rulb{\vec{\ctrmvar}}}}{\cmtrm}$,
  for rules 
  $\irula{\aidx}{\vec{\bitrmvar{\aidx}}} \hastype \rulstr{\ailhs{\aidx}}{\airhs{\aidx}}$ and
  $\irulb{\bidx}{\vec{\citrmvar{\bidx}}} \hastype \rulstr{\bilhs{\bidx}}{\birhs{\bidx}}$.
  Defining
      $\acls \isdefd \subap{\sstpsrc}{\aStp}$,
      $\bcls \isdefd \subap{\sstpsrc}{\bStp}$,
  and $\ccls \isdefd \clslub{\acls}{\bcls}$,
  we have 
  $\clsle{\acls}{\ccls}$ and $\clsle{\bcls}{\ccls}$ 
  are witnessed (see Definition~\ref{def:refinement})
  by some $2$nd-order substitutions 
  $\asub \isdefd \substr{\vec{\amtrmvar}}{\vec{\bmtrm}}$ and 
  $\bsub \isdefd \substr{\vec{\amtrmvar}}{\vec{\cmtrm}}$,
  with $\vec{\amtrmvar}$ the $2$nd-order variables in the body $\amtrm$ of $\ccls$.  
  If the peak is not critical, either $\clsneq{\ccls}{\clstop}$
  or $\atrm$ and hence $\amtrm$ is not linear.
  By the assumption that $\aStp$ and $\bStp$ have overlap,
  $\amtrm$ must (as the meet $\sclsle$-relates to it)
  contain at least one $2$nd-order variable $\amtrmvar$.
  
  If otherwise only $1$st-order variables occur in $\amtrm$,
  then by assumption it must be non-linear, so of
  shape $\mtrmvara{\vec{\dtrmvar}}$ having some
  repeated variable. Then we decompose the peak
  into a linear prefix and a renaming.
  That is, we choose $\vec{\atrmvar}$ to be linear
  of the same length $\anat$ as $\vec{\dtrmvar}$, and define
  $\aiStp{\idxout}$ and $\biStp{\idxout}$ as
  $\clslet{\vec{\bmtrmvar}
         }{\vec{\rula{\vec{\btrmvar}}}
         }{\msubap{\asub}{\mtrmvara{\vec{\atrmvar}}}
         }$ respectively
  $\clslet{\vec{\cmtrmvar}
         }{\vec{\rulb{\vec{\ctrmvar}}}
         }{\msubap{\bsub}{\mtrmvara{\vec{\atrmvar}}}
         }$ (linearisations of $\aStp$ and $\bStp$)
   and $\aiStp{\aidx}$, $\biStp{\aidx}$ both to 
   $\clslet{}{}{\ditrmvar{\aidx}}$
   (simply renaming $\aitrmvar{\aidx}$ into $\ditrmvar{\aidx}$).
 
   Otherwise, we can write $\amtrm$ as 
   $\substra{\aitrmvar{\natone}}{\aimtrm{\idxin}}{\aimtrm{\idxout}}$
   for terms $\aimtrm{\aidx}$ in which at least one non-$1$st-order variable occurs
   (for instance, let $\aimtrm{\idxout}$ be obtained by
    replacing one of the arguments of the head-symbol of $\amtrm$ by $\aitrmvar{\natone}$).
   Then we choose
   the multisteps $\aiStp{\aidx}$ and $\biStp{\aidx}$ to be
    $\clslet{\vec{\bmtrmvar}
           }{\vec{\rula{\vec{\btrmvar}}}
           }{\suba{\aimtrm{\aidx}}
           }$ respectively
    $\clslet{\vec{\cmtrmvar}
           }{\vec{\rulb{\vec{\ctrmvar}}}
           }{\subb{\aimtrm{\aidx}}
           }$ 
     (and canonising the let-bindings, deleting binders for $2$nd-order variables
      occurring in the \emph{other} body, i.e.\ in $\aimtrm{\natdif{\natone}{\aidx}}$).
     Again, the decomposed peaks are smaller in size.
     
     By simple computations one verifies that in both cases the decomposed
     peaks compose to the original peak. 
     Using Propositions~\ref{prop:multipattern:compose} 
     (for compositionality of $\sclslub$ and $\sclsglb$)
     and~\ref{prop:multipattern:size} we compute
  \[ \nateq{\pkovl{\aStp}{\bStp}
   }{\nateq{\clspsz{\clsglb{\substra{\vec{\atrmvar}
                                   }{\subap{\sstpsrc}{\vec{\aStp}}
                                   }{(\subap{\sstpsrc}{\aiStp{\natzer}})
                                   }
                          }{\substra{\vec{\atrmvar}
                                   }{\subap{\sstpsrc}{\vec{\bStp}}
                                   }{(\subap{\sstpsrc}{\biStp{\natzer}})
                                   }
                          }}
   }{\nateq{\natSum{\aidx}{\clspsz{\clsglb{\subap{\sstpsrc}{\aiStp{\aidx}}}{\subap{\sstpsrc}{\biStp{\aidx}}}}}
          }{\natSum{\aidx}{\pkovl{\aiStp{\aidx}}{\biStp{\aidx}}}
          }}} \]
     and similarly 
     $\natge{\pkprt{\aStp}{\bStp}
           }{\pkprt{\aiStp{\aidx}}{\biStp{\aidx}}
           }$ for all $\aidx$.
    In the $1$st-order-variable-only case we conclude strict inequality by
   $\natgt{\nateq{\pkprt{\aStp}{\bStp}
                }{\nateq{\trmsiz{\mtrmvara{\vec{\dtrmvar}}}}{\pair{\natone}{\anat'}}
                }
         }{\nateq{\trmsiz{\mtrmvara{\vec{\atrmvar}}}}{\pair{\natone}{\anat}},
           \nateq{\trmsiz{\ditrmvar{\aidx}}}{\pair{\natzer}{\natone}}
         }$, for some $\natgt{\anat'}{\anat}$.
     In the other case, strict inequality follows by the choice of splitting
     into $\aimtrm{\idxout}$ in $\aimtrm{\idxin}$ in such a way that
     both contain at least one non-$1$st-order variable symbol.
     \qed
\end{proof} 

\begin{proof}[of Lemma~\ref{lem:horizontal}]
  The $1$st item holds per construction of residuals as given above.
  For the $2$nd item, 
  we obtain by Lemma~\ref{lem:lattice},
  that multipatterns are the join of their patterns
  so that we can write
  $\clseq{\aStp}{\clsLub{\stpin{\astp}{\aStp}}{\astp}}$ and
  $\clseq{\bStp}{\clsLub{\stpin{\bstp}{\bStp}}{\bstp}}$,
  hence by distributivity 
  $\clseq{\clsglb{\subap{\sstpsrc}{\aStp}}{\subap{\sstpsrc}{\bStp}}
        }{\clsLub{\stpin{\astp}{\aStp},\stpin{\bstp}{\bStp}
                }{\clsglb{\subap{\sstpsrc}{\astp}}{\subap{\sstpsrc}{\bstp}}
                }
        }$
  so that $\aStp$ and $\bStp$ have overlap iff
  some of their constituting steps have overlap.
  We conclude per construction of residuals. \qed
\end{proof}

\begin{proof}[of Lemma~\ref{lem:abstract_criterion}]
Defining ${\rightarrowtail} \isdefd \relsco{\aiarstre{\BB}}{\aiarstre{\AA}}$,
it suffices (cf.\ e.g.\ \cite[Proposition~1.1.11]{Tere:03}) 
to show ${\rightarrowtail}$ has the diamond property, as 
${\to_\AA} \subseteq {\rightarrowtail} \subseteq {\tto_\AA}$
by the assumption that $\aiarstre{\BB} \subseteq \aiarstre{\AA}$.
We claim 
\[
  \relsco{\aiarsinv{a}
}{\relsco{\aiarsnco{m}{\BB}
}{\relsco{\aiarsinvnce{n}{\BB}
        }{\aiars{b}
        }}}
  \subseteq
  \relsco{{\rightarrowtail}
        }{{\leftarrowtail}
        }
\]
holds for all labels $a, b$ and numbers $m, n \geqslant 0$.  
From the claim and confluence of $\to_\BB$ we conclude since 
\(
  \relsco{{\leftarrowtail}
        }{{\rightarrowtail}
        }
  \subseteq
  \relsco{\aiarsinv{a}
}{\relsco{\aiarstre{\BB}
}{\relsco{\aiarsinvtre{\BB}
        }{\aiars{b}
        }}}
  \subseteq
  \relsco{{\rightarrowtail}
        }{{\leftarrowtail}
        }
\).
The claim is shown by well-founded induction on
$(\{a,b\}, m + n)$ with respect to the lexicographic product of
$\succ_\mul$ and the greater-than order $\snatgt$ on $\NN$.
We distinguish cases, depending on whether or not $\natgt{m + n}{\natzer}$.
If $m = n = 0$ then
\[ 
  \relsco{\aiarsinv{a}
}{\relsco{\aiarsnco{m}{\BB}
}{\relsco{\aiarsinvnce{n}{\BB}
        }{\aiars{b}
        }}}
\subseteq
  (\relsco{\aiars{\AA}}{\aiarsinv{\AA}}) \cup
   \bigcup_{\{ a,b \} \succ_\mul \{ a',b' \}}
  (\relsco{\aiarsinv{a'}}{\relsco{\aiarsequ{\BB}}{\aiars{b'}}})
\subseteq  
{\rightarrowtail \cdot \leftarrowtail}   \]
by condition~1, and by confluence of $\aiars{\BB}$ and the I.H., respectively.
Otherwise, assume w.l.o.g.\ that $m > 0$ and consider
\(
  \relap{\relsco{\aiarsinv{a}}{\aiars{\BB}}
       }{x
}{\relap{\relsco{\aiarsnco{m-1}{\BB}
       }{\relsco{\aiarsinvnce{n}{\BB}
               }{\aiars{b}
               }}
       }{y
       }{z
       }}
\).
By condition~2 applied to $\relsco{\aiarsinv{a}}{\aiars{\BB}}$
and confluence of $\aiars{\BB}$, either 
\[
\text{$
x \tto_\BB \cdot \fromB{a} y \to_\BB^{m-1} \cdot \fromBT{\BB}{n} \cdot \to_b z\quad$
or
$\quad\relap{\relsco{\aiarstre{\BB}
           }{\relsco{\aiarsinv{a'}
           }{\relsco{\relsco{\aiarstre{\BB}}{\aiarsinvtre{\BB}}
                   }{\aiars{b}
                   }}}
        }{x
        }{z
        }$}
\]
for some $a \succ a'$.
In both cases the I.H.\ applies, because of a decrement in the second
component respectively a decrease in the first, 
and we conclude to 
$\relap{\relsco{\aiarstre{\BB}}{\relsco{{\rightarrowtail}}{{\leftarrowtail}}}
      }{x
      }{z
      }$,
hence to 
$\relap{\relsco{{\rightarrowtail}}{{\leftarrowtail}}
      }{x
      }{z
      }$. \qed
\end{proof}

\begin{proof}[of Theorem~\ref{thm:cpcs} by decreasing diagrams]
  Let $\dtrs$ be a critical-pair-closing system for $\atrs$
  such that $\dtrsd/\atrs$ is terminating and $\aiars{\dtrs}$ is countable.
  Let $\atrsp \isdefd \setdif{\atrs}{\dtrs}$ and $\dtrsp \isdefd \setdif{\dtrs}{\dtrsd}$,
  so that $\atrsp, \dtrsp, \dtrsd$ forms a partition of (the rules of) $\atrs$
  and $\dtrsp, \dtrsd$ of $\dtrs$, and consider the following labellings
  of steps in a conversion:
  \begin{itemize}
  \item
    a multistep $\iarsdeva{\atrsp}{\atrm}{\btrm}$ is labeled by a triple
    $\triple{\anat}{\atrmp}{\bnat}$ with $\anat$ denoting the number of
    $\aiarsdevinv{\atrsp}$-steps to its left in the conversion
    (symmetrically, $\aiarsdev{\atrsp}$-steps to its right for $\aiarsdevinv{\atrsp}$-steps),\footnote{%
Our peak-transformations preserve 
these numbers for \emph{other} steps~\cite[multi-labels]{Liu:16}.
} 
    $\atrmp$ a term $\iarstrea{\atrs}{\atrmp}{\atrm}$ (a so-called 
     \emph{predecessor}~\cite[Example~18]{Oost:08}), and
    $\bnat$ the maximal length of the development of the multistep;
  \item
    $\dtrs$-steps are labelled by any decreasing labelling, which exists
    by completeness of decreasing diagrams (Theorem~\ref{thm:dd:complete})
    for countable systems.
  \end{itemize}
  By $\succ$ we denote the well-founded order that orders
  triples for $\atrsp$-steps by the lexicographic product of 
  the greater-than relation $\snatgt$, $\aiarstra{\dtrs/\atrs}$,
  and of $\snatgt$ again, 
  the labels for $\dtrs$-step according to the decreasing labeling, 
  and the former above the latter.
  We show each local $\atrs$-peak can be completed into a decreasing
  diagram, distinguishing cases on steps and on whether their 
  redexes have overlap.
  \begin{itemize}
  \item
    A peak of shape $\relsco{\aiarsdevinv{\atrsp}}{\aiarsdev{\atrsp}}$
    such that the steps do not overlap can, by 
    Lemma~\ref{lem:horizontal}(\ref{ite:horizontal:non-overlap}),
    be completed by a valley of shape 
    $\relsco{\aiarsdev{\atrsp}}{\aiarsdevinv{\atrsp}}$.
    We conclude by a decrement in the first component
    for both multisteps in the valley.
  \item
    By Lemma~\ref{lem:horizontal}(\ref{ite:horizontal:overlap})
    an overlapping 
    peak of shape $\relsco{\aiarsdevinv{\atrsp}}{\aiarsdev{\atrsp}}$
    can be horizontally decomposed as
    $\relsco{\relsco{\aiarsdevinv{\atrsp}}{\relsco{\aiarsinv{\atrsp}}{\aiars{\atrsp}}}}{\aiarsdev{\atrsp}}$
    with $\relsco{\aiarsinv{\atrsp}}{\aiars{\atrsp}}$ an overlapping peak.
    By $\dtrs$ being critical-pair-closing for $\atrs$, that peak 
    can be closed by a $\dtrs$-conversion,
    so the original peak can be transformed into a conversion of shape 
    $\relsco{\relsco{\aiarsdevinv{\atrsp}}{\aiarsequ{\dtrs}}}{\aiarsdev{\atrsp}}$,
    which is seen to be decreasing: the first and second components of both $\atrsp$-multisteps 
    do not change, the first obviously so and the second by choosing to keep the same
    predecessors, but their third components decrease (by having developed one redex
    each), and $\dtrs$-steps are smaller than $\atrsp$-multisteps;
  \item
    A peak of shape $\relsco{\aiarsdevinv{\atrsp}}{\aiars{\dtrs}}$
    such that the steps do not overlap can, by 
    Lemma~\ref{lem:horizontal}(\ref{ite:horizontal:non-overlap}),
    be completed by a valley of shape 
    $\relsco{\aiarstre{\dtrs}}{\aiarsdevinv{\atrsp}}$,
    which is decreasing for the $\dtrs$-steps as these are
    by definition ordered below $\atrsp$-multisteps.
    To see decreasingness for the $\atrsp$-multistep,
    first observe that the first component does not change.
    Next, we distinguish cases on the type of the $\dtrs$-step.
    
    If it is a $\dtrsd$-step, i.e.\ it is duplicating, then by choosing its 
    source as second component it decreases.
    
    If it is a $\dtrsp$-step, i.e.\ it is linear, then by choosing to
    keep the same term as second component, all three components
    are the same, resulting in a decreasing diagram again;
  \item
    A peak of shape $\relsco{\aiarsdevinv{\atrsp}}{\aiars{\dtrs}}$
    such that the steps are overlapping can,
    by the special case of Lemma~\ref{lem:horizontal} where one of the multisteps is a single step,
    be horizontally decomposed as
    $\relsco{\relsco{\aiarsdevinv{\atrsp}}{\aiarsinv{\atrsp}}}{\aiars{\dtrs}}$
    with 
    $\relsco{\aiarsinv{\atrsp}}{\aiars{\dtrs}}$
    an overlapping peak.
    Since $\dtrs$ is by assumption a subsystem of $\atrs$, 
    that peak
    is an $\atrs$-peak
    and we may proceed as in the second item. \qed
  \end{itemize}
\end{proof}

\begin{proof}[of Lemma~\ref{lem:hot:structural}]
  \begin{enumerate}
  \item
  Specialising Definition~\ref{def:decreasing} to the hot-labelling,
  hot-decreasingness of $\iarsdevinva{\alab}{\btrm}{\iarsdeva{\blab}{\atrm}{\ctrm}}$
  yields a conversion of shape 
   $\iarsequa{\ordhotdwna{\alab}
            }{\btrm
  }{\iarsdeva{\blab
            }{\btrm'
  }{\iarsequa{\ordhotdwna{\alab\blab}
            }{\btrm''
}{\iarsdevinva{\alab
            }{\ctrm''
  }{\iarsequa{\ordhotdwna{\blab}}
            }{\ctrm'
            }{\ctrm
            }}}}$,
  and similarly for the peak vector, 
  where we have used that multisteps may be empty so that 
  $\releq{\aarsdev}{\relref{\aarsdev}}$,
  and that due to the properties of $\aordmul$, if $\stple{\bStp}{\aStp}$ then
  $\ordrefhota{\Labhota{\aStp}}{\Labhota{\bStp},\Labhota{\stpres{\aStp}{\bStp}}}$
  so that any multistep $\aiarsdev{\alab}$ can be 
  developed into a reduction $\aiarstre{\ordhotrefdwna{\alab}}$ of ordinary steps.\footnote{%
We may even assume~\cite{Felg:15} the multisteps are \emph{homogeneous} (all rules the same label).}

   That all variables (which are $1$st-order) occurring in the diagram
   may be assumed to be contained in the variables occurring in $\atrm$, say $\vec{\ctrmvar}$, follows by 
   simply substituting the same constant\footnote{%
If there is no constant in the signature, as fresh constant may be adjoined without
affecting confluence, as confluence is a modular property of TRSs.}
\footnote{%
Of course, if we already know that \emph{all} peaks can be completed
into a decreasing diagram, then its is obvious, because
then the system is confluent.
}
  for all variables not among $\vec{\ctrmvar}$ in the diagram.
  Since steps, conversions, and multisteps are closed under substitution, this preserves the shape
  of the diagram, and it even does not change the peak at all:
  as $\btrm$,$\ctrm$ are obtained from $\atrm$ by rewriting, and rewrite rules are 
  assumed not to introduce variables, their variables are among those of $\atrm$.
  To see that the diagram is still hot-decreasing, note that 
  if a term $\etrm$ on the closing conversion is the source of step 
  having a label required to be 
  $\aordhot$-ordered below a \emph{source} label (i.e.\ term) of one 
  of the steps $\aStp$, $\bStp$, i.e.\ below $\atrm$,
  then $\arstrea{\atrm}{\etrm}$ from which we conclude that the variables contained 
  in $\etrm$ are a subset of $\vec{\ctrmvar}$,
  and if it was required to be below a \emph{rule} label (i.e.\ set) of
  one of these steps, then we conclude since those labels and their order $\aordmul$ 
  are invariant under substitution of constants.
  \item    
  We show the valley $\relap{\relsco{\aiarsdev{\stpres{\bStp}{\aStp}}
        }{\aiarsdevinv{\stpres{\aStp}{\bStp}}
         }
        }{\btrm
        }{\ctrm
        }$ completes the peak
  completes the peak $\iarsdevinva{\aStp}{\btrm}{\iarsdeva{\bStp}{\atrm}{\ctrm}}$
  into a hot-decreasing diagram.
  by considering all possible distributions of $\dtrs$- and 
  $\setdif{\atrs}{\dtrs}$-rules in $\aStp$, $\bStp$.
  By Lemma~\ref{lem:horizontal}(\ref{ite:horizontal:non-overlap}),
  the rule symbols occurring in 
  $\stpres{\bStp}{\aStp}$ are contained in $\bStp$,
  and those in $\stpres{\aStp}{\bStp}$ are contained in $\aStp$.
  We have on the one hand that if $\aStp$
  contains some rule in $\setdif{\atrs}{\dtrs}$ then 
  $\ordhotrefa{\Labhota{\aStp}}{\Labhota{\stpres{\aStp}{\bStp}}}$
  since either $\stpres{\aStp}{\bStp}$ contains such a rule 
  as well so their sets of maxima are $\aordmulref$-related,
  or else we conclude by $\aordhot$ ordering sets above terms.
  On the other hand, if $\aStp$ only contains $\dtrs$-rules then
  so does $\stpres{\aStp}{\bStp}$ and either
  $\bStp$ contains some rule in $\setdif{\atrs}{\dtrs}$ and then
  $\ordhota{\Labhota{\aStp}}{\Labhota{\stpres{\aStp}{\bStp}}}$,
  or it does not and then 
  $\ordhotrefa{\Labhota{\aStp}}{\Labhota{\stpres{\aStp}{\bStp}}}$
  as their sources are $\aiarstre{\dtrs}$-related, as desired.
  \item
  We distinguish cases on the types of the rules in the composite peak 
  \[\iarsdevinva{\alab'
              }{\substra{\vec{\atrmvar}}{\vec{\btrm}}{\btrm}
    }{\iarsdeva{\blab'
              }{\substra{\vec{\atrmvar}}{\vec{\atrm}}{\atrm}
              }{\substra{\vec{\atrmvar}}{\vec{\ctrm}}{\ctrm}
              }}\] having labels as indicated.
  \begin{itemize}
  \item
    If either of the multisteps is empty, we conclude trivially;
  \item
    If both multisteps only contain $\dtrs$-rules, then first note that the $\dtrs$-conversions
    may be further restricted to be of shape $\iarsequa{\ordhotdwna{\atrm}}{\btrm}{\ctrm}$,\footnote{%
The conversion is \emph{below} $\atrm$ (with respect to $\aiarstra{\dtrs}$)
in the sense of~\cite{Wink:Buch:83,Oost:08}.
}   
    by using that a step $\arsa{\atrm}{\ldots}$ cannot occur in it.
    This is seen by considering what would be to the left of such a step in the conversion:
    it cannot be the first step since $\trmneq{\atrm}{\btrm}$ by termination of $\dtrs$;
    it cannot be preceded by a step $\arsinva{\ldots}{\atrm}$ as that would have label
    $\atrm$, not a smaller one as required by decreasingness;
    and not by a step $\arsa{\ldots}{\atrm}$ as its source cannot be smaller than $\atrm$
    as then $\aiars{\dtrs}$ would be cyclic.    
    Based on this, we construct the closing conversion\footnote{%
The notation, substituting conversions at parallel positions, 
leaves unspecified the (sequential) order of the steps of the 
conversions substituted. Any choice will do.
}
    \[ \iarsequa{\ordhotrefdwna{\substra{\vec{\atrmvar}}{\vec{\atrm}}{\btrm}}
             }{\substra{\vec{\atrmvar}}{\vec{\btrm}}{\btrm}
   }{\iarsequa{\ordhotdwna{\substra{\vec{\atrmvar}}{\vec{\ctrm}}{\atrm}}
             }{\substra{\vec{\atrmvar}}{\vec{\ctrm}}{\btrm}
             }{\substra{\vec{\atrmvar}}{\vec{\ctrm}}{\ctrm}
             }} \]
     It is decreasing as 
     $\ordhota{\substra{\vec{\atrmvar}}{\vec{\atrm}}{\atrm}
             }{\substra{\vec{\atrmvar}}{\vec{\atrm}}{\btrm}
             }$ by 
     $\ordhota{\atrm}{\btrm}$ and closure of non-empty $\atrs$-reductions under substitution,
     and
     $\ordhotrefa{\substra{\vec{\atrmvar}}{\vec{\atrm}}{\atrm}
                }{\substra{\vec{\atrmvar}}{\vec{\ctrm}}{\atrm}
                }$ by $\ordhota{\vec{\atrm}}{\vec{\ctrm}}$ and
                closure of $\atrs$-reductions under contexts.
  \item
    If both contain some $(\setdif{\atrs}{\dtrs})$-rule, then we conclude by the conversion
    \[\scriptstyle 
    \iarsequa{\ordhotdwna{\alab\vec{\alab}}
            }{\substra{\vec{\atrmvar}}{\vec{\btrm}}{\btrm}
  }{\iarsdeva{\ordhotrefdwna{\blab'}
            }{\substra{\vec{\atrmvar}}{\vec{\btrm'}}{\btrm'}
  }{\iarsequa{\ordhotdwna{\alab\vec{\alab}\blab\vec{\blab}}
            }{\substra{\vec{\atrmvar}}{\vec{\btrm''}}{\btrm''}
}{\iarsdevinva{\ordhotrefdwna{\alab'}
            }{\substra{\vec{\atrmvar}}{\vec{\ctrm''}}{\ctrm''}
  }{\iarsequa{\ordhotdwna{\blab\vec{\blab}}}
            }{\substra{\vec{\atrmvar}}{\vec{\ctrm'}}{\ctrm'}
            }{\substra{\vec{\atrmvar}}{\vec{\ctrm}}{\ctrm}
            }}}}\] 
    obtained by piecewise composing the constituents conversions,
    which is decreasing because 
    $\setge{\setlub{\alab}{\vec{\alab}}}{\ordmulrefa{\alab'}{\alab,\vec{\alab}}}$ and
    $\setge{\setlub{\blab}{\vec{\blab}}}{\ordmulrefa{\blab'}{\blab,\vec{\blab}}}$.
  \item
    If one of them, say $\aStp$, only contains $\dtrs$-steps but the other doesn't.
    then the constituent conversions have shape
    $\iarsequa{\ordhotdwna{\atrm}
            }{\btrm
  }{\iarsdeva{\blab
            }{\btrm'
  }{\iarsequa{\ordhotdwna{\blab}
            }{\ctrm'
            }{\ctrm
            }}}$ yielding
   \[\iarsequa{\ordhotdwna{\substra{\vec{\atrmvar}}{\vec{\atrm}}{\atrm}}
            }{\substra{\vec{\atrmvar}}{\vec{\btrm}}{\btrm}
  }{\iarsdeva{\ordhotrefdwna{\blab'}
            }{\substra{\vec{\atrmvar}}{\vec{\btrm'}}{\btrm'}
  }{\iarsequa{\ordhotdwna{\blab\vec{\blab}}
            }{\substra{\vec{\atrmvar}}{\vec{\ctrm'}}{\ctrm'}
            }{\substra{\vec{\atrmvar}}{\vec{\ctrm}}{\ctrm}
            }}}\]
    which is seen to be decreasing by reasoning as in the previous items. 
  \end{itemize}
  \end{enumerate}
\end{proof}

\begin{proof}[of measure decrease in the induction step of Theorem~\ref{thm:hot}]
    We have $\seteq{\cctx'}{\src{\bStp'}}$ hence 
   $\seteq{\cctx}{\clslub{\cctx'}{\src{\bstp}}}$
   as sets of patterns despite $\trmneq{\atrm}{\ctrm'}$;
   the positions are in both since $\bstp$ is inner so the 
   patterns $\vec{\blhs}'$ are not below $\bmtrmvar$ in $\bmtrm$.\footnote{%
In the let-expression representation this follows from Proposition~\ref{prop:multipattern:size} by
vertically decomposing both having as substitute the redex respectively
the contractum of $\bstp$.}
    Using distributivity all the time, 
    the strict inequality holds in the line of reasoning~(\ref{eqn:decrease}) by 
    $\setneq{\setglb{\src{\astp}}{\src{\bstp}}}{\seteq{\setemp}{\setglb{\src{\bstp}}{\dctx}}}$,
    the first equality by the reasoning above and $\pair{\astp}{\bstp}$ being inner so that 
    $\clseq{\clsglb{\bictx{-}}{\src{\bstp}}}{\clsbot}$,
    the second since, by reasoning as for $\cctx$, we have $\clseq{\bctx'}{\clslub{\src{\aStp'}}{\dctx'}}$
    since the patterns $\vec{\alhs}'$ are not below $\cmtrmvar$ in $\cmtrm$
    by $\astp$ being inner, hence 
    $\clseq{\bictx{-}
   }{\clseq{\clslub{\src{\aStp'}}{\dctx}
          }{\bictx{+}'
          }}$, and the final inequality by
    $\clsge{\clsglb{\dctx}{\src{\bStp'}}}{\clsglb{\dctx'}{\src{\bStp'}}}$ which
    holds because by $\hat{\bStp}$ being a multistep from $\hat{\ctrm}$,
    $\dctx'$contains positions that are either below the root of $\bstp$ but then not in $\bStp'$, 
    or in the pattern of $\astp$ and then in $\dctx$.

  We now give the idea how, as an alternative to the set-theoretic reasoning.
  one can also directly work on let-expressions 
  to show that the induction measure decreases,  
  i.e.\ one can proceed by giving  appropriate \emph{constructions} on multipatterns,
  and then showing that the measure decreases, constructing witnesses by \emph{computation}.
  
    For instance, one may define $\bictx{-}$ as in the main text,
    but now by a let-expression, as
    $\clslet{\vec{\amtrmvar'}\cmtrmvar'}{\vec{\alhs}'\ctxa{\ctrmvar'}
         }{\substra{\cmtrmvar}{\cmtrmvar'(\vec{\ailhs{\astp}})}{\cmtrm}}$
    for $\cmtrmvar'$, $\ctrmvar'$ fresh. Here $\actx$ \emph{is} $\cctx$ as defined above,
    but now \emph{constructed} from the image $\ctxa{\mtrmvarb{\vec{\ailhs{\astp}}}}$ 
    of $\cmtrmvar$ under the $2$nd-order substitution $\bsub$ witnessing 
    $\clsle{\src{\bstp}}{\acls}$, where $\acls$ in turn was constructed as the join of $\bstp$ and $\aStp$.
    For another example, $\dctx$, the part of the pattern of $\astp$ that 
    does not belong to the pattern of $\astp$, can be constructed by
    $\clslet{\cmtrmvar'}{\ctxa{\ctrmvar'}
         }{\substra{\vec{\amtrmvar'}\cmtrmvar}{\vec{\alhs}'\cmtrmvar'(\vec{\ailhs{\astp}})}{\cmtrm}}$.

    The same reasoning applies, but now by \emph{computation} on multipatterns.
    For instance, using distributivity to decompose $\aStp$, $\bStp$ in their constituent steps,
    the inequality on the amount of overlap follows from
    $\clspsz{\clsglb{\subap{\sstpsrc}{\astp}
          }{\clsglb{\subap{\sstpsrc}{\bstp}}}} > 
     \clspsz{\clsbot}   =   
     \clspsz{\clsglb{\subap{\sstpsrc}{\astp}
                   }{\clsbot
                   }}   = 
    \clspsz{\clsglb{\subap{\sstpsrc}{\astp}
           }{\clsglb{\subap{\sstpsrc}{\bstp}
                   }{\bictx{-}
                   }}
           }$
    with $\clseq{\clsglb{\subap{\sstpsrc}{\bstp}}{\bictx{-}}}{\clsbot}$ 
    by their complementary definition via $\actx$.
    Similarly, one may proceed from the right.
    
    A difference in the reasoning shows up `in the middle' of the line
    of reasoning~(\ref{eqn:decrease}):
    since let-expression can only be compared with respect to the
    refinement order $\sclsle$ if they denote the \emph{same} term,
    peaks for different terms, as is the case here, can a priori not be compared.
    The way around this is to use Proposition~\ref{prop:multipattern:size} 
    to split-off any differing (non-patterns) parts first.
    For instance, the amounts of overlap
    $\pkovl{(\clslet{\amtrmvar}{f(f(x))}{\mtrmvara{f(a)}})
          }{(\clslet{\bmtrmvar}{f(f(x))}{f(\mtrmvarb{a})})
          }$ and 
     $\pkovl{(\clslet{\amtrmvar}{f(f(x))}{\mtrmvara{f(b)}})
           }{(\clslet{\bmtrmvar}{f(f(x))}{f(\mtrmvarb{b})})
           }$ are clearly the  same; the subterms $a$ and $b$ are innocuous here.
     That can be implemented for let-expressions 
     by vertically decomposing the let-expression involved. For instance,
     decomposing the first as
     $\substra{\atrmvar
             }{(\clslet{}{}{a})
             }{(\clslet{\amtrmvar}{f(f(x))}{\mtrmvara{f(\atrmvar)})}
             }$. Since the total amount of overlap 
     is the sum of that of the components, 
     the substitutes have no overlap, and the
     prefixes now have the same denotations, we may proceed,
     and have recovered the possibilities of the set-theoretic 
     representation on let-expressions. \qed
   \end{proof}     
 
\fi

\end{document}